\documentclass[a4paper,USenglish,cleveref, autoref, thm-restate ]{lipics-v2021}
\nolinenumbers
 \usepackage[utf8]{inputenc}
 \usepackage{amsmath}     \usepackage{amssymb}     \usepackage{mathtools}   \usepackage{microtype}   \usepackage{multirow}    \usepackage{booktabs}    \usepackage{subcaption}  \usepackage[ruled,linesnumbered]{algorithm2e}  \usepackage[usenames,dvipsnames,table]{xcolor}  \usepackage{tikz}
\usetikzlibrary{shapes.geometric,arrows.meta,arrows}
\usetikzlibrary{calc}
\usepackage{nag}   \usepackage{tcolorbox}  \usepackage{floatrow}
\usepackage{hyperref}

\newcommand{\td}[0]{\mathsf{td}}
\newcommand{\vi}{\mathsf{vi}}
\newcommand{\con}{\textsf{\textup{Comp}}}

\newcommand{\cc}[1]{{\mbox{\textnormal{\textsf{#1}}}}\xspace} 
\newcommand{\NP}{\cc{NP}}
\newcommand{\FPT}{\cc{FPT}}
\newcommand{\XP}{\cc{XP}}

\newcommand{\sep}{\;|\;}
\newcommand{\seq}{\subseteq}

\newcommand{\bigoh}{\mathcal{O}}
\newcommand{\tww}{\textnormal{tww}}
\newcommand{\otww}{\textnormal{otww}}

\newcommand{\ext}{\textnormal{Ext}}

\newcommand{\seg}{\ensuremath{Y}}
 \newcommand{\C}{\mathcal{C}}
\newcommand{\blob}{large group\xspace}
\newcommand{\blobs}{large groups\xspace}

\usepackage{pgfplots}
\pgfplotsset{compat=1.15}
\usepackage{mathrsfs}
\usepackage{tikz}
\usetikzlibrary{arrows}
\usetikzlibrary{backgrounds}
\usetikzlibrary{hobby}
\usetikzlibrary{decorations.markings}
\usetikzlibrary{math}

\usepackage{nameref,cleveref}
\Crefname{splemma}{Lemma}{Lemmas}
\Crefname{sptheorem}{Theorem}{Theorems}
\Crefname{spdefinition}{Definition}{Definitions}
\Crefname{spproperty}{Property}{Properties}
\Crefname{spcorollary}{Corollary}{Corollaries}

\newtheorem{result}{Main Result}
\newtheorem{fact}[theorem]{Fact}

\newtheorem{redrule}{Reduction Rule}
\newcommand{\ie}{\emph{i.e.}\xspace}

\title{Computing Twin-Width \\ 
via Treedepth 
and Vertex Integrity}

\titlerunning{Approximating Twin-Width is FPT Parameterized by Treedepth}
 \Copyright{Robert Ganian and Mathis Rocton}
 
\author{Robert Ganian}{Algorithms and Complexity Group, TU Wien, Vienna, Austria}{rganian@gmail.com}{0000-0002-7762-8045}{Robert Ganian acknowledges support by the FWF and WWTF Science Funds (FWF projects 10.55776/Y1329 and 10.55776/COE12, WWTF project ICT22-029).}

\author{Mathis Rocton}{Algorithms and Complexity Group, TU Wien, Vienna, Austria}{mrocton@ac.tuwien.ac.at}{0000-0002-7158-9022}{Mathis Rocton acknowledges support by the \includegraphics[width=0.5cm]{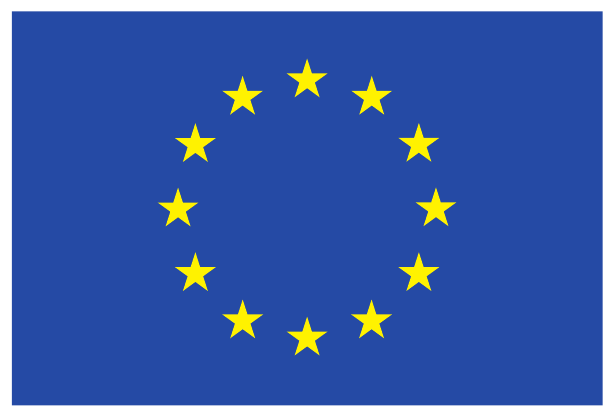} European Union's Horizon 2020 research and innovation COFUND programme (LogiCS@TUWien, grant agreement No 101034440), and the FWF Science Fund (FWF project Y1329).}
 \ccsdesc[500]{Theory of computation~Parameterized complexity and exact algorithms}

\authorrunning{R. Ganian and M. Rocton}

\keywords{twin-width, fixed-parameter algorithms, treedepth, vertex integrity}

\begin{document}
\hideLIPIcs
\maketitle

\begin{abstract}
 Twin-width is a graph parameter that has become central to explaining the fixed-parameter tractability of first-order model checking across many graph classes. Despite its algorithmic importance, computing twin-width remains poorly understood: even recognizing graphs of twin-width at most four is NP-hard, and no fixed-parameter approximations parameterized by twin-width itself are known. A recent approach towards breaking this barrier focuses on first developing fixed-parameter algorithms for computing or approximating twin-width under parameterizations distinct from twin-width.

Our first result establishes that approximating twin-width is fixed-parameter tractable when parameterized by treedepth, thereby breaking the long-standing barrier that all previous tractable parameterizations were based on deletion distance. The proof proceeds via oriented twin-width, yielding the first constructive evidence that this variant may be easier to handle algorithmically. As our second main result, we show that computing twin-width exactly is fixed-parameter tractable with respect to vertex integrity. This constitutes the first non-trivial parameterized algorithm for computing optimal contraction sequences. 
\end{abstract}

\newpage

\section{Introduction}\label{sec:intro}
\emph{Twin-width} is a relatively recent graph-theoretic parameter that has both emerged from and stimulated a series of breakthroughs in algorithmic model theory~\cite{DBLP:conf/soda/BonnetGKTW21,DBLP:conf/icalp/BonnetG0TW21,DBLP:conf/stoc/BonnetGMSTT22,DBLP:conf/stacs/BonnetGMT23,DBLP:conf/soda/BonnetKRT22}. 
It holds the promise of offering a unifying explanation for why first-order logic model-checking is fixed-parameter tractable on a wide range of graph classes which, until recently, were viewed as isolated ``islands of tractability.'' 
Examples include graphs of bounded rank-width, proper minor-closed graphs, bounded-width posets~\cite{DBLP:conf/iwpec/BalabanH21}, map graphs~\cite{BonnetKTW22}  as well as a number of other specialized graph classes~\cite{DBLP:conf/wg/BalabanHJ22,EppsteinConfluent}.

Although twin-width is related to other parameters such as path-width, clique-width and rank-width~\cite{DBLP:conf/soda/BonnetKRT22}, it stands apart in one crucial respect: no efficient algorithms are known for computing it. In fact, already deciding whether a graph has twin-width $4$ is \NP-hard~\cite{BergeBD22}. This poses a serious challenge, as essentially all known algorithms that exploit twin-width require access to a \emph{contraction sequence}, which plays a role analogous to that of decompositions for classical parameters such as treewidth~\cite{RobertsonS83} and rank-width~\cite{Oum05}. 
Intuitively, a contraction sequence of width $t$---which exists whenever $G$ has twin-width at most $t$---is a sequence $\C$ of contractions of (not necessarily adjacent) vertex pairs such that, at each step of $\C$, every vertex $v$ has at most $t$ neighbors with an ancestor not adjacent to some ancestor of $v$; see Section~\ref{sec:prelims} for formal details. 

The \NP-hardness of recognizing graphs of twin-width at most $4$~\cite{BergeBD22} rules out both fixed-parameter and \XP\ algorithms for computing optimal contraction sequences when parameterized by twin-width itself. 
A natural workaround would be to design a fixed-parameter algorithm (parameterized by the twin-width $t$) that computes an \emph{approximate} contraction sequence, i.e., one of width $f(t)$ for a computable function $f$. From a complexity-theoretic standpoint, such a result would be nearly as powerful as computing the exact twin-width as it would still enable fixed-parameter tractability of first-order model checking. In fact, two decades ago it was the discovery of an analogous approximate decomposition-computing result~\cite{OumS06} which opened the door to the algorithmic applications of clique-width.

Unfortunately, finding such an algorithm for twin-width has proven to be highly challenging, and it remains unclear whether it is even possible. Indeed, determining whether twin-width admits any fixed-parameter approximation (for arbitrary computable $f$) is arguably the central open problem in current research on the parameter. 
In their recent work, Balab\' an, Ganian and Rocton~\cite{BalabanGR24} approached this question by relaxing the runtime requirements and asked whether one could obtain an $f(t)$-approximation for twin-width via a fixed-parameter algorithm parameterized by a graph measure other than twin-width itself. As a first step, they developed a fixed-parameter algorithm that computes a contraction sequence of width at most $t+1$, parameterized by the \emph{feedback edge number} of the graph~\cite{BalabanGR24}, i.e., by the edge deletion distance to a forest. In a follow-up work, the same authors obtained a fixed-parameter approximation algorithm which computes a contraction sequence of width at most $2t$ and is parameterized by the \emph{vertex integrity} of the graph~\cite{BalabanGR24IPEC}. Vertex integrity can be roughly understood as the deletion distance to a collection of small components, and the aforementioned $2$-approximation algorithm can be seen as a way of lifting the trivial fixed-parameter algorithm for computing optimal contraction sequences parameterized by the vertex cover number of the graph~\cite{BalabanGR24,BalabanGR24IPEC}.

\smallskip
\noindent \textbf{Results.}\quad
In this article, we push beyond the state of the art for computing twin-width in two directions. As our first result, we lift the fixed-parameter approximability of twin-width from the vertex integrity parameterization to treedepth ---a well-studied parameter that has fundamental ties to the theory of graph sparsity~\cite{sparsity}. In particular, we prove:

\begin{result}
\label{result1}
Approximating twin-width is \FPT\ w.r.t.\ the treedepth of the input graph.
\end{result}

Even though the algorithm underlying Main Result~\ref{result1} (formalized in Theorem~\ref{thm:mainone}) provides approximation guarantees only in terms of a superexponential function of the twin-width, we believe it to be a major step forward in our conceptual understanding. Indeed, while all previous parameterizations which supported efficient approximation algorithms for twin-width were based on deletion distance, twin-width itself---as well as major structural parameters such as treewidth and clique-width---are \emph{decompositional} in nature. Our result thus marks the first time we are able to break the barrier towards decompositional parameters: having bounded treedepth is typically witnessed via a certain type of decomposition (which resembles a bounded-depth tree), but cannot be expressed via a constant number of deletion operations\footnote{We remark that both treedepth and treewidth admit characterizations that rely on vertex deletion operations---however, in both cases the number of such operations is not bounded by the parameter.}.

A further noteworthy property of Theorem~\ref{thm:mainone} is that the proof does not target the computation of twin-width directly. Instead, it relies on reduction rules which maintain a $2$-approximate bound on the so-called \emph{oriented twin-width}---a variant of twin-width where contraction sequences only capture information about edges of the original graph in a one-sided way. Oriented twin-width has been proposed and studied already in the pioneering series of works that introduced twin-width~\cite{DBLP:conf/soda/BonnetKRT22}; it was known to be functionally equivalent to twin-width~\cite[Theorem 4.1]{DBLP:conf/soda/BonnetKRT22}, and in Section~\ref{sec:translation} we make this relationship both explicit and constructive. It was suspected that oriented twin-width could be easier to compute, and Theorem~\ref{thm:mainone} yields concrete substance to that claim: the algorithm achieves a $2$-approximation on the oriented twin-width of the input graph, but it is entirely unclear how to obtain any such bound for ``vanilla'' twin-width.

While our first result relies on allowing for a potentially large approximative error in the twin-width, the second one makes do without this entirely: 

\begin{result}
\label{result2}
Computing twin-width is \FPT\ w.r.t.\ the vertex integrity of the input graph.
\end{result}

Main Result~\ref{result2} (formalized in Theorem~\ref{thm:maintwo}) is the first non-trivial parameterized algorithm for computing optimal contraction sequences, and shows that the \NP-hardness of the problem can be overcome at least under restrictive parameterizations. Moreover, while its running time guarantees are weak in practice, the underlying algorithm relies on the performance of reduction rules which are easy to implement and guaranteed to preserve the twin-width of the input graph. We remark that both algorithms obtained in this work are deterministic, constructive and rely exclusively on computable functions.

\smallskip
\noindent \textbf{Technical Contributions.}\quad
On a high level, the proof of Theorem~\ref{thm:mainone} relies on establishing that every sufficiently large graph of bounded treedepth must have a ``redundant'' subgraph $H$---that is, a subgraph which we can find and safely delete from the instance. Recursively applying such a rule will eventually reduce the graph to a problem kernel~\cite{CyganFKLMPPS15}, 
 at which point one may compute a contraction sequence via brute force or any other known algorithm. 
It is well-known that deleting vertices cannot increase the twin-width of a graph, but for correctness it is also necessary to prove the converse: that by deleting $H$, we have not accidentally reduced the twin-width of $G$. In other words, a redundant subgraph must have the property that it can be reinserted into the graph without increasing the twin-width. 

The main obstacle towards approximating twin-width when parameterized by treedepth is that---unlike many problems where the above recursive deletion technique has been successfully applied---reinserting $H$ may require the contraction sequence to have a short subsequence where the twin-width is elevated. In particular, the \emph{red degree} (see Section~\ref{sec:prelims}) can exceed the twin-width bound by a factor of at most $2$, and this can happen both for vertices in $H$ and vertices outside of $H$. For a single reinsertion step this would not be an issue, but the algorithm must be able to perform deletion multiple times, which would then lead to the error gradually snowballing to a factor of $4$, $8$, and so forth. The crucial insight here is that if we instead consider contraction sequences for oriented twin-width, the red degree exceeds the twin-width bound only for vertices inside $H$; in turn, this allows us to isolate the approximation error in clearly delimited and pairwise disjoint parts of the contraction sequence.

For Theorem~\ref{thm:maintwo}, we build on the very recently introduced Ramsey Pruning technique~\cite{DFGS25}. The core idea behind that technique is that instead of repeatedly arguing that  a single redundant subgraph can be safely reinserted if it is deleted, we use Ramsey-type arguments to argue that every contraction sequence $S$ for $G$ must contain a carefully selected subgraph $Q$ which induces ``guiding subsequence'' $S'$---a contraction sequence for $Q$ that is highly structured in a precisely defined way. Once defined, the properties of $S'$ will allow us to argue that every solution on $Q$ can be lifted to a contraction sequence of width $t$ for an infinite set of supergraphs of $Q$ which also includes $G$. Thus, while we do not directly prove the safeness of performing any single deletion operation, the proof technique guarantees that the existence of a solution (in this case, a contraction sequence of width $t$) is preserved between the first and final graph in the sequence of reduction rules.

The main difficulty that arises when trying to implement the above strategy in the setting of twin-width is that, essentially, the properties of $S'$ one can guarantee by invoking the analogous Ramsey-type arguments as in the preceding work~\cite{DFGS25} are too weak. There, the core idea was to capture pairwise relationships between certain subgraphs of $G$ via color-coded edges in an auxiliary graph, and then employ Ramsey theory to guarantee the existence of a monochromatic subclique. Unfortunately, pairwise relationships do not capture sufficient information to guarantee the ability to expand $S'$ into a solution for $G$ (or a supergraph thereof). We solve this by showing that maintaining information about interactions between triplets of certain subgraphs in $G$ does suffice---a complication which necessitates the use of Ramsey hypergraph theory instead of the classical bounds used in~\cite{DFGS25}.

\subparagraph*{Related Work.}
Beyond the computation of twin-width and its associated contraction sequences, a substantial body of work has focused on fixed-parameter algorithms for computing a structural graph parameter $A$ when parameterized by graph parameters different from $A$. The overarching goal of this research direction is to deepen our understanding of the fundamental task of computing the target parameter $A$. Prominent examples include fixed-parameter algorithms for treedepth parameterized by the vertex cover number~\cite{DBLP:conf/iwpec/KobayashiT16}, treewidth parameterized by the feedback vertex number~\cite{DBLP:journals/siamdm/BodlaenderJK13}, MIM-width parameterized by the feedback edge number and other parameters~\cite{DBLP:conf/innovations/EibenGHJK22}, as well as directed feedback vertex number parameterized by the undirected feedback vertex number~\cite{DBLP:journals/algorithmica/BergougnouxEGOR21}.

Treedepth and vertex integrity have also served as effective parameterizations for developing algorithms for a variety of hard problems~\cite{BannisterCE18,GanianO18,LampisM21,BhoreGMN22,GimaHKKO22,FichteGHSO23,GimaO24}. Moreover, vertex integrity has been studied under several asymptotically equivalent notions in the literature, including the \emph{fracture number}~\cite{DvorakEGKO21,GanianKO21} and \emph{starwidth}~\cite{Ee17}.

\section{Preliminaries}
\label{sec:prelims}

For integers $i$ and $j$, we let $[i,j] := \{n \in \mathbb N \sep i \le n \le j\}$ and $[i] := [1, i]$. 
We assume familiarity with basic concepts in graph theory~\cite{Diestel} and parameterized algorithmics~\cite{DowneyF13,CyganFKLMPPS15}.
When $H$ is an induced subgraph of $G$, we denote it by $H \seq G$.
Given vertex sets $X$ and $U$, we will use $G[X]$ to denote the graph induced on $X$ and $G - U$ to denote the graph $G[V(G)\setminus U]$. 
A vertex $v$ of a rooted tree $T$ is a \emph{descendant} of a vertex $w$ if $w$ occurs on the root-to-$v$ path in $T$; we then call $w$ an \emph{ancestor} of $v$.
We recall that the number of non-isomorphic graphs of size up to $n$ can be upper-bounded by $n\cdot 2^{n^2}$.
To express some of our bounds, we will occasionally use the Knuth notation $\uparrow\uparrow$ where for an integer $z$, $2\uparrow\uparrow z$ represents an exponential tower of $2$'s of height $z$.

\smallskip
\noindent \textbf{Treedepth.} 
A \emph{treedepth decomposition} of a graph $G$ is a pair $(T,f)$, where $T$ is a rooted forest and $f : V(G) \rightarrow V(T)$ is a bijective function such that for each $\{u,v\} \in E(G)$, either $f(u)$ is a descendant of $f(v)$ or $f(v)$ is a descendant of $f(u)$. The \emph{depth} of the treedepth decomposition is the number of vertices in the longest root-to-leaf path in $T$; for ease of presentation, w.l.o.g.\ we will assume that $f$ is just an identity and simply write $T$ for $(T, f)$. The \emph{treedepth} of a graph $G$, denoted by $\td(G)$, is the minimum over the depths of all possible treedepth decompositions of $G$. When $G$ is connected, $T$ is a tree. For any vertex $u$ of $T$, we denote by $T_u$ the subtree of $T$ rooted at $u$, and $X_u$ the subgraph $G[V(T_u)]$.
 
A treedepth decomposition $T$ is \emph{nice} if, for each node $u\in V(T)$ and each child $w$ of $u$ in $T$,  $X_w$ forms a connected component of $X_u \setminus\{u\}$---in particular, there is a one-to-one correspondence between the subtrees rooted at the children of $u$ in  $T$ and the set $\con_u$ of connected components in $X_u \setminus \{u\}$.
 It is known that every treedepth decomposition can be transformed into a nice one with a smaller or equal depth via a simple rearrangement argument~\cite{ReidlRVS14}, and that a nice treedepth decomposition of depth $\td(G)$ can be computed in time $2^{\bigoh(\td(G)^2)}\cdot |V(G)|$~\cite{ReidlRVS14}, see also the recent work~\cite{NadaraPS22} on the topic.

For ease of presentation, we say that a subgraph $H$ of $G$ \emph{appears in $T$} if there exists a \emph{seed} vertex $u$ in $T$ such that $H=X_u$.
  Moreover, we call two subgraphs $X_a$, $X_b$ appearing in $T$ \emph{siblings} if $a$ and $b$ are siblings in $T$, i.e., have the same parent.

\smallskip
\noindent \textbf{Vertex Integrity.} A graph $G$ has \emph{vertex integrity} $\vi(G) = p$ if $p$ is the smallest integer with the following property: $G$ contains a vertex set $S$ such that for each connected component $H$ of $G - S$, $|V(H) \cup S| \leq p$. One may observe that the vertex integrity is upper-bounded by the size of a minimum vertex cover in the graph (i.e., the vertex cover number) plus one. The vertex integrity of an $n$-vertex graph along with a corresponding partition into $S$ and $\mathcal{C}=G-S$ can be computed in time $\mathcal{O}(p^{p+1} \cdot n)$~\cite{DrangeDH16}.

\smallskip
\noindent \textbf{Hypergraph Ramsey Bounds.}\quad
We will employ a known generalization of Ramsey's classical theorem to edge-colored hypergraphs in order to argue the existence of certain structures in our correctness proof. 
In particular, the following fact follows directly from the application of Erdős' and Rado's improved bound on the Ramsey numbers for multicolored hypergraphs~\cite[Theorem 1]{erdos1952combinatorial}; 
 see also the recent work on the topic~\cite{bradavc2024growth} for an overview.
 
\begin{fact}
\label{fact:Ramsey}
Let $H$ be a complete hypergraph whose hyperedges all have size $3$ and are mapped to single colors from $[L]$, and let $t$ be a positive integer. If $|V(H)|\geq L^{L^{2L\cdot t}}$, then $H$ contains an induced $t$-vertex subhypergraph $H'$ such that all hyperedges fully contained in $H'$ have the same color.
 \end{fact}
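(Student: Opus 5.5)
The plan is to derive the Fact directly from the Erdős–Rado stepping-up argument: first reduce the $3$-uniform colouring to an ordinary $L$-edge-coloured complete graph, then apply the multicolour graph Ramsey bound. We only have to check that the numbers stay below $N:=L^{L^{2L\cdot t}}$. We may assume $L\ge 2$, since for $L=1$ the statement is trivial.

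\emph{Step 1 (graph Ramsey bound).} We show that every complete graph on at least $L^{Lt}$ vertices whose edges are coloured with $[L]$ contains a monochromatic $K_t$.
\begin{itemize}
\item Maintain a candidate set $C$, initially all vertices.
\item Repeatedly remove some vertex $w_s$ from $C$, split the rest of $C$ by the colour of the edge to $w_s$, and keep the largest class. Each step shrinks $|C|$ by at most a factor $L$ and one vertex.
\item Hence $Lt$ steps can be performed, producing $w_1,\dots,w_{Lt}$ such that the colour of $\{w_i,w_j\}$ with $i<j$ depends only on $i$.
\item By pigeonhole, some colour is attached to at least $t$ of the indices, and those vertices form the monochromatic $K_t$.
\end{itemize}

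\emph{Step 2 (Erdős–Rado reduction).} Set $m:=L^{Lt}$. We greedily build vertices $v_1,\dots,v_m$ of $H$ such that the colour of a triple $\{v_i,v_j,v_k\}$ with $i<j<k$ depends only on the pair $(i,j)$.
\begin{itemize}
\item Keep a candidate set $C_s$ with the invariant that for all $i<j\le s$ and all $x\in C_s$, the colour of $\{v_i,v_j,x\}$ depends only on $(i,j)$.
\item At step $s$, take any $v_s\in C_s$ as the new vertex.
\item Partition $C_s\setminus\{v_s\}$ according to the vector of colours of $\{v_i,v_s,x\}$ over all $i<s$. This gives at most $L^{s-1}$ classes, and we keep the largest as $C_{s+1}$.
\item For $s=1$ the vector is empty, so the step just removes $v_1$.
\end{itemize}
After the $j$-th step the candidate set keeps at least a $L^{-(j-1)}$ fraction of what remains after removing $v_j$.

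\emph{Step 3 (size check).} Bounding each step's $-1$ by doubling the factor, it suffices that
\[
  N \;\ge\; \prod_{s\le m} 2L^{s-1} \;=\; 2^{m}L^{m(m-1)/2} \;\le\; L^{m^2/2+m} \;\le\; L^{m^2} \;=\; L^{L^{2Lt}}.
\]
Under this condition we can complete the sequence $v_1,\dots,v_m$.

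\emph{Step 4 (conclusion).} Colour the edge $\{v_i,v_j\}$ with $i<j$ by the common colour of the triples $\{v_i,v_j,v_k\}$, $k>j$; for $j=m$ the colour is arbitrary. This is an $L$-colouring of $K_m$. Step 1 yields a monochromatic $K_t$ on some of the $v$'s, but its top vertex $v_j$ may have no index above it.

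To handle this, run Step 1 on $\{v_1,\dots,v_{m-1}\}$ (or spend one extra step in Step 2). In the resulting clique, every triple $i<j<k$ has the colour of the pair $(i,j)$, which is the common colour. So the vertices of this $K_t$ span a monochromatic induced $3$-uniform subhypergraph $H'$.

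The only delicate step is the arithmetic in Step 3, namely absorbing the $-1$ losses and the off-by-one in $m$ into the slack between $L^{m^2/2+m}$ and $L^{m^2}$. Everything else is the standard argument.
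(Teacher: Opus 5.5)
Your proof is correct and follows the paper's route. The paper gives no argument beyond citing Erd\H{o}s and Rado's bound, and what you write is exactly their argument: greedily build a sequence in which a triple's colour depends only on its two lowest elements, then apply the multicolour graph bound to the induced pair colouring. Your arithmetic $2^mL^{m(m-1)/2}\le L^{m^2}=L^{L^{2Lt}}$ for $m=L^{Lt}$ checks out.

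The worry in Step 4 is unnecessary. The colour of a triple $\{v_i,v_j,v_k\}$ with $i<j<k$ is read off the pair $(i,j)$, where $j<m$, so the arbitrary colours on pairs ending at $v_m$ never matter.

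One correction: the case $L=1$ is not trivial but false as stated, since $1^{1^{2t}}=1$ does not guarantee $t$ vertices. The Fact should therefore be read with $L\ge 2$. This is harmless for the paper, where $L=\texttt{col}(p,x)$ is huge.
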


\smallskip
\noindent \textbf{Twin-Width.}\quad
            Our notation for twin-width follows the conventions introduced in the recent related works aimed at computing twin-width under more restrictive parameterizations~\cite{BalabanGR24,BalabanGR24IPEC}.
A \emph{trigraph} $G$ is a graph whose edge set is partitioned into a set of \emph{black} and \emph{red} edges. The set of red edges is denoted $R(G)$, and the set of black edges $E(G)$.
The \emph{black (\emph{resp}.\ red) degree} of $u\in V(G)$ is the number of black (resp.\ red) edges incident to $u$ in $G$.
We extend graph-theoretic terminology to trigraphs by ignoring the colors of edges; for example, the degree of $u$ in $G$ is the sum of its black and red degrees.

Given a trigraph $G$, a \emph{contraction} of two distinct vertices $u,v\in V(G)$ is the operation which produces a new trigraph by (1) removing $u, v$ and adding a new vertex $w$, (2) adding a black edge $wx$ for each $x\in V(G)$ such that $xu$, $xv\in E(G)$, and (3) adding a red edge $wy$ for each $y\in V(G)$ such that $yu\in R(G)$, or $yv\in R(G)$, or $y$ has a black edge to either $v$ or $u$ (but not both).
For an integer $p$, a sequence $\C = (G = G_1,\ldots,G_p)$ is a \emph{partial contraction sequence of $G$} if it is a sequence of trigraphs such that for all $i\in [p-1]$, $G_{i+1}$ is obtained from $G_i$ by contracting two vertices; if $p=|V(G)|$ then $G_p$ is the single-vertex graph and we call $\C$ a \emph{contraction sequence}.
We call a contiguous subsequence of $\C$ a \emph{segment}.
The \emph{width} of a (partial) contraction sequence $\C$ is the maximum red degree over all vertices in all trigraphs in $\C$.  The \emph{twin-width} of $G$, denoted $\tww(G)$, is the minimum width of any contraction sequence of $G$, and a contraction sequence of width $\tww(G)$ is called \emph{optimal}. An example of a contraction sequence is provided in Figure~\ref{fig:seq}.

\begin{fact}[e.g., Observation 6 in~\cite{BalabanGR24}]
\label{fact:computecontract}
An optimal contraction sequence of an $n$-vertex graph can be computed in time $2^{\bigoh(n\cdot \log n)}$.
\end{fact}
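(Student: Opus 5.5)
The plan is to enumerate contraction sequences by dynamic programming over subsets of vertices, rather than by trying all $n!$-ish sequences naively. First I would note how many sequences there are in total. A contraction sequence of an $n$-vertex graph $G$ is determined by its $n-1$ contractions. In each trigraph $G_i$ every vertex corresponds to a part of a partition $\mathcal{P}_i$ of $V(G)$: each vertex of $G_i$ is the set of original vertices contracted into it. So a contraction is just the choice of one of at most $\binom{n}{2}$ pairs of parts to merge, and the number of contraction sequences is at most $\prod_{k=2}^{n}\binom{k}{2}\le n^{2n}=2^{\bigoh(n\log n)}$.

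The key step is that each trigraph $G_i$ is a function of the partition $\mathcal{P}_i$ alone, independent of the order of earlier merges. I would prove by induction on $i$ that, for parts $P\neq Q$ of $\mathcal{P}_i$:
\begin{itemize}
\item there is a black edge $PQ$ iff every pair in $P\times Q$ is an edge of $G$;
\item there is a red edge $PQ$ iff some but not all pairs in $P\times Q$ are edges;
\item otherwise there is no edge.
\end{itemize}
For the inductive step I would check this against rules (2) and (3) of the contraction definition. When $P=u\cup v$ is newly formed, the new edge to $y$ is black exactly when both $uy$ and $vy$ are black, which means all pairs are adjacent. It is red exactly when some pair is adjacent but not all: either one of $uy,vy$ is already red (so mixed adjacency already holds), or exactly one of them is black. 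This invariant is the only real content of the proof, and I expect it to be the main (though mild) obstacle.

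I would then run the algorithm as follows. For each candidate sequence, or more efficiently by search over partitions, compute each $G_i$ in polynomial time using the invariant. Record the maximum red degree along the sequence and return a sequence minimizing it. The total running time is $2^{\bigoh(n\log n)}\cdot n^{\bigoh(1)}=2^{\bigoh(n\log n)}$. Since the minimum is taken over all contraction sequences, the returned sequence has width exactly $\tww(G)$ and is therefore optimal.
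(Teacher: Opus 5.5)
Your proof is correct and is essentially the standard brute-force argument behind the cited Observation 6 (the paper itself gives no proof). There are at most $n^{2n}=2^{\bigoh(n\log n)}$ contraction sequences, and each is simulated and its width checked in polynomial time; your partition invariant is also correct, though plain enumeration does not need it (one can just apply the contraction rules directly), and the opening ``dynamic programming over subsets'' should read ``partitions''.
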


\begin{figure}[ht]
\begin{tikzpicture}[line cap=round,line join=round,>=triangle 45,x=1.0cm,y=1.0cm]

\tikzset{
    vertex/.style = {draw, circle, fill=gray, minimum width=4pt, inner sep=0pt}}
    
\begin{scriptsize}
\node[vertex] (a) {};
\node[vertex] (b) [right of = a] {};
\node[vertex] (c) [above of = a] {};
\node[vertex] (d) [right of = c] {};
\node[vertex] (e) [above of = c] {};
\node[vertex] (f) [right of = e] {};
\draw (a)--(b)--(c)--(f)--(e)--(d)--(b);
\draw(a)--(c)--(e);

\node () at (a) [left=2pt] {$A$};
\node () at (c) [left=2pt] {$C$};
\node () at (e) [left=2pt] {$E$};
\node () at (b) [right=2pt] {$B$};
\node () at (d) [right=2pt] {$D$};
\node () at (f) [right=2pt] {$F$};

\node[vertex] (a2) [right = 70pt]{};
\node[vertex] (b2) [right of = a2] {};
\node[vertex] (c2) [above of = a2] {};
\node[vertex] (d2) [right of = c2] {};
\node[vertex] (ef) [above of = c2] {};
\draw (a2)--(b2)--(c2)--(ef)--(d2)--(b2);
\draw(a2)--(c2);
\draw[color=red, thick] (d2)--(ef);

\node () at (a2) [left=2pt] {$A$};
\node () at (c2) [left=2pt] {$C$};
\node () at (b2) [right=2pt] {$B$};
\node () at (d2) [right=2pt] {$D$};
\node () at (ef) [right=2pt] {$EF$};

\node[vertex] (ab) [right = 140pt]{};
\node[vertex] (c3) [above of = ab] {};
\node[vertex] (d3) [right of = c3] {};
\node[vertex] (ef3) [above of = c3] {};
\draw (ab)--(c3)--(ef3);
\draw[color=red, thick] (ab)--(d3)--(ef3);

\node () at (ab) [right=2pt] {$AB$};
\node () at (c3) [left=2pt] {$C$};
\node () at (d3) [left=3pt] {$D$};
\node () at (ef3) [right=2pt] {$EF$};

\node[vertex] (ab4) [right = 210pt] {};
\node[vertex] (cd) [above of = ab4]{};
\node[vertex] (ef4) [above of = cd] {};
\draw[color=red, thick] (ab4)--(cd)--(ef4);

\node () at (ab4) [left=2pt] {$AB$};
\node () at (cd) [left=2pt] {$CD$};
\node () at (ef4) [left=2pt] {$EF$};

\node[vertex] (ab5) [right = 250pt] {};
\node[vertex] (cdef) [above of= ab5]{};
\draw[color=red, thick] (ab5)--(cdef);
\node () at (ab5) [left=2pt] {$AB$};
\node () at (cdef) [above = 2pt] {$CDEF$};

\node[vertex] (fin) [right = 290pt] {};
\node () at (fin) [above = 2pt] {$ABCDEF$};

\end{scriptsize}
\end{tikzpicture}
\caption{A contraction sequence of width 2 for the leftmost graph, consisting of $6$ trigraphs.
\label{fig:seq}}
\end{figure}

Let us now fix a contraction sequence $\C = (G = G_1,\ldots,G_n)$.
For each $i \in [n]$, we associate each vertex $u \in V(G_i)$ with a set $\beta(u, i) \seq V(G)$, called the \emph{bag} of $u$, which contains all vertices contracted into $u$.
Formally, we define the bags as follows:

\begin{itemize}
\item for each $u \in V(G)$, $\beta(u, 1) := \{u\}$;
\item for $i \in [n-1]$, if $w$ is the new vertex in $G_{i+1}$ obtained by contracting $u$ and $v$, then $\beta(w, i+1) := \beta(u, i) \cup \beta(v,i)$; otherwise, $\beta(w, i+1) := \beta(w, i)$. 
\end{itemize}

Note that if a vertex $u$ appears in multiple trigraphs in $\C$, then its bag is the same in all of them, and so we may denote the bag of $u$ simply by $\beta(u)$.
Let us fix $i, j \in [n]$, $i \le j$.
If $u \in V(G_i)$, $v \in V(G_j)$, and $\beta(u) \seq \beta(v)$, then we say that $u$ is \emph{a $\C$-ancestor} of $v$ in $G_i$ and $v$ is \emph{the $\C$-descendant} of $u$ in $G_j$ (clearly, the $\C$-descendant is unique).
If $H$ is an induced subtrigraph of $G_i$, then $u \in V(G_j)$ is a $\C$-\emph{descendant} of $H$ if it is a $\C$-descendant of at least one vertex of $H$.
  A contraction of $u, v \in V(G_j)$ into $uv \in V(G_{j+1})$ \emph{involves} $w \in V(G_i)$ if $w$ is a $\C$-ancestor of $uv$.
Given a contraction sequence $\C$ of $G$ and vertex subset $Z\subseteq V(G)$, we let $\C[Z]$ denote the \emph{restriction} of $\C$ to $Z$, i.e., the contraction sequence obtained from $\C$ by omitting each step that involves vertices outside of $Z$; note that $\C[Z]$ is a contraction sequence of $G[Z]$ whose width is upper-bounded by the width of $\C$.

The twin-width of a graph with vertex integrity $p$ is upper-bounded by $p$. Indeed, given a vertex set $S$ such that for each connected component $H$ of $G - S$, $|V(H) \cup S| \leq p$, we can construct a contraction sequence of width $p$ by processing components in an arbitrary order as follows. We contract the first component $H_1$ into a single vertex  $v_1$ (during which the red degree never exceeds $p$). Afterwards, for each unprocessed component $i$, $i\geq 2$, we contract $H_i$ into a single vertex (during which the red degree never exceeds $p$ either) and contract this vertex with $v_{i-1}$ to produce $v_i$.

\smallskip
\noindent \textbf{Oriented Twin-Width.}\quad
The oriented variant of twin-width is defined using the same concept of contraction sequences as twin-width, and hence we opt to use primarily shared terminology (as opposed to the homogeneity-based notation of~\cite{DBLP:conf/soda/BonnetKRT22}). Intuitively, in oriented twin-width red edges only represent discrepancies between the corresponding bags in a one-directional manner; this will require the individual trigraphs in a contraction sequence to depend not only on the previous trigraph in the sequence but also on the original graph. We formalize below.

An \emph{oriented trigraph} is a graph whose edge set is partitioned into a set of black undirected edges and red directed edges (arcs). Given an initial graph $G$ and an oriented trigraph $G_{i-1}$ each of whose vertices $z\in V(G_{i-1})$ is associated with a bag $\beta(z, i-1)\subseteq V(G)$, an \emph{oriented contraction} of two distinct vertices $u,v\in V(G_{i-1})$ 
is the operation which produces a new oriented trigraph $G_{i}$ as follows. First, we remove $u, v$ and add a new vertex $w$ where $\beta(w,i)=\beta(u,i-1)\cup \beta(v,i-1)$; for all other vertices, the bags remain the same and so do the edges and arcs not incident to $w$. For each $a\in V(G_i)$ we add a black edge $aw$ if every vertex of $\beta(a,i)$ is adjacent to every vertex of $\beta(w,i)$ in $G$. For each $b\in V(G_i)$ we add a red arc $(w,b)$ if there exist $w_1,w_2\in \beta(w,i)$ and $b_0\in \beta(b,i)$ such that $w_1b_0\in E(G)$ and $w_2b_0\not \in E(G)$. Similarly, for each $c\in V(G_i)$ we add a red arc $(c,w)$ if there exist $c_1,c_2\in \beta(c,i)$ and $w_0\in \beta(w,i)$ such that $c_1w_0\in E(G)$ and $c_2w_0\not \in E(G)$. Notice that an oriented contraction cannot increase the red out-degree of any vertex in $V(G_i)\setminus \{w\}$.

An \emph{oriented contraction sequence} is then defined analogously as a contraction sequence, but using oriented contractions. The \emph{oriented width} of an oriented contraction sequence $\C$ the 
maximum red out-degree over all vertices in all trigraphs in $\C$. While the oriented twin-width of a graph can be smaller than its twin-width, by comparing the definitions of the two notions of contraction sequences we immediately obtain:

\begin{fact}[\cite{DBLP:conf/soda/BonnetKRT22}]
\label{fact:otwwtwwbound}
The oriented twin-width of every oriented contraction sequence $C$ is upper-bounded by the width of the contraction sequence $C'$ obtained by the same sequence of vertex pair contractions.
\end{fact}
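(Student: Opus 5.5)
Fix an arbitrary sequence of vertex pair contractions on $G$. Running it once with ordinary contractions gives the contraction sequence $C'=(G=G'_1,\ldots,G'_n)$, and running it once with oriented contractions gives $C=(G=G_1,\ldots,G_n)$. The plan is to compare the two runs step by step. I will show that for every $i\in[n]$ and every vertex $z$ of the common vertex set, the red out-neighbourhood of $z$ in $G_i$ is contained in its red neighbourhood in $G'_i$. The statement then follows by maximising: the red out-degree of $z$ in $G_i$ is at most the red degree of $z$ in $G'_i$, so the oriented width of $C$ is at most the width of $C'$.

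First I would check that the two runs have the same vertices and bags. Both start from $G$ with singleton bags. Both contract the same pair at each step, and both define the bag of the new vertex as the union of the two old bags. So, by induction on $i$, $V(G_i)=V(G'_i)$ and $\beta(z,i)$ agrees in the two sequences.

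The main step is a characterisation of red edges in the ordinary sequence:
\[ zy\in R(G'_i)\iff \text{there exist } z_1,z_2\in\beta(z,i),\ y_1,y_2\in\beta(y,i) \text{ with } z_1y_1\in E(G),\ z_2y_2\notin E(G). \]
Put differently, a red edge is present exactly when the bipartite graph between the two bags in $G$ is neither complete nor empty. I would prove both directions together by induction on $i$. In the base case all bags are singletons and there are no red edges, so both sides are false. For the inductive step there are two kinds of pairs.

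\emph{Pairs not involving the new vertex $w$.} Their bags and edges are unchanged, so the claim carries over from step $i-1$.

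\emph{Pairs $wy$, where $w$ comes from contracting $u$ and $v$.} Here I check the contraction rules (2)--(3) against the right-hand side for $\beta(w)=\beta(u)\cup\beta(v)$.
\begin{itemize}
\item A black edge $wy$ is produced exactly when both $uy$ and $vy$ are black. By induction, black means the bipartite graph between the bags is complete, so $\beta(w)$ is completely joined to $\beta(y)$.
\item No edge $wy$ is produced exactly when both $uy$ and $vy$ are absent. By induction, each of those bipartite graphs is empty, so $\beta(w)$ and $\beta(y)$ are anticomplete.
\item A red edge $wy$ is produced exactly in the remaining cases: $uy$ or $vy$ is red, or exactly one of them is black. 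In the first case the witnesses already exist by induction. In the second case one half of $\beta(w)$ is complete to $\beta(y)$ and the other is anticomplete, which supplies both an edge and a non-edge. Either way the bipartite graph is mixed.
\end{itemize}
Finally, the right-hand side cannot become true unless it is already true for $uy$ or $vy$, or the mix arises across the two halves. Both situations are covered by the cases above, which completes the induction.

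With the characterisation in hand, take a red arc $(z,b)$ of $G_i$. By the definition of oriented contraction, when the arc was created there were $z_1,z_2\in\beta(z)$ and $b_0\in\beta(b)$ with $z_1b_0\in E(G)$ and $z_2b_0\notin E(G)$. Later oriented contractions keep non-incident arcs and only enlarge bags, so these witnesses persist. Taking $y_1=y_2=b_0$ in the characterisation shows $zb\in R(G'_i)$, which gives the required containment of neighbourhoods.

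The only delicate part is the inductive characterisation of ordinary red edges, specifically the case analysis of rule (3) against the mixed-adjacency condition. The rest is bookkeeping.
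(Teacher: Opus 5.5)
Your proof is correct and is essentially the argument the paper has in mind. The paper gives no details beyond saying the fact follows ``by comparing the definitions'': every red out-arc $(z,b)$ has a witness $b_0\in\beta(b)$ that sees both an edge and a non-edge to $\beta(z)$, which forces $zb$ to be red in the ordinary sequence. One small presentational point: your bullets use that black edges correspond to complete bag pairs and non-edges to anticomplete ones, so you should state the induction hypothesis as that full trichotomy (black/complete, absent/anticomplete, red/mixed) rather than only the red characterisation. Your case analysis already proves exactly this.
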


The \emph{oriented twin-width} of a graph $G$, denoted $\otww(G)$, is the minimum oriented width of any oriented contraction sequence of $G$.

\begin{remark*}
Since Section~\ref{sec:td} will deal exclusively with oriented contraction sequences and oriented width, for brevity we omit the adjective ``oriented'' in that section.\end{remark*}

\section{Translating Twin-Width and Oriented Twin-Width}
\label{sec:translation}

Bonnet, Kim, Reinald and Thomass\' e established the functional equivalence between twin-width and its oriented variant by showing:

\begin{fact}[Proof of Theorem 4.1 in~\cite{DBLP:conf/soda/BonnetKRT22}]
For every graph $G$, $\otww(G)\leq \tww(G) \leq 2^{2^{\bigoh(\otww(G)}}$.
\end{fact}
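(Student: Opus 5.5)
The lower bound $\otww(G)\leq \tww(G)$ follows from Fact~\ref{fact:otwwtwwbound}. I would take an optimal contraction sequence $\C$ of $G$ of width $\tww(G)$ and perform the same sequence of vertex-pair contractions as oriented contractions. For every intermediate oriented trigraph, I would check that each red arc $(w,b)$ corresponds to a red edge $wb$ in the ordinary trigraph at the same step. If some $w_1,w_2\in\beta(w)$ disagree on $b_0\in\beta(b)$, then $\beta(w)$ and $\beta(b)$ are not homogeneous. By induction on the sequence, the ordinary contraction then keeps $wb$ red, since red edges are inherited and a black edge in an ordinary trigraph certifies full homogeneity of the bags. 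Hence the red out-degree in the oriented trigraph is at most the red degree in the ordinary one, so the oriented width of this sequence is at most $\tww(G)$.

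For the upper bound, I would fix an oriented contraction sequence of oriented width $k=\otww(G)$ and show that the same sequence of bag partitions, possibly refined, yields an ordinary contraction sequence whose width is bounded by $2^{2^{\bigoh(k)}}$. The red degree of a part $P$ in the ordinary sense counts parts $Q$ with $(P,Q)$ non-homogeneous. That pair is either an out-arc of $P$ or an in-arc $(Q,P)$, and out-arcs of $P$ number at most $k$. So it suffices to bound red in-degree, which is not controlled directly. I would follow Bonnet, Kim, Reinald and Thomass\'e and pass through the matrix, or mixed-minor, characterization. Fix a vertex order compatible with the oriented sequence, namely the one in which every part of every partition is an interval, obtained as a leaf order of the merge tree. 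Bounded oriented width then forbids a large mixed-minor in the adjacency matrix. Marcus--Tardos style counting then implies that the adjacency matrix has no $2^{\bigoh(k)}$-mixed minor. The grid/mixed-minor theorem of Bonnet et al. then gives a contraction sequence of width $2^{2^{\bigoh(k)}}$, with the double exponential arising from composing the two exponential steps.

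The key step, and the one I expect to be the main obstacle, is showing that a $t$-mixed minor with $t$ large relative to $k$ contradicts oriented width $k$. I would first consider the moment the first row block (or column block) of the minor division is fully merged into one part. At that moment, that part sees $t$ column blocks, each inside the same or a distinct current part, and all of them are mixed. Because only the row side is merged, these non-homogeneous pairs give out-arcs from the merged part, and these arcs go to at least about $t/2$ distinct parts. So $t>2k+\bigoh(1)$ yields a contradiction. The delicacy is handling parts that straddle several blocks, and I would treat these by a constant-factor loss in the counting.
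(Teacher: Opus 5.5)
Your proposal is correct and follows essentially the same route as the paper and the cited proof from~\cite{DBLP:conf/soda/BonnetKRT22}. The lower bound is Fact~\ref{fact:otwwtwwbound}. The upper bound goes through the leaf order of the contraction tree and then the ``first block fully swallowed by a part $P$'' argument, which gives mixed number at most $2k+3$ as in Lemma~\ref{lem:otwwmixed}. In that argument, mixedness of $R_i\times C_j$ gives a column of $C_j$ that is non-constant on $R_i\subseteq P$, hence an out-arc of $P$. Note that the column blocks are not inside parts at that moment, contrary to your wording; rather, no other part contains a whole column block, so each meets at most two. The grid minor theorem (Fact~\ref{fact:mixedtww}) then finishes.

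One bookkeeping slip: your key step already bounds the mixed number by $2k+\bigoh(1)$. The whole double exponential comes from the grid minor theorem itself, namely a Marcus--Tardos/Fox constant $2^{\bigoh(t)}$ followed by an exponential in it. So the intermediate claim of ``no $2^{\bigoh(k)}$-mixed minor'' is unnecessary. Fed into that theorem, it would only give a triple exponential.
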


Unfortunately, the proof of the above statement is not constructive---it does not provide any algorithm for converting an oriented contraction sequence $\C$ of oriented width $\alpha$ into a contraction sequence $\C'$ of width bounded by a function of $\alpha$. In order to construct the contraction sequences for Main Result~\ref{result1}, we need to revisit the relationship between the two parameters and provide an efficient conversion algorithm.

First, we provide some background for the terminology required solely in this section.
Given a total order $\sigma$ over $V(G)$, the \emph{ordered adjacency matrix} $M(G,\sigma)$ is the adjacency matrix of $G$ where the rows and columns both appear in the order $\sigma$. The mixed number is a numerical parameter of ordered adjacency matrices which is known to be related to twin-width, but we will not need its definition for our purposes. 
Similarly, a notion of \emph{twin-width} has also been defined over ordered adjacency matrices but we will not need its definition for our purposes. 
We say that a (oriented) contraction sequence in $C$ of $G$ \emph{respects} $\sigma$ if the vertices in every bag in every trigraph of $C$ occur consecutively in $\sigma$---in other words, for each pair of vertices $a,b$ occurring in some bag $\beta(\cdot, \cdot)$, there cannot exist any $c\not \in \beta(\cdot, \cdot)$ such that $c$ occurs between $a$ and $b$ in $\sigma$.
     
    \begin{lemma}
 \label{lem:otwwmixed}
Given a graph $G$ with an oriented contraction sequence $\C$ of oriented width at most $k$, we can compute in polynomial time an ordering $\sigma$ on $V(G)$ such that $(M(G,\sigma))$ has mixed number at most $2k+3$.
  \end{lemma}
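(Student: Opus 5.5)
The plan is to take $\sigma$ as the leaf order of the merge tree of $\C$, so that $\C$ respects $\sigma$, and then to run a "first crossing" argument in the style of Bonnet et al.\ (Twin-width~I/IV): a contraction sequence that respects $\sigma$ and has small width forbids large mixed divisions of $M(G,\sigma)$. The new ingredient is that we only control red \emph{out}-degree, so the counting has to be arranged so that every mixed cell forces a red arc \emph{leaving} the critical vertex. This is where the factor $2$ and the additive slack in $2k+3$ should come from.

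\textbf{Step 1 (the ordering).} Build the binary merge tree of $\C$: its leaves are $V(G)$, and each contraction creates an internal node whose children are the two contracted vertices. Take $\sigma$ to be the left-to-right leaf order of this tree, computable in linear time. By construction, every bag $\beta(u,i)$ is the leaf set of a subtree, hence an interval of $\sigma$. Thus $\C$ respects $\sigma$.

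\textbf{Step 2 (first crossing).} Suppose, for contradiction, that $M(G,\sigma)$ admits a division into consecutive row parts $R_1,\dots,R_m$ and column parts $C_1,\dots,C_m$ witnessing mixed number larger than $2k+3$. Since rows and columns are indexed by the same order $\sigma$, both are partitions of $V(G)$ into $\sigma$-intervals. Consider the first trigraph $G_i$ of $\C$ in which some bag $\beta(w,i)$ meets two row parts (the column case is symmetric). Just before this step, every bag lies inside a single row part. Because bags are intervals, $\beta(w,i)$ is the union of two intervals, lying respectively in $R_a$ and $R_{a+1}$ and adjacent at their common boundary.

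\textbf{Step 3 (red out-arcs).} For each column part $C_j$, I want to show that, apart from $O(1)$ exceptional column parts, $w$ has a red out-arc to some vertex whose bag lies in (or meets) $C_j$. By the definition of an oriented contraction, such an arc exists whenever two rows of $\beta(w,i)$ differ on some column of $C_j$. Bags meeting $C_j$ are pairwise distinct vertices of $G_i$, except for the at most two bags straddling the ends of $C_j$. Hence distinct column parts yield essentially distinct out-neighbours, up to a loss of a factor $2$ from bags shared between consecutive column parts. The exceptional parts are the ones containing $\beta(w,i)$ itself, together with the boundary effects, giving the additive $+3$. Together this yields red out-degree greater than $k$, contradicting the assumed oriented width.

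\textbf{Main obstacle.} The hard part is deriving a discrepancy \emph{across} $\beta(w,i)$ from mixedness of the cells $(R_a,C_j)$ and $(R_{a+1},C_j)$. The bag $\beta(w,i)$ may contain only a few rows of each part, and a mixed cell alone does not force those particular rows to differ. I would first try to choose the critical moment more cleverly, namely the first time some bag fully contains a row part together with a row of a neighbouring part, so that the rows of $w$ cover an entire mixed cell. Then "not all rows equal" in that cell directly produces $w_1,w_2\in\beta(w,i)$ and a column $b_0\in C_j$ witnessing a red out-arc. If that choice breaks the minimality needed in Step~2, the fallback is the standard Twin-width~I trick: refine the division so that each counted cell is vertical-free and horizontal-free in a corner adjacent to the row boundary.
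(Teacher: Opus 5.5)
\textbf{Where your proposal stands.} Your Step 1 is what the paper does: it takes the leaf order of the contraction tree, so that $\mathcal{C}$ respects $\sigma$. The paper then imports the counting argument verbatim from the last two paragraphs of the proof of Theorem 4.1 of Bonnet, Kim, Reinald and Thomass\'e, with $d=k+1$. That counting argument is the part you have to supply, and your proposal has a genuine gap there, which you partly see yourself.

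With the critical moment of Step 2 (the first bag meeting two row parts), Step 3 is simply false. Suppose the first contraction merges the last vertex of $R_a$ with the first vertex of $R_{a+1}$, and these two vertices are twins. Then $w$ has no red out-arc at all, even though every cell can still be mixed because of the other rows. Your proposed repair is also insufficient as stated. An event that only watches row parts gives no control over how the other bags sit relative to the column parts. A single bag could then cover many column parts, so ``distinct column parts yield essentially distinct out-neighbours'' breaks down.

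\textbf{The missing idea.} Choose $i$ minimal such that some bag of $G_i$ contains an \emph{entire} row part or an \emph{entire} column part.
\begin{itemize}
\item \emph{Setup.} A cell with a single row or a single column cannot be mixed, so every part has at least two elements. Hence $i\ge 2$ and $\beta(w,i)=\beta(u,i-1)\cup\beta(v,i-1)$, where neither interval contains a full part.
\item \emph{Geometry.} Each of these two intervals therefore meets at most two consecutive column parts, so $\beta(w,i)$ meets at most three. Every other bag of $G_i$ meets at most two.
\item \emph{Row case.} Suppose $\beta(w,i)\supseteq R_a$. There are at least $m-3$ column parts $C_j$ disjoint from $\beta(w,i)$. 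For each, the mixed cell $R_a\times C_j$ has two distinct rows, which gives $w_1,w_2\in R_a$ and $b_0\in C_j\setminus\beta(w,i)$ with $w_1b_0\in E(G)$ and $w_2b_0\notin E(G)$. This is a red out-arc from $w$ to the bag of $b_0$.
\item \emph{Column case.} If instead $\beta(w,i)\supseteq C_j$, use the row parts $R_a$ disjoint from $\beta(w,i)$. The cell $R_a\times C_j$ has two distinct columns $c_1,c_2\in\beta(w,i)$, so some $r\in R_a$ is adjacent to exactly one of them. Because $M(G,\sigma)$ is an adjacency matrix, this is again an \emph{out}-arc of $w$, towards the bag of $r$. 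This is exactly why only out-degree is needed.
\item \emph{Count.} Each target bag is charged to at most two parts, so $w$ has red out-degree at least $\lceil (m-3)/2\rceil$. For $m=2k+4$ this is $k+1$, contradicting oriented width at most $k$.
\end{itemize}
This count is where your factor $2$ and additive $3$ actually come from.
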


\begin{proof}
Let us fix an arbitrary total order $\prec$ over $V(G)$. From $\C=(G = G_1,\ldots,G_n)$, we build a sequence of total orders $\sigma_n,\dots, \sigma_1$ over $V(G_n), \dots, V(G_1)$, respectively, as follows. $\sigma_n$ is the trivial unique total order over a single element. For each $i\in [n-1]$, we then construct $\sigma_{i}$ from $\sigma_{i+1}$ by maintaining the same pairwise ordering between each pair of bags not involved in the $i$-th contraction of $\C$. Let $\beta{v,i+1}$ be the unique bag in $G_{i+1}$ that was created by contracting two bags in $G_i$, say $\beta(a, i)$ and $\beta(b, i)$ where $\min_\prec(\beta(a, i))\prec \min_\prec(\beta(b,i))$. We complete the construction of $\sigma_i$ by replacing $\beta(v,i+1)$ in the ordering $\sigma_{i+1}$ with $\beta(a, i)$ directly followed by $\beta(b, i)$.
Set $\sigma=\sigma_1$ to be the resulting ordering on $V(G)$; intuitively, $\sigma$ is the natural ordering of the vertices on the lowest level of the so-called contraction tree of $\C$~\cite{DBLP:conf/soda/BonnetGKTW21}, with $\prec$ simply serving as a means of making the construction deterministic. Notice that the construction guarantees that $\C$ respects $\sigma$.

Let us consider the ordered adjacency matrix $M(G,\sigma)$, and towards a contradiction assume that $(M(G,\sigma))$ has mixed number at least $2k+4$. By a verbatim reproduction of the last two paragraphs of the proof of~\cite[Theorem 4.1]{DBLP:conf/soda/BonnetKRT22} where we set $d\coloneq k+1$, we obtain that every oriented contraction sequence which respects $\sigma$ must have oriented width at least $k+1$. 
\end{proof}

The first part of the following statement follows directly from the second part of the Grid Minor Theorem for Twin-Width~\cite[Theorem 10]{BonnetKTW22}, see also~\cite[Theorem 2.2]{DBLP:conf/soda/BonnetGKTW21}, when setting $t\coloneq 2k+3$. The second part follows from~\cite[Theorem 5.8]{BonnetKTW22}, which translates the aforementioned grid minor theorem to graphs. We note that the statement of~\cite[Theorem 5.8]{BonnetKTW22} omits the fact that the constructed contraction sequence respects $\sigma$, but this follows directly from the proof.

\begin{fact}
\label{fact:mixedtww}
Every ordered matrix $(M(G),\sigma)$ of mixed number $2k+3$ has twin-width at most $2^{2^{\bigoh(k)}}$. Moreover, there exists a contraction sequence of $G$ which respects $\sigma$ and has width at most $2^{2^{\bigoh(k)}}$.
\end{fact}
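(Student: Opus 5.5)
The statement is a Fact assembled from cited results. The plan is to verify that the cited theorems apply with the parameters we have, and to check that the construction underlying \cite[Theorem 5.8]{BonnetKTW22} produces a sequence respecting $\sigma$.

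\textbf{First part.} Set $t \coloneq 2k+3$ and invoke the second part of the Grid Minor Theorem for twin-width \cite[Theorem 10]{BonnetKTW22}. It states that an ordered matrix that is $t$-mixed-free, i.e., has mixed number below the threshold, has matrix twin-width at most $2^{2^{\bigoh(t)}}$. Since $t = 2k+3 = \bigoh(k)$, this gives $2^{2^{\bigoh(k)}}$. The only care needed is an off-by-one convention: "mixed number $2k+3$" versus "no $(2k+4)$-mixed minor". This only shifts the constant in the $\bigoh$.

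\textbf{Second part.} Apply \cite[Theorem 5.8]{BonnetKTW22}. It converts a bound on the twin-width of an ordered adjacency matrix $M(G,\sigma)$ into a contraction sequence of $G$ of comparable width. The key step is to recall how that proof works. The matrix contraction sequence only merges consecutive rows and consecutive columns, so every row part and every column part is always an interval of $\sigma$. Since $M(G,\sigma)$ is symmetric with rows and columns in the same order $\sigma$, the graph sequence is obtained by simulating the row and column contractions and taking common refinements of the row and column partitions. Each such refinement part is an intersection of two $\sigma$-intervals, which is again an interval.

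So I will check that every bag in the resulting graph sequence is a union of consecutive elements of $\sigma$ (an intersection of intervals). Equivalently, the sequence respects $\sigma$. The red-degree bound is inherited from the theorem, with only a constant-factor or polynomial blowup, which is absorbed into $2^{2^{\bigoh(k)}}$.

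\textbf{Main obstacle.} The delicate point is confirming that no step in the proof of Theorem 5.8 merges non-adjacent parts. One example would be a final "clean-up" phase that contracts arbitrary parts once they are few. If such a step exists, I would replace it by contracting the remaining parts left-to-right along $\sigma$. Once only boundedly many parts remain, this keeps red degree bounded by the number of parts and preserves the interval property.
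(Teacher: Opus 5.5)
Your proposal follows the paper's justification: the first part is \cite[Theorem 10]{BonnetKTW22} applied with $t=2k+3$ (the off-by-one convention only changes constants), and the second part is \cite[Theorem 5.8]{BonnetKTW22} together with the observation that every bag produced is a $\sigma$-interval, which the paper simply asserts follows from that proof. Your common-refinement sketch is a sound way to see this, since intersections of $\sigma$-intervals are intervals and each fusion of two consecutive row (or column) parts merges at most one pair of refinement parts.
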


We remark that $100\cdot 2^{2^{12k}}$ can serve as a non-tight but explicit bound for the above terms involving $k$. Next, we recall the fixed-parameter tractability of approximating contraction sequences with respect to a provided total order of the vertex set~\cite[Theorem 7]{BonnetGMSTT24}.

\begin{fact}
\label{fact:computerespecting}
There is a fixed-parameter algorithm that takes as input a graph $G$ with a total order $\sigma$ of its vertices, and outputs a contraction sequence of $G$ which respects $\sigma$ of width at most $2^{\bigoh(p^4)}$. Here, the parameter $p$ is the minimum width among all contraction sequences of $G$ which respect $\sigma$. 
\end{fact}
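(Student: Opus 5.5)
This statement is quoted from~\cite[Theorem 7]{BonnetGMSTT24}, so in the paper it is simply cited. If I had to reprove it, I would go through ordered matrices, reusing the mixed-number machinery of Lemma~\ref{lem:otwwmixed} and Fact~\ref{fact:mixedtww}.

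\textbf{Reformulation.} A contraction sequence of $G$ that respects $\sigma$ is the same thing as a sequence of partitions of $V(G)$ into $\sigma$-intervals. At each step, two consecutive intervals are merged. The red degree of an interval $I$ counts the other intervals $J$ such that the submatrix $M(G,\sigma)[I,J]$ is not constant. So the problem becomes a symmetric interval-merging problem on the ordered matrix $M(G,\sigma)$.

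\textbf{Step 1: the lower bound.} Assume some $\sigma$-respecting sequence has width $p$. Then $M(G,\sigma)$ has no mixed minor of order $f(p)$, for some polynomial $f$. The argument is the same one invoked in Lemma~\ref{lem:otwwmixed} from the proof of~\cite[Theorem 4.1]{BonnetKRT22}. Take a large mixed minor with row blocks $R_1,\dots,R_m$ and column blocks $C_1,\dots,C_m$, and look at the first time in the sequence that some interval meets two of the blocks. Because intervals are consecutive in $\sigma$, at that moment many intervals still lie inside individual blocks. Each of the mixed cells they touch then contributes a red edge, which forces red degree above $p$.

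\textbf{Step 2: the algorithm.} Run the greedy interval-merging procedure. Maintain the current partition into intervals, and at each step merge a consecutive pair whose merge keeps the maximum red degree at most a threshold $q = 2^{\bigoh(p^4)}$. To find $q$ without knowing $p$, try $q=1,2,\dots$ in turn; there are only polynomially many merge attempts per value of $q$.

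\textbf{Step 3: correctness, the main obstacle.} One must show that greedy never gets stuck while $q$ is at least the stated bound. Suppose it is stuck. Then every candidate merge of two consecutive intervals creates an interval with more than $q$ mixed neighbours. I would pair up consecutive intervals, giving about half of them, and use a Marcus--Tardos style counting argument on the resulting partition matrix. Many mixed cells spread across many part-pairs should force a mixed minor of order $f(p)$, which contradicts Step 1.

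The naive version of this argument, as in twin-width~I, only yields a double-exponential bound like the one in Fact~\ref{fact:mixedtww}. Reaching $2^{\bigoh(p^4)}$ needs a sharper count. I would first try to exploit the fact that the obstruction is a mixed minor of order polynomial in $p$, rather than exponential. I would then bound the number of mixed cells per row of the partition matrix by an exponential in a polynomial of $p$, using a Füredi--Hajnal type extremal bound specialised to mixed patterns. This counting step is where I expect the real difficulty to lie, and it is what~\cite{BonnetGMSTT24} carries out.
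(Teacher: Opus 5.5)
The paper does not prove this statement. It is imported verbatim from \cite[Theorem~7]{BonnetGMSTT24}, so your opening sentence already matches what the paper does. The self-contained reproof you then outline has a genuine gap, located exactly in Step~3, which carries all of the content. The conclusion of Step~1 is fine: it follows from Lemma~\ref{lem:otwwmixed} combined with Fact~\ref{fact:otwwtwwbound}.

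First, you misstate what being stuck means. A merge of consecutive intervals $A,B$ is also rejected when some \emph{third} part $X$, fully adjacent to $A$ and fully non-adjacent to $B$, gains a red edge and exceeds $q$. So a stuck configuration need not consist of merges that create intervals with many neighbours of any kind.

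Second, and more seriously, red degree counts \emph{non-constant} cells of $M(G,\sigma)$, while the obstruction from Step~1 only detects \emph{mixed} cells, and the two are unrelated in general. Take $\sigma=(r,c_1,\dots,c_{2m})$, where $r$ is adjacent exactly to $c_1,c_3,\dots,c_{2m-1}$ and there are no other edges. Then $M(G,\sigma)$ has mixed number at most $2$. Yet in the interval partition $\{r\},\{c_1,c_2\},\dots,\{c_{2m-1},c_{2m}\}$ no cell is mixed, while $\{r\}$ has red degree $m$. Hence high red degree after every candidate merge gives you no mixed cells to feed into a Marcus--Tardos argument, and no contradiction with Step~1 arises.

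Third, greedy with an arbitrary choice of admissible merge is correct only if \emph{every} interval partition reachable under threshold $q$ admits a further admissible merge. This dead-end-freeness is far stronger than the existence of one good $\sigma$-respecting sequence, which is all Fact~\ref{fact:mixedtww} provides, and you give no argument for it.

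You also misidentify where the difficulty lies. The mixed-minor obstruction is already of linear order, namely $2p+4$ by Lemma~\ref{lem:otwwmixed}, so nothing is gained from its order being ``polynomial rather than exponential''. Marcus--Tardos, with its known $2^{\bigoh(t)}$ constant, already bounds the number of mixed cells of any interval partition by $2^{\bigoh(p)}$ times the number of parts, so sharper counting of mixed cells is not the bottleneck. What is missing is a mechanism that controls non-constant cells, i.e.\ red edges. Through mixed minors alone, the known guarantee is the double-exponential one of Fact~\ref{fact:mixedtww}. Reaching $2^{\bigoh(p^4)}$ requires an idea your outline does not contain, and that is what the cited theorem supplies.
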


By chaining the above arguments, we obtain:

 \begin{lemma}
\label{lem:translation}
There is a fixed-parameter algorithm that takes as input a graph $G$ together with a contraction sequence $\C$ of oriented width $k$ (the parameter) and outputs a contraction sequence $\C'$ of width at most $2^{2^{2^{\bigoh(k)}}}$. Moreover, if $k\leq f(\otww(G))$ for some computable non-decreasing function $f$, then $\C'$ has width at most $2^{2^{2^{\bigoh(f(\tww(G)))}}}$.
 \end{lemma}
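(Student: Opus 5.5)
We obtain the lemma by chaining Lemma~\ref{lem:otwwmixed}, Fact~\ref{fact:mixedtww} and Fact~\ref{fact:computerespecting}.

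\textbf{Step 1: ordering.} Given $G$ and the oriented contraction sequence $\C$ of oriented width $k$, we apply Lemma~\ref{lem:otwwmixed} and compute, in polynomial time, an ordering $\sigma$ of $V(G)$ such that $M(G,\sigma)$ has mixed number at most $2k+3$.

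\textbf{Step 2: existence of a respecting sequence.} By the second part of Fact~\ref{fact:mixedtww}, $G$ has a contraction sequence that respects $\sigma$ and has width at most $p_0 := 2^{2^{\bigoh(k)}}$. Here we use that the mixed number bound is monotone: matrices of mixed number at most $2k+3$ are covered, and if needed we apply the fact with the exact value, which is at most $2k+3$. Hence the minimum width $p$ over all contraction sequences of $G$ respecting $\sigma$ satisfies $p\le p_0$.

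\textbf{Step 3: computing $\C'$.} We run the algorithm of Fact~\ref{fact:computerespecting} on $(G,\sigma)$. It outputs a contraction sequence $\C'$ respecting $\sigma$ of width $2^{\bigoh(p^4)}\le 2^{\bigoh(p_0^4)} = 2^{(2^{2^{\bigoh(k)}})^4} = 2^{2^{2^{\bigoh(k)}}}$, since raising the inner tower to the fourth power only changes the constant in the innermost exponent. Its running time is $g(p)\cdot n^{\bigoh(1)}$ for a computable $g$. Because $p\le p_0$ is bounded by a computable function of $k$, this is a fixed-parameter algorithm in $k$. If the algorithm's running-time dependence is stated in terms of an unknown $p$, we can simply use that $g$ may be taken non-decreasing, so the bound $g(p_0)$ applies. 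This gives the first claim.

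\textbf{Second claim.} If $k\le f(\otww(G))$ with $f$ non-decreasing, then the known inequality $\otww(G)\le\tww(G)$ (the first inequality of the Bonnet--Kim--Reinald--Thomass\'e fact above) gives $k\le f(\tww(G))$. Substituting into the width bound from Step 3 yields $2^{2^{2^{\bigoh(f(\tww(G)))}}}$.

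\textbf{Main obstacle.} The only delicate points are bookkeeping ones. First, the algorithm of Fact~\ref{fact:computerespecting} is parameterized by the unknown respecting-width $p$, so we must argue that its running time is bounded in terms of $k$; Step 2 handles this. Second, the asymptotic tower bounds must compose correctly, which is the constant-absorption argument in Step 3.
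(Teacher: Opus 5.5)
Your proposal is correct and follows the paper's proof: you chain Lemma~\ref{lem:otwwmixed}, Fact~\ref{fact:mixedtww} and Fact~\ref{fact:computerespecting}, absorb the fourth power into the tower, and get the second claim from $\otww(G)\le\tww(G)$ together with the monotonicity of $f$. Your remark that the running time of Fact~\ref{fact:computerespecting} is bounded in terms of $k$ via $p\le p_0$ is bookkeeping that the paper leaves implicit.
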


\begin{proof}
Using Lemma~\ref{lem:otwwmixed}, we compute in polynomial time an ordering $\sigma$ on $V(G)$ such that $(M(G,\sigma))$ has mixed number at most $2k+3$. Fact~\ref{fact:mixedtww} then stipulates the existence of a contraction sequence $C^*$ of $G$ which respects $\sigma$ and has width at most $2^{2^{\bigoh(k)}}$. Next, we invoke the fixed-parameter algorithm from Fact~\ref{fact:computerespecting} on $(G,\sigma)$. Here, the existence of $C^*$ guarantees that the contraction sequence $C'$  computed by the algorithm will have width at most $2^{(2^{2^{\bigoh(k)}})^4}\in 2^{2^{2^{\bigoh(k)}}}$.

For the second statement, it suffices to recall that $\otww(G)\leq \tww(G)$ by Fact~\ref{fact:otwwtwwbound}.
\end{proof}

\section{A Fixed-Parameter Algorithm Parameterized by Treedepth}\label{sec:td}

In this section, we design an FPT 2-approximation algorithm for computing oriented twin-width when parameterized by the treedepth, see Theorem~\ref{thm:param-by-TD}. In combination with Lemma~\ref{lem:translation},
 this will in turn imply Main Result~\ref{result1} as formalized in Theorem~\ref{thm:mainone}.
 
\subsection{Initial Setup and Overview}
For the following, it will be useful to recall the definition of treedepth presented in Section~\ref{sec:prelims}.

Let $G$ be an arbitrary graph with treedepth $p$ and $T$ a nice treedepth decomposition of $G$ of depth $p$. Note that $G$ and $T$ have the same set of vertices.
We assume without loss of generality that the input graph $G$ is connected, as the oriented twin-width of a graph is the maximum oriented twin-width of its connected components; connectivity will then be preserved throughout our reduction rules.
For a given vertex $u$ in $T$, we now define a notion of ``component-types'' which intuitively captures the equivalence between components which exhibit the same outside connections and internal structure. 

\begin{definition}\label{def:equivalence}
We say that two graphs $H_0, H_1 \in \con_u$ appearing in $T$ are \emph{$u$-twin-blocks}, denoted $H_0 \sim_u H_1$, if there exists a \emph{canonical isomorphism} $\alpha$ from $H_0$ to $H_1$ such that for each vertex $v\in V(H_0)$ and each $w\in V(G\setminus (H_0+H_1))$, $vw\in E(G)$ if and only if $\alpha(v)w \in E(G)$.  Clearly, $\sim_u$ is an equivalence relation.
 \end{definition}

When $u$ is clear from context, we simply refer to $H_0$ and $H_1$ as \emph{twin-blocks} ($H_0\sim H_1$), and we use the standard notation $[H]_\sim$ to denote the equivalence class of $H$ for $\sim$. We note that $\sim$ can be tested in time $\bigoh(|H_0|^{|H_0|}\cdot p)$ via isomorphism testing procedures. Moreover, we lift $\alpha$ to sets of vertices by simply setting $\alpha(S)=\{\alpha(s)~|~s\in S\}$.

The core of our approach lies in establishing a reduction rule which takes a graph $G$ with its treedepth decomposition $T$ and gradually deletes subgraphs from $G$ until we obtain a subgraph whose size is upper-bounded by a function of $\td(G)$.
 
\begin{definition}\label{def:branchingfactor}
A vertex $u$ at depth $d$ in $T$ is \emph{processed} if: $T_u\setminus \{u\}$ contains only processed vertices, and $|\con_u| \leq g(d, p)$, where we set $g(p, p)=2^p$ and, for any $i \in [0, p-1]$, $g(i, p)\coloneq 2^{g(i+1, p)^{6p}}$.
\end{definition}

\begin{lemma}
\label{lem:size_processed}
If a subtree $T_u$ of $T$ is rooted at depth $d$ and $u$ is processed, it contains at most $b(d,p)\coloneq(g(d,p)+1)^p$ vertices.
In particular, if the root of $T$ is processed, $T$ contains at most $(g(1,p)+1)^p=2\uparrow\uparrow \bigoh(p)$ vertices.
\end{lemma}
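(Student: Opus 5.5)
We argue by induction on the height of $T_u$, i.e., by backward induction on the depth $d$ from $p$ down to the depth of the root. The key facts are that every vertex of $T_u$ is processed (by the recursive clause in Definition~\ref{def:branchingfactor}), and that, $T$ being nice, the children of any node $v$ are in one-to-one correspondence with $\con_v$. Hence every node at depth $d'$ in $T_u$ has at most $g(d',p)$ children. We also use that $g$ is non-increasing in its first argument: $g(i,p)=2^{g(i+1,p)^{6p}}\ge g(i+1,p)$, since $g(p,p)=2^p\ge 2$ and $2^{x^{6p}}\ge x$. Consequently every node of $T_u$, being at depth at least $d$, has at most $g(d,p)$ children.

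With this in hand, the count is a direct level-by-level bound. The subtree $T_u$ has depth at most $p-d+1\le p$, since $T$ has depth $p$. Level $j$ of $T_u$ ($j=0$ for $u$ itself) therefore contains at most $g(d,p)^j$ vertices, so
\[
|V(T_u)|\le \sum_{j=0}^{p-1} g(d,p)^j \le (g(d,p)+1)^{p},
\]
using the binomial expansion of $(g(d,p)+1)^p$, which dominates each term $g(d,p)^j$ for $j\le p$. Alternatively, as an explicit induction one writes $|V(T_v)|\le 1+g(d',p)\cdot b(d'+1,p)$ and checks $1+g(d',p)(g(d',p)+1)^{p-1}\le (g(d',p)+1)^p$ by monotonicity. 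One should fix the depth convention so that the number of levels below $u$ is at most $p$; either convention (root at depth $0$ or $1$) gives at most $p$ levels for $d\ge 1$.

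For the ``in particular'' part, apply the bound with $u$ the root at depth $1$ to get $(g(1,p)+1)^p$. It remains to check that this is $2\uparrow\uparrow\bigoh(p)$. Unfolding the recursion, each step $g(i,p)=2^{g(i+1,p)^{6p}}\le 2^{2^{6p\log g(i+1,p)}}$ adds at most two levels to the tower (absorbing the $6p$ factor into the next level once $g(i+1,p)\ge 2^p$). Starting from $g(p,p)=2^p$, after $p$ steps we get height $\bigoh(p)$, and the final power $p$ adds at most one more level.

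The only mildly delicate point is the monotonicity of $g$ together with the depth bookkeeping. Here $g$ is indexed by depth and the processed condition at deeper nodes gives the bound $g(d',p)$ there, which is smaller; so using the uniform bound $g(d,p)$ is valid and no subtle step remains.
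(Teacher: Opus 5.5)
Your proof is correct and is essentially the paper's argument. You bound the branching at every node of $T_u$ by $g(d,p)$, using niceness and the fact that $g(\cdot,p)$ decreases with depth, and then sum over at most $p$ levels; the paper writes the same count as $\prod_{\delta}(g(\delta,p)+1)\le (g(d,p)+1)^p$. Your tower-height accounting for $(g(1,p)+1)^p=2\uparrow\uparrow\bigoh(p)$, which allows two levels per step to absorb the power $6p$, is more careful than the paper's.

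The only loose spot is in your optional explicit induction. There, monotonicity alone does not let you replace $b(d'+1,p)=(g(d'+1,p)+1)^p$ by $(g(d',p)+1)^{p-1}$. The step holds because $g(d',p)=2^{g(d'+1,p)^{6p}}\ge (g(d'+1,p)+1)^p$, or you can put the remaining height in the exponent instead. Your main level-by-level count does not rely on that aside.
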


\begin{proof}
If the root of $T_u$ is processed, all of its vertices are processed by definition. Thus, combining the bound from \cref{def:branchingfactor} at each level of $T$, knowing that the maximal depth is $p$, we obtain: $\Pi_{\delta \in \{d, p\}} (g(\delta,p)+1) \leq (g(d,p)+1)^p$. Indeed, notice that for any fixed $p\in \mathbb{N}$, the function $g(.,p)$ is decreasing.  
The result for $T$ follows directly from the application of the formula to $\delta=1$, and by \cref{def:branchingfactor} we can construct recursively the tower $2^{\dots^{2^{6p^2}}}$ with exactly $p$ stacked exponentials, which can be upper-bounded by $2\uparrow\uparrow \bigoh(p)$.
\end{proof}

\begin{definition}\label{def:mergeseg}
Let $G$ be a graph, $T$ a treedepth decomposition of $G$, $v\in V(G)$ and $\C$ a contraction sequence for $G$. We say that a segment $\seg$ of $\C$ is an \emph{$X_v$-merge} if $v$ has a sibling $w$ in $T$ with the following properties:
  \begin{itemize}
\item $X_v\sim X_w$, i.e., they are twin-blocks in $G$.
\item In the first trigraph $G_\iota$ of $\seg$:
\begin{itemize}
\item the subgraph induced by (the bags containing vertices of) $V(X_w)$ contains no black edges and all red arcs in it are symmetric, and
\item vertices of $X_v$ are in bags only with other vertices of $X_v$, and
\item  for each pair of canonically isomorphic vertices $z\in V(X_v)$ and $z'\in V(X_w)$, let $B(z)$ and $B(z')$ be the bags containing $z$ and $z'$ in $G_\iota$, respectively. Then $\alpha(\beta(B(z)))=\beta(B(z'))\cap V(X_w)$---in particular, the bag contents match when restricted to the subtrees rooted at $v$ and $w$. 
 \end{itemize}
 \item Each contraction in $\seg$ is a contraction between corresponding vertices of the subtrigraphs induced by (the bags containing vertices of) $V(X_v)$ and $V(X_w)$ that merges pairwise-isomorphic vertices w.r.t.\ $\alpha$.
\item In the last trigraph of $\seg$, each vertex of $X_v$ has been contracted with a vertex from $X_w$.
\end{itemize}
\end{definition}

Intuitively, an \emph{$X_v$-merge} $\seg$ is a subsequence of $\C$ such that at the beginning of the subsequence, $X_v$ and $X_w$ are in the ``same state'' if we ignore external vertices merged with the latter, and $\seg$ merely contracts the two subtrees into each other. The conditions on $X_w$ moreover guarantee that $\seg$ acts as a deletion operation on $X_v$ which will not result in new red edges in the resulting trigraph (even though new red edges may be created in the intermediate steps). In particular:

\begin{lemma}
\label{lem:otwwseg}
Let $\seg^\emph{start}$ and $\seg^\emph{end}$ be the first and last trigraphs in an $X_v$-merge \seg, respectively. Then $\seg^\emph{end}= \seg^\emph{start}-\{B(x)~|~x\in V(X_v)\}$ and, for each trigraph $\seg^\textup{mid}$ between $\seg^\emph{start}$ and $\seg^\emph{end}$ (included) in $\mathcal{C}$, $\otww(\seg^\textup{mid})\leq 2\cdot \otww(\seg^\emph{start})$.  \end{lemma}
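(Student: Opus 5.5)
The plan is to compare every intermediate trigraph with $\seg^\textup{start}$ via the twin-block isomorphism $\alpha$, treating the two claims separately. Here $\otww(\cdot)$ of a trigraph is read as the maximum red out-degree in it.

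For the first claim, I would track bags. Each contraction in $\seg$ merges the bag $B(z)$, which consists only of vertices of $X_v$, into $B(\alpha(z))$. So at the end every bag equals its start bag, possibly enlarged by vertices of $X_v$, and the $X_v$-bags have disappeared. It then suffices to show that adding $z$ to $B(\alpha(z))$ changes no edge or arc of the trigraph. By the third bullet of the first trigraph, $\alpha$ maps the content of $B(z)$ onto $B(\alpha(z))\cap V(X_w)$.

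Take any other bag $D$, and any $y\in D$ outside $X_v\cup X_w$. Definition~\ref{def:equivalence} gives $zy\in E(G)$ if and only if $\alpha(z)y\in E(G)$. If $y\in V(X_w)$, the only possible edges from $X_v$ are to ancestors common to both blocks, and siblings in a nice decomposition have no edges between their subtrees. Hence the adjacency pattern of $z$ toward every vertex outside $X_v$ coincides with that of $\alpha(z)$, apart from edges to $X_w$ itself.

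The remaining case is pairs $D$, $B(\alpha(z))$ that both meet $V(X_w)$. By the first bullet, every such pair is already joined by a symmetric red arc or not at all. Adding $X_v$-copies to both bags can only create a red arc where some mixed adjacency already occurs among the $X_w$-copies, and that adjacency was already witnessed in the start trigraph. Black edges cannot arise, since none existed and adding vertices only destroys completeness. Thus $\seg^\textup{end}=\seg^\textup{start}-\{B(x)\}$.

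For the intermediate bound, fix a middle trigraph. It consists of the start trigraph with some $X_v$-bags absorbed into their $w$-counterparts, the rest still present. I would bound red out-degrees separately for three kinds of vertices.

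\emph{Vertices outside the $X_v$-bags.} By the argument above, absorbing a twin copy does not change any arc from such a vertex to a merged bag. The only new out-arcs go to remaining unabsorbed $X_v$-bags $B(z')$. Such an arc exists only if the corresponding arc to $B(\alpha(z'))$ was present at the start. So the out-degree is at most the start out-degree counted twice (original plus copies), which is at most $2\cdot\otww(\seg^\textup{start})$.

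\emph{Absorbed bags.} Their arcs are the same as in $\seg^\textup{end}$, except for arcs toward still-present $X_v$-bags, which again mirror their $X_w$-twins. This also gives at most a doubling.

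\emph{Unabsorbed $X_v$-bags.} Their out-arcs mirror those of $B(\alpha(z))$ in the start trigraph. Arcs toward other $X_v$-bags mirror arcs toward $X_w$-bags, so these are again covered by the doubling.

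I expect the main obstacle to be this intermediate count. The difficulty is making precise that each red out-arc in a middle trigraph can be charged injectively to at most two out-arcs (an original one and its $\alpha$-image) of the same vertex, or its twin, in $\seg^\textup{start}$. The charging must also respect the orientation subtleties where arcs between a partially merged bag and its own twin may appear. I would handle that last case using the symmetry of red arcs inside $X_w$ from the first bullet.
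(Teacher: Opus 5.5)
Your proposal is correct and follows the paper's argument: you mirror adjacencies through $\alpha$, use that sibling subtrees have no edges between them and that bags meeting $V(X_w)$ carry no black edges and only symmetric red arcs, and get the factor $2$ by charging each red arc toward a still-unabsorbed $X_v$-bag to the arc toward its $X_w$-twin. The only difference is organisational. The paper disposes of every vertex not produced by a contraction using the fact that oriented contractions never increase red out-degree, which makes your first and third cases immediate, and then analyses each contraction as if it were performed first; you instead bound an arbitrary intermediate trigraph directly, which spares you the paper's claim that ``performing a contraction first is the worst case''.
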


\begin{proof}
Let us consider a trigraph $\seg^\textup{mid}$. Recall that with oriented twin-width, when merging two vertices together, the red out-degree of all other vertices can only decrease or stay the same, so we simply need to make sure that when merging a pair, the resulting vertex has red out-degree at most $2\cdot \otww(\seg^\emph{start})$. By definition of a segment, all contractions that we perform are disjoint. Combining these two facts, for each performed contraction, the highest possible red out-degree that it could create is if we start with this contraction---in particular, if a contraction creates a vertex with red out-degree $\zeta$, then performing the same contraction first in $Y$ would create a vertex with red out-degree at least $\zeta$. Thus, we only need to prove that performing any of the contractions in $Y$ first does not create a vertex with red out-degree exceeding the desired bound. 

To this end, observe that for such a pair $a_v, a_w \in V(\seg^\emph{start})$, where $a_v$ (resp. $a_w$) is in the subtrigraph induced by $X_v$ (resp. $X_w$) in $\seg^\emph{start}$, the neighborhood of $a_v$ can be partitioned into two categories: common neighbors with $a_w$, and vertices belonging to the subtrigraph induced by $X_v$. Indeed, this follows from the fact that $X_v$ and $X_w$ are twin-blocks.
For a neighbor $x$ of the first category, we can moreover say that if $a_v$ has a red arc to $x$, so will $a_w$; this follows from the fact that there is a pair of vertices from $G$ that lie in $a_v$ and are responsible for this red arc, and this pair has a corresponding ``mirrored'' pair of vertices from $G$ that lie in $a_w$ which will create a red arc as well.
 
For a neighbor $x$ of the second category, we know for the same reason that $a_w$ is a neighbor of the vertex $x'$ in the subtrigraph induced by $X_w$ which corresponds to $x$.
 Since all edges are red bi-directional arcs within the graph induced by $X_w$, $x'$ is a red neighbor of $a_w$.
Moreover, $a_w$ cannot have any black neighbor which is not shared with $a_v$. Because of these observations, when contracting $a_v$ and $a_w$, the red out-neighbors of the resulting vertex $a_\emph{new}$ are precisely the red (out-)neighbors of $a_w$ and the neighbors of $a_v$ in $V(X_v)$. However, this second category is in bijection with red (out-)neighbors of $a_w$, making the total red out-degree of 
 $a_\emph{new}$ upper-bounded by twice the red out-degree of $a_w$, i.e. by $2\cdot otww(\seg^\emph{start})$.

To observe that $\seg^\emph{end}=\seg^\emph{start}-\{B(x)~|~x\in V(X_v)\}$, it suffices to see that after each contraction $a_v$ with $a_w$, the resulting vertex $a_\emph{new}$ has exactly the same connections to $\seg^\emph{start}-\{B(x)~|~x\in V(X_v)\}$ as $a_w$ in $\seg^\emph{start}$, and additional red arcs to certain vertices of $\{B(x)~|~x\in V(X_v)\}$ which were not yet contracted in the segment. However, each such arc is mirrored by an arc from $a_\emph{new}$ to the corresponding vertex in the subgraph induced by $X_w$ in $\seg^\emph{start}$. In particular, after the last contraction in $Y$, all such additional red arcs will disappear and the 
  remaining connections for $a_\emph{new}$ are at that point identical to those of $a_w$ in $\seg^\emph{start}-\{B(x)~|~x\in V(X_v)\}$. 
\end{proof}

For each $X_v$-merge $Y$, we define its \emph{internal part} $Y^\text{int}$ as the subsegment obtained by removing the first and last trigraph of $Y$.
We now show that the above notion of $X_v$-merges has two useful properties: they are unique (in the sense of Lemma~\ref{lem:mergeunique}) and they do not interfere with each other (in the sense of Lemma~\ref{lem:mergeseparate}).

\begin{lemma}
\label{lem:mergeunique}
Let $G$ be a graph, $T$ a treedepth decomposition of $G$ and $\C$ a contraction sequence for $G$. Each vertex $v\in V(G)$ admits at most a single $X_v$-merge in $\C$.
\end{lemma}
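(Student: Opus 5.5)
The natural approach is by contradiction. Suppose $\C$ contains two distinct $X_v$-merges $\seg_1$ and $\seg_2$, with partner siblings $w_1$ and $w_2$ (possibly equal), and assume without loss of generality that $\seg_1$ starts no later than $\seg_2$. The key observation is that the last condition of Definition~\ref{def:mergeseg} is irreversible. Once an $X_v$-merge ends, every vertex of $X_v$ shares a bag with some vertex of $X_{w_1}$. Bags only grow along $\C$, so this persists in every later trigraph.

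\textbf{Case 1: $\seg_2$ begins strictly after the first trigraph of $\seg_1$.} Here I would show that some vertex of $X_v$ already shares a bag with a vertex outside $X_v$ at the start of $\seg_2$. This contradicts the condition that in the first trigraph of a merge, vertices of $X_v$ lie in bags only with other vertices of $X_v$.

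\begin{itemize}
\item If $\seg_2$ starts after the end of $\seg_1$, this is immediate from the irreversibility observation.
\item If $\seg_2$ starts strictly inside $\seg_1$, then at least one contraction of $\seg_1$ has already been performed. By the third condition of Definition~\ref{def:mergeseg}, that contraction merges a vertex containing part of $X_v$ with its $\alpha$-image containing part of $X_{w_1}$. So again some bag mixes $X_v$ with $X_{w_1}$.
\end{itemize}

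\textbf{Case 2: $\seg_1$ and $\seg_2$ start at the same trigraph.} I would argue that a segment is determined by its first trigraph and the fact that it consists exactly of the merge contractions.

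Each contraction in a merge involves bags of $X_v$. Moreover, the segment ends at the trigraph in which every $X_v$-vertex has been contracted with an $X_w$-vertex. Since contractions are disjoint pairings of $X_v$-bags with their $\alpha$-images, the segment must end exactly after all $X_v$-bags have been absorbed, that is, after the same number of steps of $\C$. Hence $\seg_1=\seg_2$ as segments of $\C$.

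If $w_1\neq w_2$, I would show they agree anyway. After the first contraction, the vertex containing part of $X_v$ is merged into a bag together with vertices of $X_{w_1}$, and this is a single fixed step of $\C$. The bag-matching condition then identifies this bag as also belonging to $X_{w_2}$. For uniqueness of the segment itself, however, this does not matter: the lemma only claims a single segment.

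\textbf{Expected obstacle.} The main obstacle is pinning down the endpoints of a merge precisely, since the definition does not explicitly forbid padding. One could try to extend a merge by an extra trigraph at its start or end. I expect to rule this out as follows.

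\begin{itemize}
\item At the start, the preceding contraction must itself be a merge contraction. Otherwise it would not belong to the segment.
\item If it is a merge contraction, then the earlier trigraph already has a mixed bag, which violates the first-trigraph condition.
\item At the end, the segment cannot continue past the moment all $X_v$-bags are absorbed. Any further contraction would no longer pair an $X_v$-bag with its image, which violates the third condition.
\end{itemize}

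If the definition is read as requiring the final contraction to be the one completing the absorption, this settles uniqueness.
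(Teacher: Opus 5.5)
Your proposal is correct and follows essentially the same route as the paper. Your Case~1 is a contradiction-form version of the paper's observation that the first contraction of any $X_v$-merge is the first contraction in $\C$ putting a vertex of $X_v$ into a bag with a vertex outside $X_v$, so all $X_v$-merges start at the same trigraph. Your Case~2 matches the paper's remark that the length of a merge is fixed by the number of bags containing vertices of $X_v$ in its first trigraph, so two merges with the same start coincide.
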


\begin{proof}
Let $\seg$ be an $X_v$-merge segment in $\C$. Then the first contraction in $\seg$ (after $\seg^\emph{start}$) has to be the first contraction in $\C$ of a bag containing vertices of $X_v$ with a bag containing vertices outside of $X_v$. Thus, all potential $X_v$-merges start at the same trigraph. Moreover, the length of the segment is uniquely determined by the number of bags containing vertices of $X_v$ in $\seg^\emph{start}$. Thus, any two $X_v$-merges in $\C$ would need to start at the same trigraph and have the same length---meaning that they are not distinct.
\end{proof}

\begin{lemma}
\label{lem:mergeseparate}
Let $G$ be a graph, $T$ a treedepth decomposition of $G$, $\C$ a contraction sequence for $G$ and $v,v'\in V(G)$. If $v$ and $v'$ admit an $X_v$-merge $Y$ and an $X_{v'}$-merge $Y'$, respectively, then $Y_\text{int}$ and $Y'_\text{int}$ do not intersect in $\C$, unless the following holds: $X_v \sim X_{v'}$ and $\seg=\seg'$.
\end{lemma}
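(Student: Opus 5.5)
Assume $Y^{\mathrm{int}}$ and $Y'^{\mathrm{int}}$ share a trigraph. Since both are segments (contiguous subsequences) of $\C$, they then also share a contraction step. The plan is to analyze any such common contraction and show it forces $X_v\sim X_{v'}$ and $Y=Y'$.

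\textbf{Step 1 (a shared contraction).} Every contraction of $Y$ merges a bag $B$ consisting only of vertices of $X_v$ with the corresponding bag of the twin-block $X_w$. This bag $B$ may also contain outside vertices. The same holds for $Y'$, with $X_{v'}$ and a sibling $X_{w'}$. Because the internal parts overlap, some single contraction $c$ lies in both merges. In $Y$, one side of $c$ is a bag contained in $V(X_v)$. In $Y'$, one side of $c$ is a bag contained in $V(X_{v'})$.

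\textbf{Step 2 (locating the bags).} I will use the standard structure of treedepth decompositions: $X_v$ and $X_{v'}$ are either disjoint or nested.
\begin{itemize}
\item Nested case, say $X_{v'}\subsetneq X_v$. By Lemma~\ref{lem:mergeunique}, the first contraction in $Y'$ is the first contraction in $\C$ that merges a bag meeting $X_{v'}$ with a bag meeting vertices outside $X_{v'}$. At the start of $Y$, however, bags meeting $X_v$ contain only vertices of $X_v$, and $Y$ eventually merges every such bag with $X_w$.
\begin{itemize}
\item If $Y'$ had started before $Y$, then at the start of $Y$ some bag would mix $X_{v'}$ with $X_{w'}\subseteq X_v$. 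This is allowed, so I must compare lengths instead. The contraction $c$ merges a bag inside $X_{v'}$ with a bag meeting $X_{w'}$. In $Y$, the bag meeting $X_{w'}$ must be a pure $X_v$-bag, and the other side must meet $X_w$. This is a contradiction, since the other side lies inside $X_{v'}\subseteq X_v$.
\item If $Y'$ started after $Y$ started, the contradiction comes from the same comparison of the two sides of $c$.
\end{itemize}
\item Disjoint case. The two sides of $c$ are a pure $X_v$-bag plus a bag meeting $X_w$, and also a pure $X_{v'}$-bag plus a bag meeting $X_{w'}$. Since $X_v\cap X_{v'}=\emptyset$, the pure $X_v$-bag must be the bag meeting $X_{w'}$. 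Hence $X_v$ and $X_{w'}$ intersect, and nestedness together with the equal sizes of twin-blocks gives $X_v=X_{w'}$. Symmetrically, $X_{v'}=X_w$. So $v$ and $v'$ are siblings and $X_v\sim X_{v'}$.
\end{itemize}

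\textbf{Step 3 (identifying the segments).} With $w=v'$ and $w'=v$, both $Y$ and $Y'$ contract the bags of $X_v$ with those of $X_{v'}$. The first contraction of each is the first contraction in $\C$ merging a bag meeting $X_v$ with one meeting $X_{v'}$. Both lengths equal the number of bags on each side, which are in bijection via $\alpha$. The argument of Lemma~\ref{lem:mergeunique} then gives $Y=Y'$.

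\textbf{Main obstacle.} The hardest step is the nested case. Showing that the purity conditions at the start of each merge are incompatible requires carefully tracking which side of the shared contraction meets which subtree. I would first handle it through the purity condition at the start of $Y$ and the requirement that every contraction in $Y$ involves $X_w$.
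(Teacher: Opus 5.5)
Your proof is correct and is essentially the paper's argument. You fix a contraction $c$ shared by the two internal parts and use the laminar structure of the subtrees $X_u$ to show that $X_{v'}$ must be the twin partner $X_w$ of $X_v$, and $X_v$ that of $X_{v'}$; both merges then start at the first contraction joining $X_v$ with $X_{v'}$ and have the same length. Your nested/disjoint split is a cleaner organization of the paper's split on whether the non-$X_v$ side of $c$ lies inside a single twin-block.

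A few minor points to tidy:
\begin{itemize}
\item The case $v=v'$ is not covered by your split; it is immediate from Lemma~\ref{lem:mergeunique}.
\item In the nested case, the timing and ``compare lengths'' discussion is unnecessary. Both sides of $c$ meet $X_v$, whereas every contraction of $Y$ has one side disjoint from $X_v$.
\item The step $X_v=X_{w'}$ should explicitly combine the two inclusions $X_v\subseteq X_{w'}$ and $X_{v'}\subseteq X_w$ (each forced by the disjointness of $X_v$ and $X_{v'}$) with $|X_v|=|X_w|$ and $|X_{v'}|=|X_{w'}|$.
\end{itemize}
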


\begin{proof}
Let us consider any contraction happening during $\seg$. By the definition of an $X_v$-merge, it has to lie between a bag containing only vertices of $X_v$, and a bag containing no vertices of $X_v$, but at least some vertex of a twin-block of $X_v$. In the first case, the vertices of the second bag contain exclusively vertices from a twin-block $X_{v'}$ of $X_v$. This means that the only other merge segment that could possibly be happening during this contraction is an $X_{v'}$-merge $\seg'$. Indeed, any candidate needs to contain the vertices of the bag, but no vertices of the first bag (containing vertices of $X_v$), and only $X_{v'}$ satisfies both conditions. If this $\seg'$ indeed exists, then it is easy to verify that it has to start and end exactly as $\seg$, since then $\seg=\seg'$ consists entirely of contractions that identify bags containing vertices of $X_v$ and bags containing their equivalents in $X_{v'}$.

In the second case, there is no twin-block of $X_v$ containing all the vertices of the second bag. Since any vertex $v'$ such that $X_{v'}$ contains all the vertices of the bag would be an ancestor of $v$, we have $X_v\subset X_{v'}$, and thus the contraction is within $X_{v'}$, contradicting the fact that it would be part of a $X_{v'}$-merge.

Note that, if the internal parts of two segments were to intersect, this would imply the existence of at least three consecutive trigraphs both belonging to both segments. However, this would imply in particular that these two segments share a contraction, and we conclude since we analyzed precisely this case.
\end{proof}

In particular, we note that the only distinction between the trigraph at the end of $Y$ and the trigraph at the beginning of $Y$ is that all vertices of $T(v)$ have been removed. This will allow us to use $X_v$ merges as a viable reduction rule when parameterized by treedepth---however, care must be given to the fact that during the merge, the oriented twin-width may increase by a factor of at most $2$. Towards formally treating this, we define the following refinement of oriented twin-width that will serve as an invariant during the recursive application of our reduction rule.

\begin{definition}
Let $G$ be a graph, $T$ a treedepth decomposition of $G$, and $t$ a positive integer.
We say that a (potentially partial) contraction sequence $\C$ for $(G, T)$ has \emph{$(t, 2t)$-oriented twin-width} if:
\begin{itemize}
\item the width of $\C$ is at most $2t$, and
\item each trigraph in $\C$ of width larger than $t$ lies in the internal part of an $X_v$-merge for some $v\in V(G)$.
 \end{itemize}
\end{definition}

\begin{redrule}\label{red:rr}
Let $(G, T)$ be a graph with its nice treedepth decomposition of depth $p$.
Let $u\in T$ be not yet processed such that $\forall v\in V(T_u)\setminus \{u\}$, $v$ is processed. Let $d$ be the depth of $u$. For a connected component $H$ in $\con_u$ such that the equivalence class of $H$ for $\sim_u$ is strictly larger than $h(d,p)=2^{2^{4p}\cdot b(d+1,p)}$, delete $H$ from both the graph and the treedepth decomposition to obtain a pair $(G', T')$.
\end{redrule}

Observe that by the choice of $H$, if $G$ was connected then the graph $G'$ obtained by \cref{red:rr} remains connected. 
The cornerstone of our proof will be the following lemma, which states that \cref{red:rr} is a safe reduction rule to use and that we prove in \cref{ss:mainlemma}. 

\begin{lemma}\label{lem:rr}
Let the pair $(G, T)$ (resp.\ $ (G',T')$) contain the graph and the associated decomposition before (resp.\ after) applying Reduction Rule~\ref{red:rr}. If $(G', T')$ admits a $(t, 2t)$-oriented twin-width contraction sequence $\C'$, then $(G, T)$ does as well and we can construct the corresponding sequence $\C$ from $\C'$ in time $2\uparrow \uparrow \bigoh(p) \cdot |V(G)|^{\bigoh (1)}$.
\end{lemma}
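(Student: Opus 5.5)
Given $\C'$ for $(G',T')$ of $(t,2t)$-oriented twin-width, we construct $\C$ for $(G,T)$ by reinserting the deleted component $H$ (rooted at a child $v$ of $u$) and eliminating it through an $X_v$-merge into a twin-block. The plan is to find, in $G'$, a twin-block $X_w$ with $X_w\sim_u H$ (there are more than $h(d,p)$ of them) and a trigraph $G'_\iota$ in $\C'$ at which $X_w$ is in a ``clean'' state. Concretely, in $G'_\iota$ the bags meeting $V(X_w)$ induce a subtrigraph with no black edges and all red arcs symmetric. We then define $\C$ as follows. First, perform on $V(H)$ the mirror image under $\alpha^{-1}$ of the contractions that $\C'$ performs inside $X_w$ up to step $\iota$, restricted to $V(X_w)$ (i.e.\ $\C'[V(X_w)]$ up to $\iota$), interleaved at the corresponding times. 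Next, insert an $X_v$-merge of $H$ into $X_w$ at position $\iota$. Finally, continue with the remainder of $\C'$. By Lemma~\ref{lem:otwwseg}, after the merge we recover exactly $G'_\iota$, so the tail of $\C'$ applies verbatim, and widths inside the merge are at most $2\cdot t$. We do require $G'_\iota$ itself to have width at most $t$, i.e.\ $\iota$ does not lie in the internal part of another merge. By Lemma~\ref{lem:mergeseparate} and Lemma~\ref{lem:mergeunique}, the new merge's internal part is disjoint from existing ones, so the $(t,2t)$ condition is maintained.

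The key steps are as follows. \emph{(i)} Show that shadowing $X_w$'s internal contractions on $H$ before $\iota$ keeps width at most $t$ outside merge internals. Out-degrees of vertices outside $H$ do not increase from the copy, since an arc $(x,B)$ with $B\subseteq V(H)$ exists only if $x$ already has a red arc to the mirrored bag in $X_w$. This holds because $H$ and $X_w$ have identical neighbourhoods outside, and $H$'s bags stay inside $V(H)$, so the bags of $x$ meet $X_w$ at most as they meet $H$; one must check that bags of $x$ containing vertices of $X_w$ do not create new arcs, using that $H$ and $X_w$ are nonadjacent. Bags in $H$ have out-degree at most that of their mirror in the restriction $\C'[V(X_w)\cup N]$, which is bounded by the corresponding width in $\C'$. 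This relies on the oriented setting, in which contractions never raise other vertices' out-degree. Moreover, whenever the mirrored trigraph lies in a merge internal part of some $X_{v'}$, the same trigraph in $\C$ lies in that internal part as well. \emph{(ii)} Verify the three bag conditions of Definition~\ref{def:mergeseg} at $\iota$. The bag-matching condition holds by construction of the mirror; $H$ is isolated in its bags by construction.

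\emph{(iii)}, the main obstacle, is the existence of a suitable pair $(w,\iota)$. I would try a pigeonhole argument over the large class. Let $\iota_w$ be the first time a bag of $X_w$ is contracted with a vertex outside $X_w$. Since $u$ is adjacent to at most the $p$ ancestors and $X_w$ has at most $b(d+1,p)$ vertices, the state of $X_w$ just before $\iota_w$, namely its partition into bags together with the black/red pattern inside it and toward the at most $p$ ancestors, takes at most $2^{2^{4p}\cdot b(d+1,p)}$ values up to $\alpha$; this is the quantity matching $h(d,p)$. Among more than $h(d,p)$ twin-blocks, I would consider the first one $w_1$ to leave its isolated state, at a contraction merging part of $X_{w_1}$ with bag $B$. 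If black edges remained inside $X_{w_1}$ or the state were asymmetric, pairing with another twin-block in the same state would force red out-degree exceeding $t$ on ancestors or on many twin-blocks. I expect that an alternative, cleaner choice is $\iota$ at the very start of $\C'$ paired with a twin-block that is fully contracted to a single vertex before any external contraction, though this need not exist. Making this case analysis rigorous is where I expect most of the work. The running time follows since everything is a linear scan of $\C'$ plus isomorphism tests on blocks of size $b(d+1,p)$.
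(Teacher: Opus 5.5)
Your proposal has a genuine gap. It sits in step (i) as well as in step (iii).

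\textbf{Step (i).} When you reinsert $H$ before time $\iota$, the red arcs among $G'$-bags are exactly those of $\C'$. So any red arc $(x,B)$ from a $G'$-bag $x$ into an $H$-bag $B$ is an \emph{extra} out-arc of $x$. Exhibiting a mirrored arc from $x$ towards $X_w$ in $\C'$ does not stop the out-degree of $x$ from growing. At best it caps the growth at a factor of $2$, possibly at trigraphs that are not internal to any merge, and then the $(t,2t)$ invariant fails. This is exactly the snowballing the oriented setting is meant to avoid.

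The mirror claim is also false in general. Suppose $\beta(x)$ contains a neighbour $a\notin V(H)$ of some $b_0\in V(H)$, together with a vertex $z\in V(X_w)$ adjacent to $\alpha(b_0)$. Then $(x,\{b_0\})$ is red, because $z$ is never adjacent to $H$. Yet $x$ can be black towards $\alpha(b_0)$.

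The missing idea is to insert $H$ only while the trigraph is \emph{$H$-indifferent} (Definition~\ref{def:safe}), i.e.\ no $G'$-bag has a red arc into $V(H)$. Then:
\begin{itemize}
\item $G'$-bags gain nothing;
\item later contractions inside $H$ cannot create such arcs, since an oriented contraction never raises the out-degree of other vertices;
\item only the $H$-bags need your mirror bound.
\end{itemize}
This forces $\iota$ to be at most the last $H$-indifferent trigraph of $\ext(\C',H)$.

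A related problem: interleaving the mirrored contractions with those of $\C'$ contraction by contraction splits the $X_{v'}$-merges of $\C'$ lying inside $X_w$ into non-contiguous pieces. Their high-width trigraphs are then no longer internal to a merge in the sense of Definition~\ref{def:mergeseg}. The paper avoids this by running all of $\C'$ up to that trigraph first, and only then the mirrored sequence.

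\textbf{Step (iii).} You leave this open. Given the above, it must show that at the last $H$-indifferent trigraph some twin-block is clean, and that this trigraph is not internal to any merge. Your heuristic does not give cleanliness. It looks at the first time $X_w$ is contracted with an outside bag. But a sequence that merges two twin-blocks vertex by vertex in parallel leaves $X_w$ uncontracted, with black edges, at that moment, and no excessive red degree is forced.

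The paper argues as follows.
\begin{itemize}
\item At the first non-indifferent trigraph, some bag $v$ contains a neighbour and a non-neighbour of some $u\in V(H)$. Hence $v$ has a red out-arc to the copy of $u$ in all but one twin-block.
\item The out-degree bound concentrates these copies into few bags. Propagating along the BFS layers of $H$ (diameter at most $2^{p_H}-1$) with repeated pigeonholing, which is where $h(d,p)$ enters, yields two twin-blocks $H',H''$ that are \emph{merged}: each $z\in V(H')$ shares a bag with $\alpha(z)$ (Lemma~\ref{lem:merge}).
\item Being merged with a non-adjacent copy, together with indifference (which keeps $N(H)$ out of the bags of $H'$), is what makes $H'$ clean.
\item Finally, this merging already holds one step earlier, and the last indifferent trigraph is not internal to any merge. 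Neither the final contraction nor any merge of other twin-blocks can touch $N(H)$ (Lemmas~\ref{lem:Safe} and~\ref{lem:notinmerge}).
\end{itemize}
Without $H$-indifference and this merged-pair argument, your proposal has no valid choice of $(w,\iota)$.
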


For now, we proceed towards establishing our results conditioned on the correctness of Lemma~\ref{lem:rr}. The following statement summarizes the outcome after the exhaustive application of Reduction Rule~\ref{red:rr}.

\begin{lemma}
\label{lem:exhaust}
Let $(G, T)$ be a graph and its nice treedepth decomposition with depth $p$.
Applying Reduction Rule~\ref{red:rr} exhaustively can be done in polynomial time, and in the resulting pair $(G^*, T^*)$ the graph $G^*$ has size upper bounded by $(g(1,p)+1)^p=2\uparrow\uparrow \bigoh(p)$.
Moreover, if $(G^*,T^*)$ has a $(t, 2t)$-oriented twin-width contraction sequence $\C^*$, $(G, T)$ does as well and we can construct the corresponding sequence $\C$ from $\C^*$ in time at most $2\uparrow \uparrow \bigoh(p) \cdot |V(G)|^{\bigoh (1)}$.
\end{lemma}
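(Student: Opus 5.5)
We process $T$ bottom-up and apply Reduction Rule~\ref{red:rr} at the lowest non-processed vertices, arguing termination, running time, size bound and the lifting of contraction sequences separately.

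\textbf{Algorithm.} Process the vertices of $T$ in order of non-increasing depth. Consider a vertex $u$ at depth $d$ all of whose strict descendants are processed (leaves trivially qualify, since $|\con_u|=0\le g(p,p)$). We compute the classes of $\sim_u$ among $\con_u$. Each component $H\in\con_u$ is $X_w$ for a processed child $w$ at depth $d+1$, so Lemma~\ref{lem:size_processed} gives $|V(H)|\le b(d+1,p)$. Twin-block testing therefore costs $\bigoh(b(d+1,p)^{b(d+1,p)}\cdot p)$ per pair, which is a function of $p$ times a polynomial.

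While some class has more than $h(d,p)$ members, we delete one member. Every deletion removes at least one vertex, so there are at most $|V(G)|$ applications, each found in polynomial time for fixed $p$. This gives the claimed polynomial running time with a $p$-dependent factor. Deleting a whole component $X_w$ keeps the decomposition nice and does not change $\sim_u$ among the survivors: the twin condition concerns only adjacencies to vertices outside the pair, and deleting vertices preserves it. Afterwards every class has size at most $h(d,p)$.

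\textbf{Counting classes.} The number of classes is bounded as follows. A class is determined by an isomorphism type on at most $b(d+1,p)$ vertices, of which there are at most $b\cdot 2^{b^2}$, together with the adjacency pattern of each vertex to the at most $d\le p$ ancestors of $u$ and to $u$ itself. These are the only outside neighbours a component of $X_u\setminus\{u\}$ can have, since its vertices are adjacent only to ancestors. This gives at most $2^{p\cdot b}$ patterns. Hence
\[
|\con_u|\le b\,2^{b^2}\,2^{pb}\,h(d,p)=b\,2^{b^2+pb}\,2^{2^{4p}b},
\]
where $b=b(d+1,p)=(g(d+1,p)+1)^p$. We then check that this is at most $g(d,p)=2^{g(d+1,p)^{6p}}$. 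The exponent on the left is at most $\bigoh(b^2+2^{4p}b)\le (g(d+1,p)+1)^{2p}\cdot 2^{4p+2}$. Since $g(d+1,p)\ge 2^p$, the factor $2^{4p+2}$ is absorbed into roughly $g^{4p}$ for $p\ge1$, up to the small cases handled by direct checks. So $u$ becomes processed. By induction the root is processed, and Lemma~\ref{lem:size_processed} gives $|V(G^*)|\le (g(1,p)+1)^p$. This arithmetic is the one step where I expect friction with the constants: at very small $p$ the comparison may need care. If the stated bound is too tight there, I would argue that $g$ has slack, e.g.\ use $g(d+1,p)\ge 2^p\ge2$, so that $(g+1)^{2p}\le g^{4p}$ when $g\ge 2$.

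\textbf{Lifting.} Let $(G,T)=(G_0,T_0),(G_1,T_1),\dots,(G_m,T_m)=(G^*,T^*)$ be the sequence of applications, with $m\le |V(G)|$. Starting from $\C^*$, we apply Lemma~\ref{lem:rr} backwards $m$ times, obtaining a $(t,2t)$-oriented twin-width sequence for each $(G_i,T_i)$. Crucially, the invariant is exactly the hypothesis of Lemma~\ref{lem:rr}, so no error accumulates across steps. Each step costs $2\uparrow\uparrow\bigoh(p)\cdot|V(G)|^{\bigoh(1)}$, and $m\le|V(G)|$ such steps give the stated total time.

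One detail must be checked: each application in the chain must genuinely be an instance of Reduction Rule~\ref{red:rr} on the current pair. This holds by construction, because at each step we chose $u$ unprocessed with all descendants processed and a class exceeding $h(d,p)$. It also uses that deletions elsewhere in $T$ do not unprocess already processed vertices. Deleting components only decreases $|\con_{u'}|$, so processed vertices stay processed.
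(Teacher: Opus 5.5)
Your proposal is correct and takes essentially the same route as the paper. You bound the number of $\sim_u$-classes by isomorphism types on at most $b(d+1,p)$ vertices times adjacency patterns to the root-to-$u$ path, multiply by $h(d,p)$ to get $|\con_u|\le g(d,p)$, conclude with Lemma~\ref{lem:size_processed}, and lift by applying Lemma~\ref{lem:rr} to the applications in reverse order. The paper phrases the processing step as a contradiction at a deepest unprocessed vertex of $T^*$, which covers every exhaustive application order, rather than your explicit bottom-up schedule.

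One small fix: your inner loop should run only while $u$ is still unprocessed. Since $h(d,p)<g(d,p)$, a class can exceed $h(d,p)$ even when $|\con_u|\le g(d,p)$. Deletions at an already processed $u$ are not instances of Reduction Rule~\ref{red:rr}, contrary to what your last paragraph asserts.
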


\begin{proof}
Let $(G,T)$ be a graph and its nice treedepth decomposition of depth $p$. 
Let $(G^*, T^*)$ be the pair obtained after applying \cref{red:rr} exhaustively. First of all, each application of the reduction rule deletes some vertices from the graph, thus we know that it is applied at most $n$ times. We will prove by contradiction that all vertices of $(G^*, T^*)$ are processed. Let $v$ be a vertex in $T^*$ that is not processed such that all other vertices in $T^*_v$ are processed. Note that any vertex achieving maximal depth among the not processed vertices satisfies this condition. Let us note $d$ the depth of $v$. 

By \cref{lem:size_processed}, all subgraphs of $G^*$ corresponding in $T^*$ to the subtrees rooted at children of $v$ have size at most $b(d+1,p)=(g(d+1, p)+1)^p$ since their roots are processed and at depth $d+1$. Moreover, these same subgraphs have connections to at most $p-d$ vertices outside of themselves: by definition of $T^*$, they can only be adjacent to vertices on the path from the root of $T^*$ to $v$ (included). 
The number of equivalence classes for $\sim_v$ is then at most $y(d,p)\coloneq 2^{(p-d)\cdot b(d+1, p)} \cdot b(d+1, p) \cdot 2^{b(d+1, p)^2}$: the first term comes from the possible connections to the rest of the graph, the rest upper bounds the number of non-isomorphic graphs of size up to $b(d+1, p)$, see~\cref{sec:prelims}.
If \cref{red:rr} cannot be applied further, each of these equivalence class has at most $h(d,p)=2^{2^{4p}\cdot b(d+1,p)}$ elements. Multiplying these two values together to compute the total number $N_v$ of children of $v$ in $T^*$, we obtain $N_v \leq g(d,p)$.

Indeed, let $x\coloneq g(d+1, p)$, and let us rewrite the computation as follows: $y(d,p)= 2^{(p-d)\cdot(x+1)^p}\cdot (x+1)^p \cdot 2^{(x+1)^{2p}}$ and $h(d, p)=2^{2^{4p}\cdot (x+1)^p}$. We observe that, trivially upper-bounding $(x+1)^p$ by $2^{p\cdot (x+1)}$, we obtain $N_v\leq 2^{(x+1)^p \cdot ((p-d+1) + (x+1)^p +2^{4p})}$. We then upper bound separately $p-d+1 \leq p \leq (x+1)^{4p}$, $(x+1)^p\leq (x+1)^{4p}$ and $2^{4p}\leq (x+1)^{4p}$, which gives the following: $N_v \leq  2^{3(x+1)^{5p}}\leq 2^{(x+1)^{6p}}= g(d, p)$, with the last transformation relying on $x\geq 2$ which follows from \cref{def:branchingfactor}. 

The above yields the desired contradiction to $v$ not being processed, since it exactly satisfies all conditions listed in \cref{def:branchingfactor}. We have thus proved that all vertices of $(G^*, T^*)$ are processed, and in particular that $G^*$ contains at most $2 \uparrow \uparrow \bigoh(p)$ vertices. 

Finally, we use \cref{lem:rr} on each application of \cref{red:rr} in a \emph{first-in last-out} order to construct from $\C^*$ a construction sequence $\C$ with $(t, 2t)$-oriented width for $(G,T)$. The running time bound follows from \cref{lem:rr} as well.
\end{proof}

Using Lemma~\ref{lem:exhaust} to guarantee the desired properties during the exhaustive application of Reduction Rule~\ref{red:rr}, we obtain the claimed tractability result for oriented twin-width:

\begin{theorem} 
\label{thm:param-by-TD}
$2$-approximating oriented twin-width is \FPT parameterized by treedepth. Specifically, given a graph $G$ on $n$ vertices and treedepth $p$, there exist a computable function $\lambda$ and an algorithm running in time $\lambda(p)\cdot n^{\bigoh(1)}$ which outputs a contraction sequence for $G$ of oriented width at most $2\cdot\otww(G)$.
\end{theorem}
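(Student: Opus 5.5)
The algorithm has four stages: compute a decomposition, reduce to a kernel, solve the kernel exactly, and lift the solution back. The key point is that the $(t,2t)$-oriented twin-width invariant of Lemma~\ref{lem:exhaust} already yields the factor $2$.

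\emph{Preprocessing.} We handle each connected component of $G$ separately. The oriented twin-width of $G$ is the maximum over its components, and the sequences for the components can be concatenated by finally contracting the single remaining vertices. Merging two single vertices from different components creates red out-degree at most the number of remaining component representatives minus one. This is a mild point: I would avoid it by first contracting each component down to one vertex. Contracting vertices of distinct components whose bags have no edges between them produces no red arcs, since no vertex has both an adjacent and a non-adjacent partner. For a connected component, we compute a nice treedepth decomposition $T$ of depth $p$ in time $2^{\bigoh(p^2)}\cdot n$, as recalled in Section~\ref{sec:prelims}.

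\emph{Kernelization.} We apply Reduction Rule~\ref{red:rr} exhaustively. By Lemma~\ref{lem:exhaust} this takes polynomial time and produces $(G^*,T^*)$ with $|V(G^*)|\le 2\uparrow\uparrow\bigoh(p)$. On $G^*$ we compute an optimal oriented contraction sequence $\C^*$ by brute force, enumerating all contraction orders in time $2^{\bigoh(n^*\log n^*)}$ in analogy to Fact~\ref{fact:computecontract}. Set $t:=\otww(G^*)$. Since $G^*$ is an induced subgraph of $G$, restricting any oriented contraction sequence of $G$ to $V(G^*)$ does not increase the oriented width. Indeed, red arcs in the restriction are witnessed by vertices of $G^*$, which lie in the corresponding original bags. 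Hence $t\le \otww(G)$.

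\emph{Lifting.} The sequence $\C^*$ has width $t$, so it is trivially a $(t,2t)$-oriented twin-width sequence for $(G^*,T^*)$: no trigraph exceeds $t$. Lemma~\ref{lem:exhaust} then gives a $(t,2t)$-oriented twin-width sequence $\C$ for $(G,T)$, computed within time $2\uparrow\uparrow\bigoh(p)\cdot n^{\bigoh(1)}$. By definition its width is at most $2t\le 2\,\otww(G)$. The total running time is $\lambda(p)\cdot n^{\bigoh(1)}$ with $\lambda(p)=2\uparrow\uparrow\bigoh(p)$ plus the brute-force cost on the kernel.

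\emph{Main obstacle.} All the real difficulty sits in Lemma~\ref{lem:rr}, which we take as given. Within this theorem, the only steps needing care are the monotonicity of $\otww$ under taking induced subgraphs and the handling of multiple components.
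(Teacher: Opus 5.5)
Your proposal is correct and follows the paper's proof essentially step for step. The steps are:
\begin{itemize}
\item compute a nice treedepth decomposition;
\item apply Reduction Rule~\ref{red:rr} exhaustively via Lemma~\ref{lem:exhaust};
\item brute-force the kernel;
\item note that a width-$t$ sequence is trivially $(t,2t)$-oriented;
\item lift back;
\item conclude with $\otww(G^*)\le\otww(G)$.
\end{itemize}
Your extra care only makes explicit details the paper leaves implicit. You treat disconnected inputs directly, whereas the paper assumes connectivity without loss of generality. You also brute-force an optimal \emph{oriented} sequence on the kernel, while the paper cites Fact~\ref{fact:computecontract}, which is stated for ordinary twin-width; the oriented analogue is what $t^*\le\otww(G)$ actually needs.
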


\begin{proof}
Let $G$ be a graph on $n$ vertices, and with treedepth $p$.
We compute a nice treedepth decomposition $T$ of depth $p$ for $G$ in time $2^{\bigoh(p^2)}\cdot n$, see~\cite{ReidlRVS14}.
We then apply exhaustively Reduction Rule~\ref{red:rr} on the pair $(G, T)$ and obtain in polynomial time a pair $(G^*, T^*)$ such that $G^*$ has at most $n^*=2\uparrow \uparrow \bigoh(p)$ vertices, see Lemma~\ref{lem:exhaust}.

Now, we use Fact~\ref{fact:computecontract} to compute a contraction sequence $\C^*$ of oriented width $t^*$ for $G^*$ in time at most $2^{\bigoh(n^* \cdot \log(n^*)}=2\uparrow \uparrow \bigoh(p)$, and then use again Lemma~\ref{lem:exhaust} to create in polynomial time a contraction sequence $\C$ for $G$. Note that the lemma ensures that $\C$ is a $(t^*, 2t^*)$-oriented twin-width contraction. In particular, $\C$ has oriented width at most $2t^*$, and since $G^*$ is an induced subgraph of $G$, we have  that $t^*\leq t$ and $\C$ has oriented width at most $2t$.

Finally, it is easy to observe that the total running time is at most $\lambda(p)\cdot n^{\bigoh(1)}$, with $\lambda(p)=2\uparrow \uparrow \bigoh(p) \cdot$.
\end{proof}

At last, we are ready to prove our first main result:

\begin{theorem} 
\label{thm:mainone} 
There is an algorithm which takes as input an $n$-vertex graph $G$ of twin-width $t$ and treedepth $p$, runs in time at most $f(p)\cdot n^{\bigoh(1)}$
  and outputs a contraction sequence of width at most $q(t)$, for some computable functions $q$, $f$.
\end{theorem}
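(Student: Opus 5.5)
The plan is to chain Theorem~\ref{thm:param-by-TD} with Lemma~\ref{lem:translation}. First, we run the algorithm of Theorem~\ref{thm:param-by-TD} on $G$. In time $\lambda(p)\cdot n^{\bigoh(1)}$ it returns an oriented contraction sequence $\C$ of $G$ whose oriented width $k$ satisfies $k\le 2\cdot\otww(G)$. By Fact~\ref{fact:otwwtwwbound}, $\otww(G)\le\tww(G)=t$, so $k\le 2t$.

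Second, we feed $G$ and $\C$ into the fixed-parameter algorithm of Lemma~\ref{lem:translation}, parameterized by $k$. It outputs a (vanilla) contraction sequence $\C'$ of width at most $2^{2^{2^{\bigoh(k)}}}$. Taking $f(x)=2x$, which is computable and non-decreasing, the second part of Lemma~\ref{lem:translation} bounds this width by $q(t)\coloneq 2^{2^{2^{\bigoh(2t)}}}$, which is computable.

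The main point to check is the running time. Lemma~\ref{lem:translation} runs in time $h(k)\cdot n^{\bigoh(1)}$ for some computable $h$, which is FPT in $k$, not in $p$. We therefore need to bound $k$ by a function of $p$. I would use the fact that twin-width is bounded by a function of treedepth. More simply, we can bound $k$ directly: $k\le 2\otww(G)\le 2\tww(G)$, and the twin-width of a graph of treedepth $p$ is bounded by a function of $p$. For instance, treedepth $p$ implies pathwidth at most $p$, and known bounds give twin-width at most $2^{\bigoh(p)}$. Alternatively, it suffices to note that $G^*$ from Lemma~\ref{lem:exhaust} has at most $2\uparrow\uparrow\bigoh(p)$ vertices, so $k\le 2t^*\le 2|V(G^*)|$, which is also bounded by a function of $p$. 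This second route uses only earlier results of the paper and I would prefer it.

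With $k$ bounded by $\mu(p)$ for some computable $\mu$, the second step runs in time $h(\mu(p))\cdot n^{\bigoh(1)}$. The total running time is therefore $f(p)\cdot n^{\bigoh(1)}$ with $f(p)=\lambda(p)+h(\mu(p))$. The output width is $q(t)$, which completes the argument.
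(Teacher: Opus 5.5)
Your proposal is correct and follows the paper's proof. You run Theorem~\ref{thm:param-by-TD}, pass the resulting sequence of oriented width $k\le 2\otww(G)\le 2t$ to Lemma~\ref{lem:translation}, and make the running time FPT in $p$ by bounding $k$ via the fact that twin-width is bounded by a function of treedepth, exactly as the paper does. Your preferred alternative bound $k\le 2t^*\le 2|V(G^*)|\le 2\uparrow\uparrow\bigoh(p)$ is also valid, but it relies on the proof of Theorem~\ref{thm:param-by-TD} (the $(t^*,2t^*)$ guarantee from Lemma~\ref{lem:exhaust}) rather than only on its statement; also, when writing $h(\mu(p))$ you should take $h$ non-decreasing, or maximize $h$ over $k'\le\mu(p)$.
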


\begin{proof}
We first apply \cref{thm:param-by-TD} to obtain a contraction sequence $\C_o$ of oriented width $2 \cdot \otww(G)$. This takes $\lambda(p)\cdot n^{\bigoh(1)}$ time, and we can then use \cref{lem:translation} on $G$ and $\C_o$ together to obtain a contraction $\C$ for $G$ of width at most $q(t)=2^{2^{2^{\bigoh(t)}}}$. Since the transformation in \cref{lem:translation} runs in \FPT time parameterized by $p$, there is a computable function $\zeta$ such that it runs in $\zeta(t) \cdot n^{\bigoh(1)}$ time. Moreover, $t$ is upper bounded by a function of $p$~\cite{BonnetKTW22} and so there exists a function $\mu$ such that the transformation takes time at most $\mu(p)\cdot n^{\bigoh(1)}$.
The total running time is then simply $f(p)\cdot n^{\bigoh(1)}$, for $f(p)=\lambda(p)+\mu(p)$.
\end{proof}

\subsection{Proof of \cref{lem:rr}} \label{ss:mainlemma}

The main idea to prove \cref{lem:rr} is that we will integrate the deleted subgraph $H$ into $\C'$ in a very local manner. The high-level ideas underlying this integration are inspired by those previously used for approximating twin-width via vertex integrity~\cite{BalabanGR24IPEC}, but with significant distinctions caused by the difference in both the employed parameter (treedepth) and the computed measure (oriented twin-width).

Intuitively, we will first follow the sequence of contractions $\C'$ without interacting with $H$, until at some point we put $\C'$ on hold, identify a special twin-block $H'\sim H$, and \emph{insert} $H$ using $H'$. Namely, we will do contractions within $H$ to fit the current state of $H'$, and then use a $H$-merge to identify vertices of $H$ with vertices of $H'$. After this insertion, we can continue with the contractions of $\C'$ for the rest of the contraction. The main difficulties of this subsection are to identify the point where we stop following $\C'$ to insert $H$, and prove the existence of a twin-block $H'$ such that the obtained sequence satisfies our requirements.

Let the pair $(G, T)$ (resp.\ $ (G',T')$) contain the graph and the associated decomposition before (resp.\ after) applying Reduction Rule~\ref{red:rr}, and $H$ the subgraph of $G$ appearing in $T$ which was deleted to obtain $(G',T')$.
Let $\C'=(G'_0, G'_1, \dots)$ be a contraction sequence for $(G',T')$.

\begin{definition} \label{def:extension}
For a given trigraph $G'_i$ in $\C'$, the \emph{extension} $\ext(G'_i, H)$ is the trigraph obtained by applying the partitioning of vertices of $V(G')\cup V(H)$ following on $V(G')$ the partitioning corresponding to $G'_i$, and leaving all vertices in $V(H)$ in bags as singletons. This intuitively corresponds to the trigraph obtained from $G$ following the same contractions leading from $G'$ to $G'_i$.
We extend this notion to partial contraction sequences: $\ext(\C', H)=(\ext(G'_0, H), \ext(G'_1, H), \dots)$.
\end{definition}

Observe that $\ext(\C', H)$ and $\C'$ have the same length, and at the end of $\ext(\C', H)$, all vertices of $G'$ are merged into one bag, but the vertices of $H$ are still each in a separate bag.

\begin{definition}\label{def:safe}
Let the pair $(G, T)$ (resp.\ $ (G',T')$) contain the graph and the associated decomposition before (resp.\ after) applying Reduction Rule~\ref{red:rr}, $H$ be the subgraph of $G$ appearing in $T$ which was deleted to obtain $(G',T')$, and $\ext(\C', H)=(G=G_0, G_1, G_2, \dots)$ the extension of $\C'$ to $G$.
\begin{itemize}
\item We say that a trigraph $G_i$ in $\ext(\C', H)$ is $H$\emph{-indifferent} if there is no red arc from the bags containing vertices of $G'$ to the vertices of $H$.
\item We say that a trigraph $G_i$ in $\ext(\C', H)$ is $H$\emph{-safe} if: there exist $H',H''$ twin-blocks to $H$ which are \emph{merged} by $\C'$ in $G_i$---in particular, for each $u\in V(H')$, there is a vertex $v \in V(G_i)$ such that $\{u, \alpha(u) \}\subseteq \beta(v)$.
 \end{itemize}
\end{definition}

\begin{observation}
$\ext(\C', H)$ contains $H$-indifferent trigraphs as well as trigraphs which are not $H$-indifferent. Moreover, the property is monotone: along $\ext(\C', H)$ if a trigraph is $H$-indifferent then so is every trigraph preceding it in $\ext(\C', H)$. 
\end{observation}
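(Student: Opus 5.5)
The observation has two parts: existence of both kinds of trigraphs, and monotonicity. I would prove monotonicity first and then use it for the existence part.

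\textbf{Monotonicity.} Fix a vertex $h\in V(H)$. In every trigraph of $\ext(\C',H)$, $h$ lies in a singleton bag. By the definition of an oriented contraction, a red arc $(b,h)$ from a bag $b$ consisting of vertices of $G'$ exists exactly when $\beta(b)$ contains $b_1,b_2$ with $b_1h\in E(G)$ and $b_2h\notin E(G)$. Along $\ext(\C',H)$ the bags of $G'$-vertices only grow by unions, and the bag $\{h\}$ never changes. Hence any witnessing pair $b_1,b_2$ stays inside a single bag in every later trigraph. So once a red arc into a vertex of $H$ appears, some red arc into that vertex persists forever. Equivalently, if $G_i$ is $H$-indifferent, then every $G_j$ with $j<i$ is $H$-indifferent.

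\textbf{Existence.} The first trigraph $G_0=G$ is a graph with all bags singletons. No bag contains two vertices, so there are no red arcs at all, and $G_0$ is $H$-indifferent.

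For the last trigraph, all of $V(G')$ forms a single bag $\beta$. Pick $h\in V(H)$. Since $G$ is connected and $H$ is a component of $X_u\setminus\{u\}$, $h$ or some vertex of $H$ has a neighbour on the root-to-$u$ path; in particular $u$ is adjacent to some vertex of $H$, by niceness of $T$. Choose $h$ adjacent to $u\in\beta$. Also, the reduction rule applies only when $[H]_\sim$ has more than $h(d,p)\ge 1$ other members. So there is a twin-block $H'\neq H$ in $G'$, and its copy $\alpha(h)\in\beta$ is not adjacent to $h$, because distinct components of $X_u\setminus\{u\}$ are non-adjacent. Therefore $\beta$ contains both a neighbour and a non-neighbour of $h$, giving a red arc $(\beta,h)$. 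Thus the final trigraph is not $H$-indifferent.

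\textbf{Main obstacle.} The only delicate point is the non-indifference of the last trigraph. It needs a neighbour of $H$ inside $G'$, which is supplied by the connectivity assumption together with niceness, and a non-neighbour, which is supplied by a surviving twin-block. Both facts follow from assumptions already fixed in the setup.
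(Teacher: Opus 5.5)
Your proof is correct and follows the paper's argument. $G_0$ has no red arcs. The final bag, which contains all of $V(G')$, holds both a neighbour and a non-neighbour of some vertex of $H$ (the non-neighbour coming from a surviving twin-block). Monotonicity holds because the pair witnessing a red arc into a singleton $H$-bag stays together in one bag from then on. Your use of niceness to pick $u$ as the neighbour is more specific than needed, since the paper simply takes any neighbour of $H$ in $G'$, which exists by connectivity; it is still valid, because niceness together with connectivity of $G$ makes every $X_u$ connected.
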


\begin{proof}
$G_0=G$ is obviously $H$-indifferent, since it has no red arcs. On the other hand, the last trigraph of $\ext(\C', H)$ is not $H$-indifferent. Indeed, $G$ is connected, thus there is a vertex in $V(G')$ connected to a vertex $u$ of $H$. However, there exist twin-blocks of $H$ in $G'$, implying the existence of vertices in $G'$ not connected to $u$ in $G$ by definition of a treedepth decomposition. Thus, at the end of $\ext(\C', H)$, the single bag containing all vertices of $G'$ has a red arc to the bag containing (only) $u$. 

For the claimed monotonicity property, we prove the contrapositive. There, it suffices to observe that a red arc can only disappear completely if both of its endpoints are merged together. Thus, as soon as there is a red arc from the non-$H$ part of the trigraph to the $H$ part of it, it is preserved by the contractions of $\ext(\C', H)$, which happen exclusively within the non-$H$ part of the trigraph. Note that several such red arcs to $H$ can be merged together by these operations, but this is not important here as we only need the existence of a single one to prevent $H$-indifference. 
\end{proof}

The next lemma guarantees an $H$-safe trigraph sufficiently early in $\ext(\C', H)$.

\begin{lemma}
\label{lem:merge}
Let the pair $(G, T)$ (resp. $ (G',T')$) contain the graph and the associated decomposition that occurs before (resp. after) applying Reduction Rule~\ref{red:rr}, $H$ be the subgraph of $G$ appearing in $T$ which was deleted to obtain $(G',T')$, and $\ext(\C', H)=(G=G_0, G_1, G_2, \dots)$ the extension of $\C'$ to~$G$.
The first trigraph in $\ext(\C', H)$ which is not $H$-indifferent is $H$-safe.
\end{lemma}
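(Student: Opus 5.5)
The plan is to look at the single contraction that first destroys $H$-indifference and to read off its consequences through the twin-block structure. Let $G_i$ be the first trigraph of $\ext(\C', H)$ that is not $H$-indifferent. By the preceding Observation, $G_{i-1}$ is $H$-indifferent, and $G_i$ arises from $G_{i-1}$ by contracting two bags $B_1,B_2\subseteq V(G')$ into a bag $B$ with a red arc $(B,x)$ for some $x\in V(H)$. So there are $y_1\in B_1$ and $y_2\in B_2$ with $y_1x\in E(G)$ and $y_2x\notin E(G)$. Since $H\in\con_u$ and $T$ is a nice treedepth decomposition, every neighbour of $x$ outside $H$ lies on the root-to-$u$ path $P$ of $T$. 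Hence $y_1\in P$.

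\textbf{Step 1: homogeneity in $G_{i-1}$.} Indifference of $G_{i-1}$ says every $G'$-bag is homogeneous, in $G$, towards every single vertex of $H$. I will use this to pin down $B_1$. Since $B_1$ contains $y_1$, which is adjacent to $x$, the whole of $B_1$ must be adjacent to $x$, so $B_1\subseteq P\cap N(x)$. I also want to record which bags of $G_{i-1}$ can contain vertices of twin-blocks of $H$ together with vertices of $P$. I expect this to be constrained in the same way, because such a vertex $z$ has no neighbour in $H$.

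\textbf{Step 2: red arcs to all copies of $x$.} For every twin-block $H_j\sim_u H$ in $G'$, Definition~\ref{def:equivalence} gives $y_1\alpha_j(x)\in E(G)$ and $y_2\alpha_j(x)\notin E(G)$. Therefore $B$ has a red out-arc to every bag of $G_i$ that contains some copy $\alpha_j(x)$, and this holds already inside $G'_i$.

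\textbf{Step 3: pigeonhole.} I will bound the red out-degree of $B$ by $2t$, using the $(t,2t)$-width of $\C'$. Here $t$ can be taken to be at most a function of $p$, since the twin-width of graphs of treedepth $p$ is bounded (\cite{BonnetKTW22}). The equivalence class of $H$ has more than $h(d,p)=2^{2^{4p}\cdot b(d+1,p)}$ members, which is much larger than $2t$. Hence the copies of $x$ are packed into at most $2t$ bags, and some two twin-blocks $H',H''$ have $\alpha'(x),\alpha''(x)$ in a common bag.

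\textbf{Step 4: full merge (main obstacle).} It remains to upgrade this to a complete merge, i.e.\ to show that every $w\in V(H')$ shares a bag with its counterpart in $H''$, as $H$-safety demands. My first attempt is a refined pigeonhole. For each twin-block $H_j$, record its ``bag pattern'': the tuple indexed by $V(H)$ of bags containing $\alpha_j(w)$. I will try to show that only a bounded number of bags, at most a function of $t$ and $b(d+1,p)$, can host vertices of twin-blocks. For vertices adjacent to $P$, the argument of Step 2 applies once some $P$-vertex's bag is non-homogeneous towards them. For the remaining vertices, I would propagate along the connectivity of $H$: two twin-blocks sharing a bag at $x$ force, through the homogeneity of Step 1 applied to the neighbours of $x$ in $H_j$, their neighbours into few bags. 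The choice of $h$, doubly exponential in $b(d+1,p)$, is what should make the number of possible patterns smaller than the size of the class. Two blocks with the same pattern, with each pattern bag shared across blocks, are then merged.

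I expect this propagation to the non-$P$-adjacent vertices of $H$ to be the delicate part. If it fails directly, I will fall back to selecting the twin-block pair first and checking the homogeneity constraints vertex by vertex along a BFS of $H$.
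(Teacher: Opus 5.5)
Your Steps 2 and 3 are essentially the base case of the paper's argument. There is one small slip. If $y_2$ lies in some twin-block $H_j$, Definition~\ref{def:equivalence} says nothing about $y_2\alpha_j(x)$, so the red arc from $B$ is only guaranteed for all but at most one $j$; some copies may also lie in $B$ itself. Neither affects the pigeonhole.

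The genuine gap is Step~4, which carries the substance of the lemma, and the mechanism you propose there does not work. Indifference of $G_{i-1}$ only says that each $G'$-bag is homogeneous towards each singleton bag of the deleted copy $H$. It says nothing about how the vertices of the copies $H_j\subseteq G'$ are distributed among bags, so it cannot push the neighbours of $\alpha_j(x)$ into few bags. Likewise, for the other vertices of $H$ adjacent to $P$, nothing forces a $P$-bag to become non-homogeneous towards them; only $x$ is known to receive a red arc. That branch of your plan therefore has no trigger.

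The missing idea is that the constraint comes from \emph{non}-homogeneity of the shared bag itself. Suppose a bag $Q$ of $G_i$ contains $\alpha_j(x)$ and $\alpha_k(x)$ with $j\neq k$, and $w$ is a neighbour of $x$ in $H$. Then $\alpha_k(w)$ is adjacent to $\alpha_k(x)$ but not to $\alpha_j(x)$, because distinct twin-blocks are distinct components of $X_u\setminus\{u\}$. Hence $Q$ has a red out-arc to the bag of $\alpha_k(w)$ unless that bag is $Q$. These arcs live inside $G'_i$, so there are at most $2t$ of them.

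This drives an induction along BFS layers from $x$. Suppose a family $I_a$ of indices has, for every vertex at distance at most $a$ from $x$, all its copies in a common bag. Then the copies indexed by $I_a$ of each vertex at distance $a+1$ lie in at most $2t+1$ bags. Pigeonholing over that whole layer gives a subfamily $I_{a+1}$ on which the copies again share bags; over all layers the family shrinks by a factor of at most $(2t+1)^{|V(H)|}$.

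For this you must keep a \emph{large} family from the base case. Your Step~3 retains only a pair, and your fallback of fixing the pair first cannot work. You then propagate from $x$ through the layers; connectivity of $H$ reaches every vertex, regardless of adjacency to $P$. Finally you check that $h$ leaves at least two indices after the at most $\mathrm{diam}(H)<2^{p}$ rounds. This is exactly the paper's Claim~\ref{cl:merge}. Your bag-pattern counting could be completed along these lines, but only once this mechanism replaces the appeal to Step~1.
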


\begin{proof} 
Let $G_{i-1}$ be the last $H$-indifferent trigraph in $\ext(\C', H)$. Suppose that $G_i$ is not $H$-safe. By definition of $G_i$, there exists $u\in V(H)$ and $v\in V(G_i)\setminus V(H)$ such that $vu$ is a red arc. Let  $\delta$ be the depth of the seed $s_H$ of $H$ in $T$. Let $d=2^{p+2}+1$. Let $I=[h(\delta,p)]$ 
and let $H_1,\dots, H_{h(\delta,p)} \in [H]_\sim$ be distinct twin-blocks of $H$ present in $G'$. We can ensure the existence of these because the equivalence class of $H$ was sufficiently large in $G$ to apply \cref{red:rr}, and in particular still remains large in $G'$. We now prove the following claim by induction:

\begin{claim}\label{cl:merge}
Let $p_H \coloneq p-\delta$.
For each $a\in [0, 2^{p_H}-1]$, there is a set $I_a\subseteq I$ of size at least $h(\delta,p)/d^{a\cdot b(\delta)+1}$ such that for each $j, k\in I_a$ and each vertex $w\in V(H)$ at distance at most $a$ from $u$ in $H$, there is a vertex $x\in V(G_i)$ such that $w_j$ and $w_k$ both belong to the bag $x$.
\end{claim}

\begin{proof}[Proof of \cref{cl:merge}]
Let us start by proving the claim for $a=0$. Since $vu$ is a red arc, $v$ is a bag containing both a neighbor $y$ and a non-neighbor $z$ of $u$ in $G$, and by definition of $G_i$ they are both outside of $H$. By definition of the treedepth decomposition, $y$ has to be an ancestor of $s_H$ in $T$ while $z$ can be in any part of $G'$. Observe that, by \cref{def:equivalence}, for each $j\in I$, $y$ is a neighbor of $u_j$ in $G'$, and $z$ a non-neighbor of $u_j$ unless $z\in V(H_j)$. Thus, for all but at most one $j$ in $I$, the descendant $u'_j$ of $u_j$ in $G_i$ is a red neighbor of $v$. However, $v$ has a number of red neighbors upper bounded by $2^{p+2}$. Indeed: the red out-degree of $v$ is at most twice the oriented twin-width, which is at most the twin-width, which is itself upper-bounded by a function of the treewidth, and lastly the treewidth is at most the treedepth.

Hence, the vertices of $U \coloneq \{u_j, j\in I\}$ are present in the bags of at most $d$ vertices in $G_i$ (note that some might be in the bag of $v$), which means that there is a vertex $w\in V(G_i)$ with at least $h(\delta,p)/d$ vertices of $U$ in its bag. Now it suffices to set $I_0 \coloneq \{j\in I, u_j \in \beta(w)\}$. 
This concludes the proof of the base case of the induction.

For the induction step, suppose that \cref{cl:merge} holds for some $a\in [0, 2^{p_H}-2]$, \ie there is a set $I_a\subseteq I$ with the desired properties. Note that we can assume $p_H \geq 2$, otherwise the proof is already complete. Let $D_a, D_{a+1}\subseteq V(H)$ be the sets of vertices at distance exactly $a$ or $a+1$ from $u$ in $H$, respectively. 
Let $w \in D_{a+1}$ and $x \in D_a$ be two adjacent vertices in $H$, and let $W = \{w_j \sep j \in I_a\}$.
For each $j \in I_a$, let $x'_j$ be the descendant of $x_j$ in $G_i$ 
and let $w_j'$ be the descendant of $w_j$ in $G_i$. 
By the induction hypothesis, $x_j' = x_k'$ for each $j,k \in I_a$; let $x' := x'_j$ for some/each $j \in I_a$.
Observe that, for each $k \in I_a$, $x'w_k'$ is a red edge of $G_i$ or $x' = w_k'$. Using the same argument as in the base case, $x'$ has red degree at most $d-1$ in $G_i$, which means that the vertices of $W$ are present in the bags of at most $d$ vertices in $G_i$.

Since $|D_{a+1}| \le b(\delta)$, $I_a$ can be partitioned into at most $d^{b(\delta)}$ parts such that if $j,k \in I_a$ are in the same part, then for each vertex $w \in D_{a+1}$, $w_j$ and $w_k$ are in the bag of the same vertex in $G_i$. Hence, one of these parts has size at least $|I_a| / d^{b(\delta))}$, and we choose it to be $I_{a+1}$. A simple computation shows that $I_{a+1}$ satisfies Claim~\ref{cl:merge}.
\end{proof}

Observe that \cref{cl:merge} implies that $H_j$ and $H_k$, for any $j,k\in I_{2^{p_H}-1}$, are merged in $G_i$ because the diameter of $H$ is at most $2^{p_H}-1$. Indeed, the depth of the tree associated to $H$ in $T$ is at most $p_H$, and the diameter of a connected graph of treedepth $p_H$ is at most $2^{p_H}-1$.
Additionally, we need to verify that $|I_{2^{p_H}-1}| \geq h(\delta,p)/d^{(2^{p_H}-1)\cdot b(\delta)+1} \geq 2$.
 Recall that $h(\delta,p)=2^{2^{4p}\cdot b(\delta)+1}$ and $d=2^{p+2}+1\leq 2^{3p}$. We use the fact that $(2^{p_H}-1)\cdot b(\delta)+1 \leq 2^{p+1} \cdot b(\delta)$ to get $d^{(2^{p_H}-1)\cdot b(\delta)+1}\leq 2^{3p\cdot 2^{p+1} \cdot b(\delta)} \leq 2^{2^{4p} \cdot b(\delta)}$, which concludes the proof.
\end{proof}

In fact, we need a slightly stronger result: for our purposes, it will be necessary to have a graph which is both $H$-indifferent and $H$-safe. 

\begin{lemma}
\label{lem:Safe}
Let the pair $(G, T)$ (resp.\ $ (G',T')$) contain the graph and the associated decomposition before (resp.\ after) applying Reduction Rule~\ref{red:rr}, $H$ the subgraph of $G$ appearing in $T$ which was deleted to obtain $(G',T')$, and $\ext(\C', H)=(G=G_0, G_1, G_2, \dots)$ the extension of $\C'$ to $G$.
The last $H$-indifferent trigraph $G^*$ in $\ext(\C', H)$ is $H$-safe.
\end{lemma}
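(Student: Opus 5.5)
We want to upgrade Lemma~\ref{lem:merge} from the first non-$H$-indifferent trigraph $G_i$ to the trigraph $G^*=G_{i-1}$ immediately preceding it. By the monotonicity observation, the $H$-indifferent trigraphs form a prefix of $\ext(\C',H)$, so $G^*=G_{i-1}$ is well defined. Lemma~\ref{lem:merge} tells us that $G_i$ is $H$-safe, so some two twin-blocks $H',H''$ of $H$ are merged in $G_i$. Since $G_i$ arises from $G_{i-1}$ by a single contraction, the plan is to show that this one contraction cannot be what completes the merging of $H'$ and $H''$. Failing that, we argue that some other pair of twin-blocks is already merged in $G_{i-1}$.

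The key step is to run the argument of Claim~\ref{cl:merge} with more room. We first bound how many indices can be affected by a single contraction. The contraction from $G_{i-1}$ to $G_i$ merges two bags $x,y$ into one vertex. For a pair $j,k$ with $H_j,H_k$ merged in $G_i$, each corresponding pair $w_j,w_k$ lies in a common bag of $G_i$. If $H_j,H_k$ are not already merged in $G_{i-1}$, then for some $w$ the vertices $w_j,w_k$ were separated in $G_{i-1}$ and were joined exactly by the contraction of $x$ and $y$.

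Now take the large set $I'=I_{2^{p_H}-1}$ produced in the proof of Lemma~\ref{lem:merge}, where all $H_j$ with $j\in I'$ are pairwise merged in $G_i$. For each $j\in I'$ and each $w\in V(H)$, we call the copy $w_j$ \emph{touched} if its bag in $G_{i-1}$ is $x$ or $y$; otherwise its bag in $G_{i-1}$ equals its bag in $G_i$. For fixed $w$, all copies $w_j$ with $j\in I'$ lie in a single bag $B_w$ of $G_i$. If $B_w$ is not the new vertex $xy$, then all copies $w_j$ are already together in $G_{i-1}$. If $B_w=xy$, the copies split between $x$ and $y$ in $G_{i-1}$, which is a 2-colouring of $I'$. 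Only a single $w$ can have $B_w=xy$. Two distinct $w,w'$ might share the bag $xy$, but then we colour $I'$ by the pair of sides, giving at most $2^{|V(H)|}\le 2^{b(\delta)}$ classes in total. By pigeonhole there is a class $I''\subseteq I'$ of size at least $|I'|/2^{b(\delta)}$ in which every $w$ has all its copies on the same side. All twin-blocks indexed by $I''$ are therefore pairwise merged already in $G_{i-1}$, and $|I''|\ge 2$ gives $H$-safeness of $G^*$.

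The main obstacle is the counting. We need $|I'|\ge 2^{b(\delta)+1}$, i.e.\ the estimate at the end of Lemma~\ref{lem:merge} with an extra factor $2^{b(\delta)}$ of slack. I expect this to follow from the same computation, since $h(\delta,p)=2^{2^{4p}\cdot b(\delta)}$ dominates $d^{(2^{p_H}-1)b(\delta)+1}\cdot 2^{b(\delta)+1}$ as $3p\cdot 2^{p+1}+2\le 2^{4p}$. If that bound turns out too tight, I would instead rerun Claim~\ref{cl:merge} with $d$ replaced by $2d$ so as to absorb the splitting of one bag into two.
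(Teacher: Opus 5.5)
Your proof is correct, but it takes a different route from the paper. You reopen the proof of Lemma~\ref{lem:merge} to get the family $I'$ of twin-blocks that are pairwise merged in $G_i$. You then pay a factor $2^{b(\delta)}$ in a pigeonhole step to undo the single contraction of $x$ and $y$. Your arithmetic ($3p\cdot 2^{p+1}+2\le 2^{4p}$) does show that $h$ has enough slack for this, so the fallback with $2d$ is not needed. The sentence ``only a single $w$ can have $B_w=xy$'' is false and should be dropped; your colouring by vectors of sides already handles several such $w$.

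The paper argues structurally, with no counting at all. Suppose the pair $H',H''$ given by the statement of Lemma~\ref{lem:merge} is not yet merged in $G_{i-1}$. Then both contracted bags must contain a vertex of $H'\cup H''$. These are siblings of $H$ in $T$, so they are non-adjacent to all of $H$. Since $G_{i-1}$ is $H$-indifferent and vertices of $H$ are singleton bags, such a bag cannot also contain a neighbor of any $h\in V(H)$. Hence the merged bag has no red arc into $H$, all other arcs are unchanged, and $G_i$ would be $H$-indifferent, a contradiction.

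The paper's route uses only the statement of Lemma~\ref{lem:merge}, not its proof. It shows that the very same pair is already merged one step earlier, and it does not depend on the constants in $h(\cdot,p)$. Your route ignores the neighborhood structure entirely and would handle any single contraction. In exchange, it depends on the internals and the quantitative slack of the proof of Lemma~\ref{lem:merge}.
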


\begin{proof}
Let us consider the last $H$-indifferent trigraph $G_{i-1}$. The next trigraph $G_i$ in $\ext(\C', H)$ is $H$-safe as per \cref{lem:merge}, so there exist two graphs $H', H'' \in [H]_\sim$ merged in $G_i$.
Suppose by contradiction that $H'$ and $H''$ are not yet merged together in $G_{i-1}$. Let $u, v$ the two vertices of $G_{i-1}$ which are contracted together to obtain $G_i$. Since their merge needs to finalize the merge of $H'$ and $H''$, we know that each of their bags contain a vertex of $H'\cup H''$. Thus, since $H'\cup H''$ is not in the neighborhood of $H$, and that there is no red arcs to $H$, it means that the bags of $u$ and $v$ can contain only vertices not adjacent to $H$. However, this implies that the contraction of $u$ with $v$ cannot create a red arc to any vertex of $H$, and $G_i$ would thus be a $H$-indifferent trigraph, which contradicts the definition of $G_i$.  
\end{proof}

The following lemma has a crucial role in controlling the approximation factor of the final algorithm.

\begin{lemma}
\label{lem:notinmerge}
Let the pair $(G, T)$ (resp.\ $ (G',T')$) contain the graph and the associated decomposition before (resp.\ after) applying Reduction Rule~\ref{red:rr}, $H$ the subgraph of $G$ appearing in $T$ which was deleted to obtain $(G',T')$, and $\ext(\C', H)=(G=G_0, G_1, G_2, \dots)$ the extension of $\C'$ to $G$.
The last $H$-indifferent trigraph $G^*$ in $\ext(\C', H)$ does correspond in $\C'$ to a trigraph $G'_{i-1}$ which is not in the internal part of any $X_u$-merge segment.
\end{lemma}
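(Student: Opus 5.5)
We argue by contradiction. Let $G^*=G_{i-1}$ be the last $H$-indifferent trigraph of $\ext(\C',H)$ and $G'_{i-1}$ the corresponding trigraph of $\C'$. Suppose $G'_{i-1}$ lies in the internal part $Y^{\mathrm{int}}$ of some $X_u$-merge $Y$ of $\C'$, with twin sibling $X_w$. Since $G'_{i-1}$ is internal, the next trigraph $G'_i$ also belongs to $Y$. So the contraction producing $G'_i$, which is also the contraction producing the first non-$H$-indifferent trigraph $G_i$, is a contraction of $Y$. By definition of an $X_u$-merge, it contracts a bag $a_u$ consisting only of vertices of $X_u$ with the corresponding bag $a_w$, whose restriction to $X_w$ is $\alpha(\beta(a_u))$.

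\textbf{Step 1: identify the new red arc.} The contraction creates a red arc from the new vertex $a_{\mathrm{new}}$ to some singleton $\{x\}$ with $x\in V(H)$. No other out-arcs towards $H$ can appear, because oriented contractions do not add out-arcs at untouched vertices. Hence there are $y_1,y_2\in\beta(a_u)\cup\beta(a_w)$ with $y_1x\in E(G)$ and $y_2x\notin E(G)$.

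\textbf{Step 2: show the arc already existed.} This step is the core, and I would proceed by cases on where $x$ and $X_u$ sit in $T$.

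The key point is that $x$ lies outside $X_u\cup X_w$. Since $X_u\sim X_w$, every $z\in X_u$ satisfies $zx\in E(G)\iff \alpha(z)x\in E(G)$. Because $\alpha(\beta(a_u))\subseteq\beta(a_w)$, if $\beta(a_u)$ contains a vertex adjacent to $x$ and one non-adjacent to $x$, then so does $\beta(a_w)$. Likewise, if $\beta(a_u)$ is entirely on one side, its $\alpha$-image in $\beta(a_w)$ is entirely on that same side. So any discrepancy witnessed across $\beta(a_u)\cup\beta(a_w)$ is already witnessed inside $\beta(a_w)$. That means $a_w$ already had a red arc to $\{x\}$ in $G_{i-1}$, contradicting $H$-indifference.

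I need to check that $x\notin X_u\cup X_w$. This holds because $H$ was deleted from $G'$ while $X_u,X_w$ are subgraphs of $G'$. It also needs the twin-block condition of Definition~\ref{def:equivalence} to apply to $x$. This is the point I expect to need care, since the definition quantifies over vertices of the current graph. Here I would use the fact that in $G$, $H$ is disjoint from both $X_u$ and $X_w$. Twin-blocks in $T'$ remain twin-blocks in $T$ with respect to $x$: either $x$ is in a sibling or unrelated subtree, so it is non-adjacent to both, or $x$ lies in a subtree below a common ancestor, which again makes it non-adjacent to both. If $H$ were inside $X_u$ or $X_w$, that would contradict $H$ being deleted from $G'$.

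\textbf{Step 3: conclude.} Step 2 gives the contradiction, so $G'_{i-1}$ is not in the internal part of any $X_u$-merge. One boundary case needs a remark: $G'_{i-1}$ could be the first trigraph of $Y$. That is allowed, because only the internal part is excluded. The argument above only used that the contraction $G'_{i-1}\to G'_i$ belongs to $Y$ and that $a_u$ contains no vertices outside $X_u$, which holds throughout $Y$. I expect Step 2, and specifically verifying the twin-block adjacency equivalence for vertices of the deleted $H$, to be the main obstacle.
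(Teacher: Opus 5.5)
Your Step~2 fails exactly where you expected trouble. The sentence meant to dispose of the issue (``if $H$ were inside $X_u$ or $X_w$, that would contradict $H$ being deleted from $G'$'') is not valid. The blocks of a merge in $\C'$ are defined with respect to $(G',T')$. Suppose $u$ or its twin sibling $w$ is a $T$-ancestor of the seed $s_H$ of $H$ (e.g.\ $u$ is the parent of $s_H$). Then the block $X_u$ of $T'$ is the block of $T$ minus $V(H)$: $H$ lies below $u$ without belonging to $G'$. The relation $X_u\sim X_w$ only constrains adjacencies to vertices of $G'$ and says nothing about $x\in V(H)$. In this case $u$ has neighbours in $H$ (in a nice decomposition of a connected graph every node is adjacent to each of its child components), while no vertex of $X_w$ does. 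So $a_u=\{u\}$ can be homogeneously adjacent to $x$ and $a_w$ homogeneously non-adjacent. The arc $(a_{\mathrm{new}},x)$ is then genuinely new, not already witnessed inside $\beta(a_w)$.

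This is not a local omission: the statement itself fails in that case. Let $r$ be adjacent to $z,z',q,e$. Let $z$ carry pendant leaves $l_1,\dots,l_N,x$ and $z'$ carry pendant leaves $l'_1,\dots,l'_N$, with the obvious nice decomposition of depth $3$. For $N$ large, Reduction Rule~\ref{red:rr} applies at $z$ and may delete $H=\{x\}$, and then $X_z\sim X_{z'}$ in $G'$. Let $\C'$ contract the $l_j$ into $L$, the $l'_j$ into $L'$, then $z'$ with $q$, then $L'$ with $e$. This trigraph starts an $X_z$-merge: $\{z\}$ and $L$ are pure, and their mirrors $\{z',q\}$ and $L'\cup\{e\}$ are joined only by red arcs in both directions. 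Next contract $L$ with $L'\cup\{e\}$, then $\{z\}$ with $\{z',q\}$. Since $z$ is the only neighbour of $x$ and remains a singleton until that last step, the trigraph obtained after merging $L$ is the last $H$-indifferent one. It is internal to this merge, and this $\C'$ even has $(2,4)$-oriented twin-width.

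The paper's proof has the same hole. It asserts that no vertex of $H$ is adjacent to $A\cup A'$ because otherwise $H\subseteq A$, which overlooks exactly this case. So your route essentially coincides with the paper's, and both only establish the lemma for merges in which neither $u$ nor $w$ is a $T$-ancestor of $s_H$. In that restricted case $X_u\cup X_w$ contains no neighbour of $H$ at all, and your Step~2 is correct. The remaining case needs either an extra hypothesis in the statement or a different choice of insertion point in Lemma~\ref{lem:rr}.
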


\begin{proof}
Let us prove this by contradiction. Suppose there exist, for some $u\in V(G')$, an $X_u$-merge $\seg$ of twin-blocks $A\sim A'$ such that $G'_{i-1}$ is in the internal part of $\seg$. This implies that $G'_i$ is also part of $\seg$ (possibly $\seg^\emph{end}$). Thus, the two vertices $u,v\in V(G'_{i-1})$ merged together to obtain $G_i$ both contain in their bags vertices of $A \cup A'$. We observe that no vertex in $H$ is the neighbor of a vertex of $A$: by definition of $H$, it can only have connections to the vertices on the path from the root to $H$ in $T$, and if a vertex of $A$ (resp. $A'$) was on this path, it would imply that $H\subseteq A$ (resp. $H\subseteq A'$), which contradicts the fact that $A$ and $A'$ appear in $G'$. We now can conclude similarly to the proof of \cref{lem:Safe}.
\end{proof}

With this final intermediate step done, we proceed to the proof of~\cref{lem:rr}.

\begin{proof}[Proof of \cref{lem:rr}]
We first construct $\ext(\C', H)=(G=G_0, G_1, \dots)$, the extension of $\C'$ to $G$, which can be done in polynomial time. We moreover identify $i$ the index of the first trigraph in $\ext(\C', H)$ which is not $H$-indifferent.
By \cref{def:safe} and \cref{lem:Safe}, there are $H', H'' \in [H]_\sim$ that are merged in $G_{i-1}$. Towards computing these, we first test, for each pair of siblings $a,b$ of the seed $z$ of $H$ in $T$, whether $X_a$ and $X_b$ are twin-blocks (i.e., belong to the same equivalence class of $\sim$). If they are, we have a canonical isomorphism $\alpha:H'\rightarrow H''$ witnessing the equivalence, and can then test whether each vertex $v$ of $H’$ lies in the same bag as $\alpha(v)$ in $G_{i-1}$. Once again, by~\cref{lem:Safe} we are guaranteed that this test will eventually succeed in time at most $\bigoh(|H|^{|H|}\cdot n^2)$.

 Let $\C'_\text{indifferent}:= (\C')_{[0, i-1]}$. Let $\C'_{H'}$ be the restriction of $\C'_\text{indifferent}$ to $H'$, i.e., $\C'_{H'}=\C'_\text{indifferent}[H']$. Using the canonical isomorphism $\alpha$ from $H'$ to $H$, we define $\C_H$ as the result of applying $\alpha$ on each vertex (or bag) of each trigraph in $\C'_{H'}$. It is easy to observe that $\C_H$ is a valid partial contraction of $H$.\\
Let us now define $4$ partial contraction sequences:
\begin{itemize}
\item$\C_A\coloneq \ext(\C'_\text{indifferent}, H)$,
\item$\C_B\coloneq \ext(\C_H, G'_{i-1})$,
\item$\C_C$ is an arbitrary $H$-merge of $H$ into $H'$ starting from the last trigraph of $\C_B$,
\item $\C_D\coloneq (\C')_{\geq i-1}$ the end of the contraction sequence $\C'$.
\end{itemize}

Let us first observe that $\C_A$ starts with the trigraph $G$ and ends with the trigraph $\ext(G'_{i-1}, H)$. $\C_B$ starts with $\ext(H, G'_{i-1})$ which is exactly $\ext(G'_{i-1}, H)$. $\C_C$ starts precisely with the last trigraph of $\C_B$, and ends by definition with $G'_{i-1}$. $\C_D$ starts with $G'_{i-1}$ and ends with $K_1$, the trigraph on one vertex.\\
We now combine them, appending them in order after removing the last trigraph in $\C_A$, $\C_B$ and $\C_C$ (to avoid duplicates). The resulting sequence of trigraphs $\C$ is a contraction sequence of $G$. Indeed, by construction of $\C_A, \C_B, \C_C$ and $\C_D$, one can go from one trigraph of $\C$ to the next by a valid contraction of two vertices. Moreover, the oriented width of $\C$ is bounded, and we will show it separately on each segment. 
On the first part $\C_A$, it suffices to see that it has $(t, 2t)$-oriented width if and only if $\C'_\text{indifferent}$ has $(t, 2t)$-oriented width. Indeed, the extension with $H$ does not add any red arcs to any of the trigraphs by definition of $H$-indifference. $\C'_\text{indifferent}$ is a subsequence of $\C'$, and neither its first trigraph, \ie $G'$, nor its last, \ie $G'_{i-1}$, are in the internal part of merge segments, see \cref{lem:notinmerge}, which suffices to conclude that it satisfies the requirements. Note that the same directly applies to $\C_D$ (without the extension part of the argument, here unnecessary).
Let us take a look at $\C_B$. Because $G'_{i-1}$ is not in the internal part of a merge, we know that it has maximum red out-degree $t$, and the first trigraph of $\C_B$ as well. During $\C_B$, all contractions happen between vertices of $H$, thus the red out-degree of all vertices not in $H$ can only decrease. By definition of $\C_H$, for each trigraph $G_\mu$ of $\C_B$, it is possible to find a trigraph $G'_j$ in $\C'$ such that $H'$ was at the exact same level of processing as $H$ in $G_\mu$. We claim that the red out-degree of vertices of $H$ in $G_\mu$ cannot be larger as the red out-degree of the corresponding vertices of $H'$ in $G_j$. \\

First observe that the neighborhood of $H$ in $G$ has never been merged with $H'$, since it would otherwise have created red arcs to $H$ and would contradict the choice of $G_{i-1}$. Now, let us partition $V(G')$ into the vertices merged with the vertices of $H'$ in $G_{i-1}$, and the rest $R$. Observe that all red arcs between vertices of $H$ and $V(G')$ are to $R$, since $N(H)\subseteq R\cup H$. The red arcs present from a vertex $v\in H$ in $G_\mu$ to other vertices of $H$ have one-to-one equivalents from $v_{H'}$ to other vertices of $H'$ in $G_j$. The other red arcs starting in $v$ are to a subset of the remaining vertices from $R$, and this is a subset of the red neighborhood of the bag of $H'$ in $G_\mu$ containing $v_{H'}$, since red arcs can only disappear when both endpoints are contracted, which is forbidden by the definition of our partitioning. Thus, we can check that the trigraphs of $\C_H$ have maximum red out-degree $2t$, and the trigraphs having maximum red out-degree over $t$ correspond to trigraphs of $\C'$ having maximum red out-degree over $t$ because of merge segments of parts in $H'$, meaning that they are in the internal part of merge segments of parts of $H$. We note moreover that, since $G_{i-1}$ is not in the internal part of any merge segment, it holds in particular for merge segments of parts of $H'$, and yields that the end of $\C_B$ is also not internal to any segment. We have thereby proved the requirements of $(t,2t)$-oriented width. 

Lastly, we observe that $\C_C$ starts with a trigraph of maximum red out-degree at most $t$ because we proved that $\C_B$ has $(t,2t)$-oriented width. Moreover, $\C_C$ corresponds to a $H$-merge segment by \cref{def:mergeseg}. Thus, by \cref{lem:otwwseg}, the maximum red out-degree of each internal trigraph is at most $2t$, and of its last trigraph at most $t$ since it is an induced subtrigraph of its first trigraph.

All the pieces of $\C$ can be computed in time upper-bounded by $|H|^{|H|}\cdot n^{\bigoh(1)}$,
 thus the computation of $\C$ itself can be upper-bounded by the same function.
\end{proof}

\section{An Exact Algorithm Based on Vertex Integrity}
\label{sec:vi}

Let $S$ be a vertex set satisfying that for each connected component $H$ of $G - S$, $|V(H) \cup S| \leq \vi(G)$, and recall that we may compute such a set $S$ using known results (see Section~\ref{sec:prelims}). Let $\mathcal{D}_S$ denote the set of connected components of $G-S$; we drop the $S$ when the set is clear from context. For the rest of the section, we assume to have computed and fixed a specific choice of $S$ and that $p=\vi(G)$. Throughout the section, we assume that $p\geq 2$ since instances where $p\leq 1$ are trivial.

\subsection{Setting Up the Framework}
In this subsection, we adapt the preparatory steps for using the framework~\cite{DFGS25} to our setting. Our aim here is to define and compute a ``reduced graph'' whose size is bounded by a function of $\vi(G)$ (Lemma~\ref{lem:reducedcomputation}). The main novel contributions then lie in Subsection~\ref{sub:lifting}, which shows that a contraction sequence of the reduced graph can be lifted to a contraction sequence of the original graph.

We first define a basic notion of equivalence between components and restate a known result about computing these.

\begin{definition} We say two graphs $H_0, H_1 \in \mathcal{D}$ are \emph{twins}, denoted $H_0 \sim H_1$, if there exists a canonical isomorphism $\alpha$ from $H_0$ to $H_1$ such that for each vertex $u \in V(H_0)$ and each $v \in S$, $uv \in E(G)$ if and only if $\alpha(u)v \in E(G)$. 
 \end{definition}

\begin{fact}[\cite{DFGS25}]
\label{fact:size-equiv-class}
          Each graph $H\in \mathcal{D}$ has at most $p$ vertices, $\sim$ is an equivalence relation and the number of equivalence classes in $[\sim]$ is upper-bounded by $p \cdot 2^{2p^2}$. Moreover, a partition of connected components into $[\sim]$ can be computed in time at most $\mathcal{O}(p \cdot 2^{2p^2}\cdot n)$.
\end{fact}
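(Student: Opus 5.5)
The statement is Fact~\ref{fact:size-equiv-class}, which has three parts: a size bound on components, the fact that $\sim$ is an equivalence relation, and a count-plus-computation bound on the classes. I would prove each part directly from the definitions.

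\emph{Size bound.} Every $H\in\mathcal{D}$ satisfies $|V(H)\cup S|\le p$ by the choice of $S$, so $|V(H)|\le p$. The same inequality gives $|S|\le p$ whenever $\mathcal{D}$ is nonempty.

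\emph{Equivalence relation.} This follows from closure properties of the witnessing maps, where I read ``canonical isomorphism'' as an isomorphism of the components that preserves adjacency to each vertex of $S$. Reflexivity is witnessed by the identity. Symmetry is witnessed by $\alpha^{-1}$. Transitivity is witnessed by the composition $\beta\circ\alpha$: for $u\in V(H_0)$ and $v\in S$,
\[
uv\in E(G)\iff \alpha(u)v\in E(G)\iff \beta(\alpha(u))v\in E(G).
\]

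\emph{Number of classes.} An equivalence class is determined by the isomorphism type of the structure consisting of $H$ together with, for each vertex of $H$, its neighbourhood in $S$. I would count these types as follows.
\begin{itemize}
\item The number of vertices is at most $p$.
\item For a fixed number of vertices, labelled graphs on at most $p$ vertices contribute at most $2^{p^2}$ possibilities.
\item Each of the at most $p$ vertices chooses a subset of $S$, giving at most $2^{p\cdot|S|}\le 2^{p^2}$ possibilities.
\end{itemize}
Multiplying gives at most $p\cdot 2^{p^2}\cdot 2^{p^2}=p\cdot 2^{2p^2}$ classes. This overcounts, since it counts labelled structures, so the upper bound is safe.

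\emph{Computation.} For each component $H$, compute a canonical form by trying all $\le p!$ orderings of $V(H)$. Take the lexicographically smallest encoding of the adjacency matrix of $H$ together with the $S$-adjacency rows, where $S$ is ordered by a fixed global order. This costs $f(p)$ time per component, and two components are twins exactly when their canonical forms coincide. Bucketing the forms into at most $p\cdot 2^{2p^2}$ classes then takes time linear in $n$.

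The only delicate point is the running-time bound. The stated bound $\mathcal{O}(p\cdot 2^{2p^2}\cdot n)$ is not obviously met by brute-force canonization, since $p!$ orderings per component is not dominated by the stated factor. I would handle this by precomputing a lookup table indexed by all labelled types, of which there are $p\cdot 2^{2p^2}$ entries. Each component's $S$-adjacency encoding is read in $\mathcal{O}(p^2)$ time and then mapped via the table. Alternatively, the per-component cost can be absorbed by comparing each component against one representative per existing class, which gives $p\cdot 2^{2p^2}$ comparisons per component. Beyond this bookkeeping, the claim is routine and matches what is already established in~\cite{DFGS25}, which I would cite for the exact constant.
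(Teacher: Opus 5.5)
Your arguments for the size bound, the equivalence-relation property and the class count are correct. The paper gives no proof of its own here and simply imports the fact from \cite{DFGS25}. Your count is the natural one: the bound $n\cdot 2^{n^2}$ on isomorphism types recalled in the preliminaries, multiplied by the at most $2^{p|S|}\le 2^{p^2}$ choices of $S$-neighbourhoods.

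The one thing to correct is your final paragraph. The ``delicate point'' is a false alarm, and both proposed remedies are worse than your original plan.

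\textbf{Canonization already fits the bound.} Brute-force canonization of a component $H$ costs $\mathcal{O}(|V(H)|!\cdot p^2)$. Since $|V(H)|!\le p^p=2^{p\log_2 p}\le 2^{p^2}$, this is already $\mathcal{O}(p\cdot 2^{2p^2})$ per component. Summed over at most $n$ components, it gives $\mathcal{O}(p\cdot 2^{2p^2}\cdot n)$. Bucketing the canonical strings, each of length $\mathcal{O}(p^2)$, adds only $\mathcal{O}(p^2 n)$.

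\textbf{The comparison remedy exceeds the bound.} Comparing each component against one representative per existing class makes every comparison a twin test, i.e., an isomorphism test. Done by brute force, each such test carries its own $p!$-type factor. This gives roughly $p\cdot 2^{2p^2}\cdot p!$ per component, which is strictly more than the canonization cost and beyond the stated bound.

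\textbf{The lookup-table remedy hides its cost.} It only moves the isomorphism work into a precomputation over all $p\cdot 2^{2p^2}$ labelled types, and you do not account for that cost. Done naively, it is $p\cdot 2^{2p^2}\cdot p!$ times a polynomial in $p$, which is not within $\mathcal{O}(p\cdot 2^{2p^2}\cdot n)$ when $n$ is small.

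So drop both remedies, observe that $p!\le 2^{p^2}$, and keep the canonization argument.
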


A crucial feature of the Ramsey Pruning technique is that it requires every connected component in $\mathcal{D}_S$ to occur sufficiently many times. These components will then later be placed into groups, as we define below.

\begin{definition}
\label{defn:large-eq-class}
Let $k$ be a positive integer, an equivalence class $[H]_\sim$ of $\sim$ is said to be {\em $k$-large} if $|[H]_\sim|\geq k$. Further a vertex set $L \subseteq V(G)$ is called a {\em $k$-large group} if the induced subgraph $G[L]$ is a disjoint union of exactly one graph from each $k$-large equivalence class of~$\sim$.
\end{definition}

With this, we can proceed to a formalization of our reduced graphs. Intuitively, this is the graph obtained by first recursively adding all equivalence classes that are ``too small'' for our purposes into $S$, until we form a deletion set $S\cup S'$. Note that adding components to the deletion set in such a way does not impact the remaining components, nor the twin relation between them. After that, we place a parameter-bounded number of representatives of ``large'' equivalence classes into groups and delete all the remaining connected components in $\mathcal{D}_{S\cup S'}$.
Towards formalizing this, we fix two computable functions: $g(p)=2\uparrow\uparrow(p\cdot 2^{2p^2}\cdot 6+6)^p$ upper-bounds the size of the deletion set after we expand it to contain all small equivalence classes, and $f(p,x)=2^{2^{2^{2^{2^{2^{xp+1}}}}}}$ specifies a safe bound for how large the groups need to be.
   
\begin{definition}
\label{def:red-graph}
An induced subgraph $G'$ of $G$ is said to be a {\em reduced graph} of $G$ if there exists a positive integer $x\leq g(p)$ and a partition of $V(G') = S \uplus S' \uplus Y$ satisfying:
\begin{enumerate}
    \item $|S\cup S'|= x$ and \( S' \) is the set of all vertices in graphs in \( \mathcal{D} \) that do not belong to an $f(p,x)$-large equivalence class of $\sim$.
    \item If there are no $f(p,x)$-large equivalence classes of $\sim$, $Y=\emptyset$ and $V(G')=S\uplus S' = V(G)$. Otherwise, it holds that $Y=L_1 \uplus \cdots \uplus L_{f(p,x)}$ where \( L_i \) 
         is an $f(p,x)$-large group for each $i\in\{1,\cdots,f(p,x)\}$, and each pair of $L_i$ and $L_j$, $i \neq j$, is vertex-disjoint.
\end{enumerate}
\end{definition}

We now prove that a reduced graph can be computed efficiently by essentially following the intuitive description in the previous paragraph. One technical challenge that is treated in the proof is that when we increase the size of the set $S\cup S'$, we also increase the lower bound for our large equivalence classes.

\begin{lemma}
\label{lem:reducedcomputation}
    There exists a reduced graph $G'$ of $G$, and given $S$ and $\sim$ such a graph can be computed in polynomial time. Further, the number of vertices in $G'$ is upper-bounded by $2\uparrow\uparrow((p\cdot 2^{2p^2}\cdot 6+6)^p+10)$.
\end{lemma}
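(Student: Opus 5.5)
We construct $G'$ by an iterative absorption process. Set $x_0=|S|\le p$ and $S'_0=\emptyset$. In round $j$ we have a current deletion set $S\cup S'_j$ of size $x_j$. We let $S'_{j+1}$ be the set of all vertices of components of $\mathcal{D}$ whose $\sim$-class is not $f(p,x_j)$-large, and put $x_{j+1}=|S|+|S'_{j+1}|$. Since $f(p,\cdot)$ is non-decreasing in $x$, the sets $S'_j$ are nested and increasing. Hence the process stabilises once $x_{j+1}=x_j$. At that point $S'$ is exactly the set of vertices in non-$f(p,x)$-large classes, as required by the first item of Definition~\ref{def:red-graph}. Note that we keep using $\sim$ with respect to the original $S$, since the definition refers to $\mathcal{D}$ and the fixed $\sim$. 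The twin relation among components that are not absorbed is unchanged.

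The key step is bounding the number of rounds and the final value of $x$. By Fact~\ref{fact:size-equiv-class} there are at most $c:=p\cdot 2^{2p^2}$ equivalence classes. Each round that does not terminate must newly absorb at least one whole class, so there are at most $c$ productive rounds. In a round starting at $x_j$, every newly absorbed class has fewer than $f(p,x_j)$ members, each of at most $p$ vertices. This gives the recurrence $x_{j+1}\le x_j + c\cdot p\cdot f(p,x_j)$. Since $f(p,x)$ is a tower of six $2$'s topped by $2^{xp+1}$, one application raises the tower height by a constant (about $6$ plus a few for absorbing the factors $c,p,x$). Writing $x_j\le 2\uparrow\uparrow h_j$, we get $h_{j+1}\le h_j+O(1)$ with an explicit constant. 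Iterating $c$ times from $x_0\le p$ keeps $x$ below $g(p)=2\uparrow\uparrow(6c+6)^p$. This calculation is where care is needed: the tower-height bookkeeping has to absorb the multiplicative terms and the exponent $xp$ into a constant number of extra levels per round. I expect this to be the main obstacle, although it is routine. The stated bound $(6c+6)^p$ is generous, so a crude estimate such as $h_{j+1}\le h_j+6$ will suffice.

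If no $f(p,x)$-large class remains, we set $Y=\emptyset$, and then $G'=G$ satisfies Definition~\ref{def:red-graph} with $|V(G)|=x\le g(p)$. Otherwise, every class outside $S'$ has at least $f(p,x)$ members. We pick $f(p,x)$ distinct representatives from each large class and group them into $f(p,x)$ disjoint large groups $L_1,\dots,L_{f(p,x)}$, with group $L_i$ taking the $i$-th representative of each class. We then let $G'=G[S\cup S'\cup Y]$. This witnesses existence.

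For the size bound, $|V(G')|\le x + f(p,x)\cdot c\cdot p$. With $x\le g(p)$, the $f$-tower adds a constant number of levels, about $10$ after absorbing $c\cdot p$, which yields $2\uparrow\uparrow((6c+6)^p+10)$. For the running time, computing $\sim$ takes linear time by Fact~\ref{fact:size-equiv-class}. Each of the at most $c+1$ rounds only requires comparing class sizes against $f(p,x_j)$. These comparisons are polynomial, because we only need to test whether a class size is at most $n$, and $f(p,x_j)$ can be capped at $n+1$ rather than written out in full. Selecting the representatives is linear. Altogether, the computation runs in polynomial time.
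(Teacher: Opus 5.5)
Your proposal is correct and follows essentially the same route as the paper. The paper also grows $X\supseteq S$ by absorbing non-$f(p,|X|)$-large classes, one class per iteration rather than all of them per round, bounds the at most $p\cdot 2^{2p^2}$ iterations via $|X_i|\le |X_{i-1}|+p\cdot f(p,|X_{i-1}|)$, and then appends $f(p,|X|)$ large groups to get the same size bound. Your additions are more careful than the paper (capping $f(p,x_j)$ at $n+1$ for polynomial time, and counting $c\cdot p$ vertices per large group). The only slip is that the crude per-round increment should be roughly $+8$ rather than $+6$ tower levels, since $f(p,x)$ already stacks six $2$'s above $xp+1>x$; the slack in $(6c+6)^p$ easily absorbs this.
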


\begin{proof}
We present an algorithm to compute a reduced graph $G'$ below:

\begin{enumerate}
    \item Initialize $X=S$ and $\mathcal{D}':=\mathcal{D}$.
    \item As long as there exists an equivalence class $[H]_\sim$, $H\in \mathcal{D}'$, that is not $f(p,|X|)$-large:
    \begin{enumerate}
        \item Set $\mathcal{D}'=\mathcal{D}'\setminus [H]_\sim$ and $X=X\cup \{v:v\in H', H'\in [H]_\sim\}$.
    \end{enumerate}
    \item If $\mathcal{D}'\neq \emptyset$, then set $O:=X\uplus L_1\uplus \cdots \uplus L_{f(p,|X|)}$ where each $L_{1\leq i\leq f(p,|X|)}$ is a $f(p,|X|)$-large group. Else set $O:=X$.
    \item Output $G'=G[O]$.
\end{enumerate}

We first bound the size of $V(G')=O$. In each iteration of step~$2$, the vertices of all graphs in exactly one equivalence class $[H]_\sim$, $H\in \mathcal{D}'$, are added to the set $X$ and the graphs in $[H]_\sim$ are removed from $\mathcal{D}'$. Let $X_i$, $i\in \{1,\cdots,2^{2p^2}\}$ be the set $X$ at the end of the $i^{th}$ iteration of step $2$ and let there be $t$ iterations of step $2$. Note that $t=0$ if there is no successful iteration of step 2. Since there are at most $p\cdot 2^{2p^2}$ equivalence classes in $\sim$ by Fact~\ref{fact:size-equiv-class}, the algorithm repeats step~2 at most $p\cdot 2^{2p^2}$ times, thus $t\leq p\cdot 2^{2p^2}$.

Let $X_0:=S$, observe that $|X_0|=|S|\leq p$. If $t\geq 1$, by construction and by the fact that each graph in $\mathcal{D}_S$ has at most $p$ vertices, for each $i\in \{1,\cdots,t\}$, $|X_i|\leq |X_{i-1}|+p\cdot f(p,|X_{i-1}|)$. Since $t\leq p\cdot 2^{2p^2}$, $|X_t|\leq g(p)$.
   
Further, each $f(p,|X_t|)$-large group has at most $p\cdot2^{2p^2}$ vertices, and thus $f(p,|X_t|)$ many $f(p,|X_t|)$-large groups have at most $p\cdot 2^{2p^2}\cdot f(p,|X_t|)$ vertices. Therefore we have $|O|\leq |X_t|+p\cdot 2^{2p^2}\cdot f(p,|X_t|)\leq g(p)+p\cdot 2^{2p^2}\cdot f(p,g(p)) \leq 2\uparrow\uparrow((p\cdot 2^{2p^2}\cdot 6+6)^p+10)$.

Finally, observe that when the algorithm terminates all vertices in graphs in $\mathcal{D}$ that do not belong to an $f(p,|X_t|)$-large equivalence class belong to $X_t$. Further, $S\subseteq X_t$. For the $f(p,|X_t|)$-large equivalence classes, $f(p,|X_t|)$ many $f(p,|X_t|)$-large groups are added to $O$. Hence the induced graph $G'=G[O]$ produced by the algorithm is a reduced subgraph as per Definition~\ref{def:red-graph}, as desired.
\end{proof}

The central lemma we will be proving in the next subsection is that every contraction sequence of a reduced graph $G'$ can be lifted to a contraction sequence of $G$ with the same width. We formalize this statement below.

\begin{lemma}\label{lem:extensionVI}
Let $G'$ be a reduced graph of $G$. Then $\tww(G)=\tww(G')$. Moreover, given a partition of $G'$ into $S \uplus S' \uplus L_1 \uplus \cdots \uplus L_{f(p,x)}$ and a contraction sequence $\C'$ of $G'$ with width $k$, we can compute in time $\bigoh((g(p)+p\cdot p \cdot 2^{2p^2})!\cdot |G|^2)$ a contraction sequence $\C$ of $G$ with width $k$.
\end{lemma}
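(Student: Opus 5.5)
We prove $\tww(G)\le\tww(G')$ and $\tww(G')\le\tww(G)$; the second is immediate, since $G'$ is an induced subgraph of $G$. For the first, we take the given sequence $\C'$ of width $k$ for $G'$ and lift it to a width-$k$ sequence for $G$. The plan follows the Ramsey Pruning strategy: extract from $\C'$ a highly structured ``guiding subsequence'' on a sub-collection of the large groups, then show that this structure extends to any number of additional copies of the groups. We first reorganise $G$ so that the deleted part is described in terms of groups. Every component of $G-(S\cup S')$ lies in an $f(p,x)$-large class. Arbitrarily many extra components of one class can be padded by components of other classes; since restriction never increases width, padding is harmless. So we may assume $V(G)\setminus(S\cup S')$ splits into $N\ge f(p,x)$ vertex-disjoint $f(p,x)$-large groups $L_1,\dots,L_N$, all pairwise isomorphic via the canonical isomorphisms $\alpha$ fixing $S\cup S'$, with $G'$ being the subgraph induced by $S\cup S'$ and the first $f(p,x)$ of them.

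\textbf{Step 1: colouring triples.} Each triple $i<j<l$ of groups of $G'$ gets a colour: the isomorphism type of the restricted sequence $\C'[S\cup S'\cup L_i\cup L_j\cup L_l]$, with the three groups identified with a fixed abstract template via $\alpha$ in the order $i<j<l$. The type records the ordered list of contractions, in terms of template vertices, together with the order-type of the contraction steps. This restricted sequence acts on a graph with at most $x+3p\cdot 2^{2p^2}$ vertices. Hence the number of colours $L$ is bounded by roughly $(x+3p2^{2p^2})!^{\bigoh(1)}$, and the choice of $f(p,x)$ as a tower of height six makes $f(p,x)\ge L^{L^{2L\cdot t}}$ for a suitable constant $t$ (say $t=7$). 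Fact~\ref{fact:Ramsey} then yields a set $J$ of $t$ groups on which all triples receive the same colour. We restrict $\C'$ to $S\cup S'\cup\bigcup_{j\in J}L_j$; this keeps the width at most $k$.

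\textbf{Step 2: the template and its extension.} Homogeneity on triples says that the interleaving of contractions among any three groups, including all contractions involving $S\cup S'$, is independent of which ordered triple is chosen. We use this to define, for every $N$, a sequence $\C_N$ on $S\cup S'\cup L_1\cup\dots\cup L_N$. We take the homogeneous sequence on $J$ and replace the ``middle'' group by arbitrarily many consecutive clones. Each clone imitates the middle group's pattern relative to its left and right neighbours, and contractions between clones copy the pattern between consecutive groups. Concretely, we first decide the global order of events via the pairwise and triplewise order-types, which are consistent by homogeneity. Then we check that every contraction in $\C_N$ is well defined, meaning the two bags it merges exist at that moment. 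This only needs to be checked on the at most three groups involved, so it reduces to the homogeneous colour.

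\textbf{Step 3: bounding red degree (main obstacle).} We need every vertex of every trigraph of $\C_N$ to have red degree at most $k$. Whether a bag $B$ has a red edge to a bag $B'$ depends only on the vertices in $B\cup B'$, which meet at most two groups besides $S\cup S'$. The number of such red neighbours lying in clone groups, however, could grow with $N$, and pairwise information alone does not bound it. The plan is to show that a red edge from $B$ into a clone group $L_m$ forces a pattern, witnessed on the triple formed by the group(s) of $B$ together with $L_m$, that is order-isomorphic among all such $m$. If red neighbours existed in unboundedly many clones, we would then obtain, already inside the homogeneous set $J$ (of size $t\ge 7$), a bag with red neighbours in every other group. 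The mapping from $\C_N$ back to $J$ also preserves red edges to and from $S\cup S'$ and inside the groups. Combined, these show that red degree in $\C_N$ equals red degree in a corresponding trigraph of the restricted $\C'$, hence is at most $k$. Getting this correspondence exactly right, and ensuring that triples rather than pairs are enough, is the step I expect to be hardest.

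\textbf{Step 4: running time.} Computing all colours and the homogeneous set takes time bounded by enumerating the triples and the $\bigoh((g(p)+p\cdot p2^{2p^2})!)$ possible types. Writing down $\C_N$ takes $\bigoh(|G|^2)$ time, which gives the stated bound.
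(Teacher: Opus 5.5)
Your overall strategy (hypergraph Ramsey on triples of groups, then stretching a homogeneous guiding sequence to arbitrarily many groups) is the paper's. However, there is a genuine gap exactly where you flag it, and it comes from taking the homogeneous family of constant size ($t=7$). Red degree adds up over groups, so a sequence on $7$ groups cannot certify a bound of $k$ once $k\ge 7$.

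Here is a concrete failure. All copies $x_{L_i}$ of some $x\in R$ have the same neighbours in $S$ and none in $S'$. Suppose that at some moment a bag contains an $s\in S$ adjacent to these copies together with an $s'\in S\cup S'$ adjacent to none of them, while the copies lie in pairwise distinct bags. Then the guiding sequence on $J$ shows only $|J|=7$ such red neighbours, which is compatible with width $k$, whereas your $\C_N$ has $N$ of them. So ``a bag with red neighbours in every other group of $J$'' is no contradiction, and the claimed equality of red degrees fails. Note also that after horizontal merges a bag can meet many groups, not at most two; what is local is the witness of a red edge, not the bags.

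The paper takes the uniform family of size $3p+3\ge 3k+3$. One may assume $k\le p$, since otherwise the trivial width-$p$ sequence from Section~\ref{sec:prelims} suffices, and $f(p,x)$ is large enough for Fact~\ref{fact:Ramsey} with this size. Suppose a vertex $q$ had $k+1$ red neighbours. Each red edge is witnessed by three original vertices ($q_x,n_0,n_1$ or $q_0,q_1,n_x$) lying in $S\cup S'$ and at most three groups. The at most $3k+3$ groups so collected are renamed order-preservingly into the uniform family, where the same red edges would appear, contradicting width $k$.

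The second gap is Step~2, on which Step~3 depends. Triple homogeneity does not by itself make a ``global order of events'' for $N$ groups well defined. The relative order of contractions touching four distinct groups is recorded by no triple colour, and diagonal contractions $\{x_{B_i},y_{B_j}\}$ with $x\neq y$ could chain across groups. Your cloning recipe is asserted rather than derived. What Step~3 actually needs is that the restriction of $\C_N$ to any $3k+3$ groups coincides, up to order-preserving renaming, with the guiding sequence.

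The paper earns this through a normal form:
\begin{itemize}
\item every non-unique contracted pair is preceded by a corresponding pair involving $B_1$ or $B_\omega$ (Lemma~\ref{lem:first-occurence});
\item there are no diagonal contractions (Lemma~\ref{lem:nodiag}), which is precisely where ternary rather than pairwise uniformity is indispensable;
\item the sequence splits into blocks between fully synchronized trigraphs, each either a single unique contraction or a rigid sweep $C_{\sigma(1)},C_{\sigma(1),\sigma(2)},C_{\sigma(2)},\dots,C_{\sigma(\omega)}$ with $\sigma$ the identity or its reversal (Lemma~\ref{lem:structblock}).
\end{itemize}
Only with this structure is inserting additional groups well defined (Lemma~\ref{lem:insertblock}) and the renaming argument of Lemma~\ref{lem:insertblob} valid. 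Your proposal needs an equivalent of these lemmas.
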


Before proceeding towards the proof of Lemma~\ref{lem:extensionVI} (which will be our aim in the next two subsections), we show that establishing the lemma would allow us to prove Main Result~\ref{result2}.

\begin{theorem}
\label{thm:maintwo}
An optimal contraction sequence of an input graph $G$ can be computed in time $f(p)\cdot n$, where $p$ is the vertex integrity of $G$ and $f$ is a computable function.
\end{theorem}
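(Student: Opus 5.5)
The proof assembles the pipeline from the preceding results, taking Lemma~\ref{lem:extensionVI} as given. First compute $p=\vi(G)$ and a set $S$ such that every component $H$ of $G-S$ satisfies $|V(H)\cup S|\le p$. By the result recalled in Section~\ref{sec:prelims}, this takes time $\bigoh(p^{p+1}\cdot n)$. Next compute the partition of $\mathcal{D}_S$ into equivalence classes of $\sim$, which takes time $\bigoh(p\cdot 2^{2p^2}\cdot n)$ by Fact~\ref{fact:size-equiv-class}. Then run the procedure from the proof of Lemma~\ref{lem:reducedcomputation}. It yields a reduced graph $G'$ whose vertex count is bounded by a function of $p$ alone, namely $2\uparrow\uparrow((p\cdot 2^{2p^2}\cdot 6+6)^p+10)$. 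It also yields the explicit partition $S\uplus S'\uplus L_1\uplus\cdots\uplus L_{f(p,x)}$, recorded during the construction. To meet the $f(p)\cdot n$ bound rather than just polynomial time, I would check that each step of that procedure can be done in linear time. Repeatedly moving a non-large class into $X$ involves at most $p\cdot 2^{2p^2}$ iterations, each scanning the class sizes. Picking one representative per large class for each group $L_i$ is a single pass.

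Second, compute an optimal contraction sequence $\C'$ of $G'$ by brute force via Fact~\ref{fact:computecontract}. Its running time $2^{\bigoh(|V(G')|\log|V(G')|)}$ depends only on $p$. Let $k=\tww(G')$ be its width. Then apply Lemma~\ref{lem:extensionVI} to $G'$, the partition and $\C'$, obtaining a contraction sequence $\C$ of $G$ of width $k$. The same lemma gives $\tww(G)=\tww(G')=k$, so $\C$ is optimal for $G$.

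The main obstacle is the running time. Lemma~\ref{lem:extensionVI} as stated costs $\bigoh((g(p)+p^2 2^{2p^2})!\cdot|G|^2)$, which is quadratic in $n$. I would argue that the output $\C$ only needs an explicit description of size $\bigoh(n)$, which holds since a contraction sequence has $n-1$ steps. The quadratic factor comes from naively materializing trigraphs. If we output only the list of contracted pairs, each step of the lifting can be simulated in time depending on $p$. This is because components outside $S\cup S'$ have at most $p$ vertices and are inserted by replicating the pattern of a group representative. If such a refinement is not available from the lemma's proof, I would settle for $f(p)\cdot n^{\bigoh(1)}$, which still establishes \FPT\ and Main Result~\ref{result2}. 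I would flag the linear bound as depending on outputting the sequence as a list of pairs. All remaining factors are computable functions of $p$, so summing the steps gives time $f(p)\cdot n$ for a computable $f$.
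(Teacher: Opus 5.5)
Your proposal is correct and follows the paper's proof. Compute $S$, build the reduced graph $G'$ via Lemma~\ref{lem:reducedcomputation}, brute-force an optimal sequence for $G'$ with Fact~\ref{fact:computecontract}, and lift it with Lemma~\ref{lem:extensionVI}; optimality follows because $G'$ is an induced subgraph of $G$, or equivalently from $\tww(G)=\tww(G')$. Your attention to the $f(p)\cdot n$ bound is well placed: the paper only asserts it, though Lemma~\ref{lem:reducedcomputation} promises polynomial time and Lemma~\ref{lem:extensionVI} carries a $|G|^2$ factor, and your list-of-pairs argument closes that gap. The lifted sequence for $G_\mathrm{super}$ has $\bigoh(f(p)\cdot n)$ steps, each a copy of a core contraction via $\alpha$, and restricting it to $G$ also takes linear time.
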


\begin{proof}
Given a graph $G$, we first compute a deletion set $S$ using the known algorithm for computing vertex integrity~\cite{DrangeDH16}. Then, we invoke Lemma~\ref{lem:reducedcomputation} to compute a reduced graph $G'$. Recalling that $G'$ has order bounded by $2\uparrow\uparrow((p\cdot 2^{2p^2}\cdot 6+6)^p+10)$, we invoke Fact~\ref{fact:computecontract} to compute an optimal contraction sequence $\C'$ for $G'$ in time depending solely on $p$. Finally, we invoke Lemma~\ref{lem:extensionVI} to obtain a contraction sequence $\C$ for the original graph $G$ of the same width as $\C$. Since $\C'$ is an optimal contraction sequence for an induced subgraph of $G$, it follows that $\C$ must be an optimal contraction sequence of $G$, as desired.
\end{proof}

\subsection{Towards Lemma~\ref{lem:extensionVI}: The Ramsey Machinery}
\label{sub:lifting}
Recalling Lemma~\ref{lem:extensionVI}, let us consider a reduced graph $G'$ of the input graph $G$ such that $V(G')\neq V(G)$ (as otherwise the lemma is trivial). Hence, let $V(G') = S \uplus S' \uplus L_1 \uplus \cdots \uplus L_{f(p, |S\cup S'|)}$ be a partition witnessing that $G'$ is a reduced graph. Let $\mathcal{L}\coloneq \{L_1, \dots, L_{f(p, |S\cup S'|)}\}$. For brevity, we will hereinafter use \emph{large} as shorthand for $f(p, |S\cup S'|)$-large. Note that each of the vertex sets $L_i$, $i\in [f(p, |S\cup S'|)]$ forms a large group. 

For identification and tie-breaking purposes, it will be useful to fix an arbitrary total order $<^\#$ over $V(G)$. Among others, this immediately yields a total order of the equivalence classes $[\sim]$ of $\sim$ and a total order $\prec$ of the large groups in $L_1 \uplus \cdots \uplus L_{f(p, |S\cup S'|)}$ (both can be defined, e.g., by the order in which each vertex set is seen when following $<^\#$, however the specific way we induce these total orders does not matter, only their existence). It will also be useful to introduce a notion of ``canonical representative'' for each of the large equivalence classes in $[\sim]$; intuitively, one could imagine that these canonical representatives all occur in the same large group, but this is neither important nor necessary.

\begin{definition}
    Let $k$ be the number of large equivalence classes of $\sim$ and let $R_i$ 
    be an arbitrarily chosen but fixed canonical representative of the $i^{th}$ large  equivalence class of $\sim$. Let $R:=V(R_1)\uplus \cdots \uplus V(R_k)$.
\end{definition}

The set $R$ above will essentially be used as a sort of blank ``canvas'' for our future statements. In particular, its sole purpose is to have a natural isomorphism to $R$ that allows us to map consistently between different large groups and identify vertices between groups:

\begin{definition}
For a large group $X$ with $G[X]=H_1\uplus \cdots \uplus H_k$, $H_i\in [R_i]_\sim$ for each $i\in [k]$, let $\alpha_X$ be an isomorphism from $G[X]$ to $G[R]=R_1\uplus \cdots \uplus R_k$ such that for each $i\in [k]$ and vertex $u\in H_i$, $\alpha_X(u)\in V(R_i)$, and for each $v\in S$, $uv\in E(G)$ if and only if $\alpha_X(u)v\in E(G)$.    \end{definition}

For any large group $X$ and for each $u\in R$ we will sometimes use $u_X$ as shorthand for $\alpha_X^{-1}(u)$. 
From now, say $L' \prec L^\circ \prec L^\star$, in $\mathcal{L}$. The role of these will be to serve as an extended ``canvas'' for comparing triples of large groups.

\begin{definition}
For $X\prec Y\prec Z\in \mathcal{L}\setminus \{L',L^\circ,L^\star\}$, we define $\phi_{X,Y,Z}:S\cup S'\cup X\cup Y \cup Z \rightarrow S\cup S'\cup L'\cup L^\circ \cup L^\star$ as follows:
\begin{itemize}
        \item $\phi_{X,Y,Z}(s)=s$ for each $s\in S\cup S'$
        \item $\phi_{X,Y,Z}(x)=\alpha^{-1}_{L'}(\alpha_X(x))$ for each $x\in X$
        \item $\phi_{X,Y,Z}(y)=\alpha^{-1}_{L^\circ}(\alpha_Y(y))$ for each $y\in Y$
        \item $\phi_{X,Y,Z}(z)=\alpha^{-1}_{L^\star}(\alpha_Z(z))$ for each $z\in Z$        
\end{itemize}
Note that $\phi_{X,Y,Z}$ is an isomorphism from $G[S\cup S'\cup X\cup Y \cup Z]$ to $G[S\cup S'\cup L'\cup L^\circ \cup L'']$.
\end{definition}

Now, let us consider an arbitrary hypothetical ``solution'' to our problem on $G$---that is, a contraction sequence $\C'$ of $G'$ with minimum width. Given such $\C'$, we can characterize the behavior of a triple of large groups via a signature of size that is bounded solely by our parameter. We formalize this below (for a choice of $\C'$):

\begin{definition}
For $X\prec Y \prec Z \in \mathcal{L}\setminus\{L',L^\circ, L^\star\}$, 
  $\textsf{info}_{\C'}(X, Y,Z)$ is the contraction sequence on $G[S\cup S'\cup L' \cup L^\circ \cup L^\star]$ obtained from $\C'[S\cup S'\cup X \cup Y \cup Z]$ by applying $\phi_{X,Y,Z}$ on every vertex in every trigraph in $\C'[S\cup S'\cup X \cup Y \cup Z]$.
\end{definition}

Essentially, $\textsf{info}_{\C'}(X, Y,Z)$ is $\C'[S\cup S'\cup X \cup Y \cup Z]$ but translated into the shared canvas of $G[S\cup S'\cup X \cup Y \cup Z]$---as a consequence, for two distinct triples $X,Y,Z$ and $X',Y',Z'$ of large groups it may happen that $\textsf{info}_{\C'}(X, Y,Z)=\textsf{info}_{\C'}(X', Y',Z')$. 
Crucially, since the number of equivalence classes of $\sim$ is upper-bounded by $p \cdot 2^{2p^2}$ as per Fact~\ref{fact:size-equiv-class}, by Definition~\ref{def:red-graph} we have $|S\cup S'\cup X \cup Y \cup Z|\leq h(p,x)$ where $h(p,x)=x + 3p^2 \cdot 2^{2p^2}$. Hence, the number of possible  contraction sequences over $G[S\cup S'\cup X \cup Y \cup Z]$ can be upper-bounded by $\texttt{col}(p,x)=h(p,x)^{2\cdot h(p,x)}$.

The above considerations mean that for any set of three large groups, we can apply $\prec$ to obtain an ordering $X \prec Y \prec Z$ and then view $\textsf{info}_{\C'}(X, Y,Z)$ as a color from a set of at most 
$h(p, g(p))^{2\cdot h(p, g(p))}$ many colors, since $h(p, .)$ is increasing and $x\leq g(p)$.
We can then invoke Fact~\ref{fact:Ramsey} to show that the reduced graph contains a subset of $3p+3$ ``uniformly behaved'' large groups w.r.t.\ $\C'$, regardless of what $\C'$ is.

\begin{lemma}
\label{lem:nicelarge}
    Let $G'$ be a reduced graph of $G$ and $\C'$ be an arbitrary contraction sequence of $G'$ of width $w$. There exist distinct large groups $H_1,\dots,H_{3p+3} \in \mathcal{L} \setminus \{L',L^\circ,L^\star\}$ such that  for each $1\leq a<b<c\leq 3p+3$ and  each $1\leq a'<b'<c'\leq 3p+3$, $\textsf{info}_{\C'}(H_a, H_b, H_c)=\textsf{info}_{\C'}(H_{a'}, H_{b'}, H_{c'})$.    
\end{lemma}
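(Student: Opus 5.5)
We build a complete $3$-uniform hypergraph on the remaining large groups, colour each hyperedge by its $\textsf{info}$ value, and apply Fact~\ref{fact:Ramsey}. Let the vertex set of $\mathcal{H}$ be $\mathcal{L}\setminus\{L',L^\circ,L^\star\}$, and let $x=|S\cup S'|\leq g(p)$. For a $3$-element set $\{A,B,C\}$ of vertices, sort it by $\prec$ as $X\prec Y\prec Z$ and colour the hyperedge with $\textsf{info}_{\C'}(X,Y,Z)$. Because we sort by $\prec$, every hyperedge gets exactly one well-defined colour, even though $\textsf{info}$ is defined on ordered triples. By the counting remark preceding the lemma, every such colour is a contraction sequence of $G[S\cup S'\cup L'\cup L^\circ\cup L^\star]$, a graph on at most $h(p,x)$ vertices. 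So the colours come from a palette of size at most $L:=\texttt{col}(p,x)=h(p,x)^{2h(p,x)}$, which we may index by $[L]$.

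Next apply Fact~\ref{fact:Ramsey} with $t:=3p+3$. This needs
\[
|\mathcal{L}|-3 = f(p,x)-3 \;\geq\; L^{L^{2L(3p+3)}}.
\]
This inequality is the only step needing real work. We have $h(p,x)=x+3p^2 2^{2p^2}$, so $L=h^{2h}\leq 2^{2h^2}$, and therefore $L^{L^{2L(3p+3)}}$ is a tower of roughly four exponentials over a polynomial in $h$. Meanwhile $f(p,x)=2^{2^{2^{2^{2^{2^{xp+1}}}}}}$ is a tower of six exponentials over $xp+1$.

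The delicate point is that $h$ is additive in $x$ but carries a term $3p^2 2^{2p^2}$ that does not depend on $x$. We handle it with $x\geq|S|$ and the extra exponential levels of $f$. The construction of $S'$ ensures $x\geq 1$ (presumably $S\neq\emptyset$ when large groups exist; if needed, treat that degenerate case separately). Then $h\leq 2^{2p^2+2}x\leq 2^{2^{xp+1}}$ for $p\geq 2$, and each further exponential in the Ramsey bound is absorbed by one of the four remaining levels of $f$. I expect this bookkeeping to be routine but fiddly. If the margins turn out tight, the bound $2^{xp+1}\geq 2p^2+2+\log x$, which holds whenever $x\geq p$, would need to be checked carefully in the small-$x$ case.

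Fact~\ref{fact:Ramsey} then gives $3p+3$ groups on which every hyperedge has the same colour. Listing them in $\prec$-order as $H_1\prec\dots\prec H_{3p+3}$, any indices $a<b<c$ give $H_a\prec H_b\prec H_c$. The colour of $\{H_a,H_b,H_c\}$ is exactly $\textsf{info}_{\C'}(H_a,H_b,H_c)$, and all these colours coincide, which proves the lemma.
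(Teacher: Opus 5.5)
Your proposal is the paper's proof. It uses the same complete $3$-uniform hypergraph on $\mathcal{L}\setminus\{L',L^\circ,L^\star\}$, the same $\prec$-sorted colouring by $\textsf{info}_{\C'}$ with at most $\texttt{col}(p,x)$ colours, and the same application of Fact~\ref{fact:Ramsey} with $t=3p+3$; the paper simply asserts the size inequality.

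One correction to your bookkeeping. The chain $h\le 2^{2p^2+2}x\le 2^{2^{xp+1}}$ is false for small $x$: for $x=1$, $p=2$ one has $h=3073>1024>256$. Nothing breaks, though. Taking $\log_2$ three times reduces the requirement to roughly $3h\log_2 h\le 2^{2^{2^{xp+1}}}$. This holds with enormous room, since $h\le x\cdot 2^{3p^2}$. You also need not presume $S\neq\emptyset$: $x\ge 1$ is guaranteed outright by Definition~\ref{def:red-graph}, which requires $x$ to be positive.
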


\begin{proof}
Let $Q$ be an auxiliary complete hypergraph constructed as follows. The vertex set of $Q$ is the set of large groups in $\mathcal{L} \setminus \{L',L^\circ,L^\star\}$. For each set of three distinct large groups $\{X,Y,Z\}$ where $X\prec Y\prec Z$, create a hyperedge $\{X,Y,Z\}$ and label it with $\textsf{info}_{\C'}(X, Y,Z)$. Since $|V(Q)|=f(p,x)-3$ and in particular 

$$|V(Q)|\geq 2^{2^{2^{2^{2^{2^{xp}}}}}} \geq \texttt{col}(p,x)^{\texttt{col}(p,x)^{2\texttt{col}(p,x) \cdot (3p+3)}},$$ 
the claim follows from Fact~\ref{fact:Ramsey}.
 \end{proof}

We call a set of $3p+3$ large groups satisfying the conditions of Lemma~\ref{lem:nicelarge} \emph{uniform} (w.r.t.\ $\C'$). 
 Below, we show that while Lemma~\ref{lem:nicelarge} only stipulates that a uniform set satisfies a ternary form of uniformity, this in fact implies also uniformity of pairs and singletons from the set. Many of our arguments in the next subsection in fact only rely on these simpler forms of uniformity. Towards formalizing these, we define $\textsf{info}_{\C'}(X, Y)$ and $\textsf{info}_{\C'}(X)$ analogously as $\textsf{info}_{\C'}(X, Y, Z)$ but where the vertices are only mapped to $(L', L^\circ)$ and to $L'$, respectively.

\begin{lemma}
\label{lem:nicelargebis}
Let $H_1,\dots,H_{3p+3}$ be a uniform set of large groups. Then for each $1\leq i<j\leq 3p+3$ and each $1\leq k<\ell\leq 3p+3$, $\textsf{info}_{\C'}(H_i, H_j)=\textsf{info}_{\C'}(H_k, H_\ell)$ and for each $1\leq i\leq 3p+3$ and each $1\leq k \leq 3p+3$, $\textsf{info}_{\C'}(H_i)=\textsf{info}_{\C'}(H_\ell)$.
\end{lemma}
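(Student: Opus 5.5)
The plan is to derive the pairwise and singleton statements from the ternary uniformity by \emph{restriction}. The key observation is that $\textsf{info}_{\C'}(X,Y)$ can be recovered from $\textsf{info}_{\C'}(X,Y,Z)$ by a fixed operation that does not depend on $X,Y,Z$. Concretely, I would define the map $\rho_{12}$ that takes a contraction sequence $\mathcal{S}$ of $G[S\cup S'\cup L'\cup L^\circ\cup L^\star]$ to the restriction $\mathcal{S}[S\cup S'\cup L'\cup L^\circ]$. I would then verify the identity
\[\rho_{12}(\textsf{info}_{\C'}(X,Y,Z))=\textsf{info}_{\C'}(X,Y).\]
This holds because restriction commutes with renaming vertices by the bijection $\phi_{X,Y,Z}$. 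Moreover, $\phi_{X,Y,Z}$ restricted to $S\cup S'\cup X\cup Y$ coincides with $\phi_{X,Y}$, both being the identity on $S\cup S'$ and $\alpha^{-1}_{L'}\alpha_X$, $\alpha^{-1}_{L^\circ}\alpha_Y$ on $X$, $Y$. Finally, $(\C'[A])[B]=\C'[B]$ for $B\subseteq A$, since omitting steps that involve vertices outside $A$ and then outside $B$ is the same as omitting those outside $B$. I would define $\rho_1$ analogously, restricting to $S\cup S'\cup L'$, and get $\rho_1(\textsf{info}_{\C'}(X,Y,Z))=\textsf{info}_{\C'}(X)$.

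With these identities, the pairwise claim follows directly. Given $i<j$ and $k<\ell$ in $[3p+3]$, I choose third indices $m>j$ and $m'>\ell$. Since $3p+3\ge 3$, this is possible unless $j$ or $\ell$ equals $3p+3$. So I would handle the cases by position: if both $j,\ell<3p+3$, I take $m=m'=3p+3$. In general, however, $(i,j)$ must appear as the first two of an ordered triple, and when $j=3p+3$ no such triple exists. To avoid this case split I would use a second restriction: restrict the triple to its first and third groups, via the restriction onto $S\cup S'\cup L'\cup L^\star$ followed by the renaming $L^\star\to L^\circ$ through $\alpha^{-1}_{L^\circ}\alpha_{L^\star}$, which is again uniform in the triple. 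Every pair $i<j$ then arises as (first,second) of $(i,j,m)$ if some $m>j$ exists, and otherwise as (first,third) of $(i,m,j)$ with $i<m<j$. The latter case has $j=3p+3$ and can be assumed $j-i\ge2$, while if $j=i+1=3p+3$ one uses (second,third) of $(i-1,i,j)$. Each of these restriction-and-rename maps is fixed, so ternary uniformity makes $\textsf{info}_{\C'}(H_i,H_j)$ equal to a single fixed image of the common triple color. The only remaining check is that all three positional restrictions give the same pair; this is shown by comparing a pair such as $(1,3)$ through two triples, $(1,2,3)$ as (first,third) and $(1,3,4)$ as (first,second). Because $3p+3\ge 4$, all three positional restrictions of the common color coincide.

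The singleton statement, which I read as claiming $\textsf{info}_{\C'}(H_i)=\textsf{info}_{\C'}(H_k)$ (the $\ell$ in the statement being a typo for $k$), then follows by restricting pairs: $\textsf{info}_{\C'}(H_i)$ is the first-coordinate restriction of $\textsf{info}_{\C'}(H_i,H_j)$ for some $j\ne i$, or the second-coordinate restriction renamed to $L'$. Pairwise uniformity then gives the claim, using the same consistency argument. The main obstacle is purely bookkeeping: verifying that restriction commutes with the canvas renamings, so that the restrictions are genuinely independent of the particular groups. This relies on all $\alpha$ maps being fixed bijections onto the common canvas $R$.
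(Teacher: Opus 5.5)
Your proposal is correct and follows essentially the same route as the paper. Both arguments obtain pairwise uniformity by restricting the common triple colour to two of its positions, use overlapping triples such as $(H_1,H_2,H_3)$, $(H_1,H_3,H_4)$, $(H_2,H_3,H_4)$ to make the positional restrictions agree, treat pairs ending in $H_{3p+3}$ separately, and repeat the argument one level down for singletons. Your explicit restriction-and-rename maps $\rho$ make precise the paper's informal ``equal up to renaming'' step, and you are right that $H_\ell$ in the singleton claim should read $H_k$.
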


\begin{proof}
We first prove the claim for pairs of indices. For any $1 \leq a < b < c \leq 3p+3$, we have $\textsf{info}_{\C'}(H_a, H_b, H_c)= \textsf{info}_{\C'}(H_1, H_2, H_3)= \textsf{info}_{\C'}(H_1, H_3, H_4)=\textsf{info}_{\C'}(H_2, H_3, H_4)$ because of $H_1,\dots,H_{3p+3}$ being a uniform set of large groups as per Lemma~\ref{lem:nicelarge}. The two latter equalities imply that the contraction sequences $\C'[S\cup S' \cup H_1 \cup H_2]$, $\C'[S\cup S' \cup H_1 \cup H_3]$ and $\C'[S\cup S' \cup H_2 \cup H_3]$ are the equal up to renaming (i.e., up to transferring to $S\cup S' \cup L' \cup L^\circ$). At the same time, the first equality, when applied to $(a,b,c)=(i,j,x\notin \{i, j\})$ for $j< 3p+3$ implies that  $\C'[S\cup S' \cup H_i \cup H_j]$ is equal up to renaming to one of $\C'[S\cup S' \cup H_1 \cup H_2]$, $\C'[S\cup S' \cup H_1 \cup H_3]$ and $\C'[S\cup S' \cup H_2 \cup H_3]$, and by the previous argument, it is equal up to renaming to each of them. We then obtain easily that for each $1\leq i<j< 3p+3$ and each $1\leq k<\ell\leq 3p+3$, $\textsf{info}_{\C'}(H_i, H_j)=\textsf{info}_{\C'}(H_1, H_2)=\textsf{info}_{\C'}(H_k, H_\ell)$. The argument for the case where $j=3p+3$ is analogous whereas one begins by observing that
$\C'[S\cup S' \cup H_2 \cup H_3]$, $\C'[S\cup S' \cup H_2 \cup H_4]$ and $\C'[S\cup S' \cup H_3 \cup H_4]$ are equal up to renaming.

The reasoning is similar for the second claim: for any $1 \leq a < b \leq 3p+3$, $\textsf{info}_{\C'}(H_a, H_b)= \textsf{info}_{\C'}(H_1, H_2)= \textsf{info}_{\C'}(H_1, H_3)=\textsf{info}_{\C'}(H_2, H_3)$ because of the claim we just proved. We obtain that $\C'[S\cup S' \cup H_1]$ and $\C'[S\cup S' \cup H_2]$ are equal up to renaming (i.e., up to transferring to $S\cup S' \cup L'$). At the same time, the first equality, when applied to $\{a, b\}=\{i,j\neq i\}$ implies that  $\C'[S\cup S' \cup H_i]$ is equal up to renaming to $\C'[S\cup S' \cup H_1]$ or $\C'[S\cup S' \cup H_2]$, and by the previous argument, it is equal up to remaining to both of them. We then have that for each $1\leq i\leq 3p+3$ and each $1\leq k\leq 3p+3$, $\textsf{info}_{\C'}(H_i)=\textsf{info}_{\C'}(H_1)=\textsf{info}_{\C'}(H_k)$.
\end{proof}

\subsection{The Proof of Lemma~\ref{lem:extensionVI}}
 
Let $\mathcal{F} \subseteq \mathcal{L} \setminus \{L', L^\circ,L^\star\} $ be the family of distinct large groups whose existence is given by \cref{lem:nicelarge} with $\mathcal{F}\coloneq (B_1, \dots, B_\omega)$, where $B_i \prec B_j$ if $i<j$.
Let $\C^*$ be the restriction of $\C'$ to $V(\bigcup_i B_i \cup S \cup S')$.

\begin{definition}\label{def:types}
We say that a contraction $c$ in $\C^*$ is:\begin{itemize}
\item \emph{unique} if it contracts bags intersecting $S\cup S'$ together; otherwise
\item \emph{special} if it contracts a bag intersecting a given $B_i$ with a bag intersecting $S \cup S'$; otherwise
\item \emph{vertical} if it contracts two bags intersecting a given $B_i$; otherwise
\item \emph{horizontal} if it contracts a bag $x$ with a bag $y$ such that $\exists v_0\in \beta(x)$ and $\exists w_0\in \beta(y)$ such that $\alpha(v_0)=w_0$; and
 \item \emph{diagonal} if none of the above cases apply.  \end{itemize}
\end{definition}

Intuitively, \emph{unique} contractions will form a small number of ``milestones'' in our considered contraction sequence, \emph{special} contractions merge vertices from large groups with $S\cup S'$, \emph{vertical} contractions merge vertices from the same large group and \emph{horizontal} contractions merge corresponding vertices across large groups. We now prove several useful properties of such contractions, including the fact that diagonal contractions can be disregarded; however, before that it will be useful to introduce some additional terminology.
For the following, it will be useful to recall that for a vertex $u\in R$ and large group $X$, we use the notation $u_X$ to denote the counterpart of $u$ in $X$.

\begin{definition}
             We say that a contraction $c$ of the bags $b_1$ and $b_2$ \emph{contracts an unordered pair} $\{x, y\}$ if $x\in \beta(b_1)$ and $y\in \beta(b_2)$. Moreover $c$ \emph{involves a \blob $B_i$} if it contracts a pair intersecting $B_i$.
             
Given an unordered pair $\{a,b\}\in {V(G) \choose 2}$ of vertices, we introduce the set $\textsf{Corr}_{\{a,b\}}$ of \emph{corresponding pairs} as follows:
    \begin{itemize}
\item for $\{a,b\}=\{x_{B_i},s\}$ for some $x\in R$ and $s \in S\cup S'$: $\textsf{Corr}_{\{a,b\}} = \{\{x_{B_1}, s\},\{x_{B_\omega}, s\}\}$, 
\item for $\{a,b\}=\{x_{B_i}, y_{B_j}\}$ for some $x,y\in R$ where $i<j$: $\textsf{Corr}_{\{a,b\}}= \{\{x_{B_1}, y_{B_2}\},\{x_{B_{\omega-1}},y_{B_\omega}\}\}$, and 
\item for $\{a,b\}=\{x_{B_i}, y_{B_i}\}$ for some $x,y\in R$: $\textsf{Corr}_{\{a,b\}} = \{\{x_{B_1}, y_{B_1}\}, \{x_{B_\omega}, y_{B_\omega}\}\}$.
\end{itemize}
\end{definition}

We note that, while the $\textsf{info}$ notation used in Lemma~\ref{lem:nicelargebis} establishes a notion of similarity between the behavior of large groups, $\textsf{Corr}$ allows us to use the first and last large group as markers in order to facilitate our further proof arguments. Indeed, we will show that $B_1$ and $B_\omega$ have a special role: 
each contraction outside of ${S\cup S' \choose 2}$  has a preceding counterpart which involves et least one of $B_1$ and $B_\omega$.
The proofs of the next two lemmas rely heavily on the Ramsey machinery of Subsection~\ref{sub:lifting}. In particular, the reason one needs to consider Ramsey hypergraph theory is to obtain Lemma~\ref{lem:nodiag}.

\begin{lemma}
\label{lem:first-occurence}
 Let $c$ be any non-unique contraction and $\{x,y\}$ be a pair contracted by $c$. There is $\{x',y'\} \in \textsf{Corr}_{\{x,y\}}$ that is either contracted by $c$ or had already been contracted by an earlier contraction in the sequence. 
\end{lemma}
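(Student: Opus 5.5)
We argue within $\C^*$, the restriction of $\C'$ to $S\cup S'\cup\bigcup_i B_i$, and use only the uniformity of pairs and singletons from Lemma~\ref{lem:nicelargebis}, together with the observation that a restriction preserves the relative order of contractions. Since $c$ is not unique, a pair $\{x,y\}$ contracted by $c$ falls into exactly one of the three cases in the definition of $\textsf{Corr}$: (i) $\{x_{B_i},s\}$ with $s\in S\cup S'$; (ii) $\{x_{B_i},y_{B_j}\}$ with $i<j$; (iii) $\{x_{B_i},y_{B_i}\}$. The key notion is the \emph{first moment} at which the pair becomes contracted, i.e., the first trigraph in which both vertices share a bag. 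Since $c$ contracts $\{x,y\}$, that moment is exactly $c$.

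In case (iii), consider $\textsf{info}_{\C'}(B_i)$. It equals $\textsf{info}_{\C'}(B_1)$ and $\textsf{info}_{\C'}(B_\omega)$ by Lemma~\ref{lem:nicelargebis}. Hence in $\C'[S\cup S'\cup B_1]$ the pair $\{x_{B_1},y_{B_1}\}$ becomes merged at the same step index of the restricted sequence as $\{x_{B_i},y_{B_i}\}$ does in $\C'[S\cup S'\cup B_i]$.

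This does not yet compare positions in the global sequence $\C^*$, so we compare through the pair sequence for $B_1,B_i$ (assuming $1<i$). By uniformity of pairs, $\C'[S\cup S'\cup B_1\cup B_i]$ is isomorphic via renaming to $\C'[S\cup S'\cup B_i\cup B_\omega]$, which also equals $\C'[S\cup S'\cup B_1\cup B_\omega]$. In the sequence restricted to $B_1,B_i$, let $t_1$ be the step at which $\{x_{B_1},y_{B_1}\}$ merges and $t_i$ the step at which $\{x_{B_i},y_{B_i}\}$ merges. If $t_1\le t_i$, then in $\C^*$ the merge of the $B_1$ pair happens no later than $c$, because restriction preserves order. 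If $t_i<t_1$, transport this order via the isomorphism to the $(B_i,B_\omega)$ restriction: there the first-group pair, now $B_i$, merges before the second-group pair, now $B_\omega$. So $\{x_{B_\omega},y_{B_\omega}\}$ merges strictly later than the $B_i$ pair, which contradicts the $(B_1,B_i)$ ordering transported to $(B_1,B_\omega)$. Concretely, the relation ``first-group pair merges before second-group pair'' must hold consistently in every pair sequence. Comparing $(B_1,B_i)$ with $(B_i,B_\omega)$ therefore yields either $B_1$ before $B_i$ or $B_\omega$ before $B_i$ (ties meaning the same contraction). Applied to $c$, this shows that one of the corresponding pairs is contracted by $c$ or earlier.

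Case (i) is handled identically, with merge times of $\{x_{B_k},s\}$ compared in the pair sequences that contain $s$; this works since $\phi$ fixes $S\cup S'$.

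Case (ii) needs triples, because the pair $\{x_{B_i},y_{B_j}\}$ already spans two groups. Here we use $\textsf{info}_{\C'}(B_a,B_b,B_c)$, which is constant by Lemma~\ref{lem:nicelarge}, as the comparison tool.
\begin{itemize}
\item If $i>1$, compare inside the triple $(B_1,B_i,B_j)$: the pair on positions (second, third) versus (first, second), which is $\{x_{B_1},y_{B_i}\}$. Transport this comparison to the triple $(B_1,B_2,B_j)$ to relate it to $\{x_{B_1},y_{B_2}\}$.
\item Symmetrically, use $B_\omega$ when $j<\omega$.
\end{itemize}
The argument again reduces to the following: in any uniform ternary sequence, the order of merge times of the pairs on positions $(1,2)$, $(1,3)$, $(2,3)$ is fixed, and choosing the extreme groups makes one of $\textsf{Corr}$ no later.

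The main obstacle is case (ii), specifically chaining the comparisons so that the corresponding pair $\{x_{B_1},y_{B_2}\}$ or $\{x_{B_{\omega-1}},y_{B_\omega}\}$ is reached. This likely requires two steps through intermediate triples such as $(B_1,B_2,B_j)$ and $(B_1,B_i,B_j)$, and careful handling of ties where the merges happen in the same contraction $c$.
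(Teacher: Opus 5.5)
\textbf{Verdict: there is a genuine gap.} Cases (i) and (iii) are fine, but case (ii) is only sketched, and the principle the sketch relies on fails for pairs touching $B_1$ or $B_\omega$.

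\textbf{Cases (i) and (iii).} These match the paper's argument: compare the pair sequences for $(B_1,B_i)$ and $(B_i,B_\omega)$. One correction to your middle sentences: if $t_i<t_1$, transporting to $(B_i,B_\omega)$ says the $B_\omega$-pair merges strictly \emph{before} the $B_i$-pair. That is directly the desired conclusion, not a contradiction. Your final summary sentence states it correctly.

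\textbf{Case (ii): order information alone is not enough.} Your guiding principle is that fixing the order of the merge times of positions $(1,2),(1,3),(2,3)$ and ``choosing the extreme groups'' makes a $\textsf{Corr}$ pair come no later. This is false when $i=1$ or $j=\omega$, in particular for $\{x_{B_1},y_{B_\omega}\}$, where no more extreme group exists. For a counterexample, give $\{x_{B_a},y_{B_b}\}$ the merge time $m(a,b)=-bN+a$ with $N$ large. This induces the same order $(1,3)<(2,3)<(1,2)$ on every triple. Yet $\{x_{B_1},y_{B_\omega}\}$ merges strictly before both $\{x_{B_1},y_{B_2}\}$ and $\{x_{B_{\omega-1}},y_{B_\omega}\}$.

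\textbf{The missing idea: being in a common bag is transitive.} The paper argues by contradiction.
\begin{itemize}
\item Assume neither corresponding pair is contracted by time $c$. Then the triples $(B_1,B_2,B_\omega)$ and $(B_1,B_{\omega-1},B_\omega)$ force the uniform pattern to have $(1,3)$ strictly before both $(1,2)$ and $(2,3)$.
\item Apply this pattern to $(B_1,B_2,B_{\omega-1})$, $(B_2,B_{\omega-1},B_\omega)$ and $(B_1,B_2,B_\omega)$. It follows that, before $\{x_{B_2},y_{B_{\omega-1}}\}$ is first contracted, $x_{B_2}$, $y_{B_\omega}$, $x_{B_1}$ and $y_{B_{\omega-1}}$ already share one bag. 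That is the contradiction.
\end{itemize}

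\textbf{Reducing the remaining extreme cases.} The case $1=i<j<\omega$ reduces to the one above. Either $(1,2)\le(1,3)$ in the pattern, and then $(B_1,B_2,B_j)$ gives $\{x_{B_1},y_{B_2}\}$ no later than $c$. Or $(1,3)<(1,2)$, and then $(B_1,B_j,B_\omega)$ puts $\{x_{B_1},y_{B_\omega}\}$ strictly before $c$, so the extreme case applies. The case $j=\omega$ is symmetric.

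\textbf{The interior case $1<i<j<\omega$.} Your proposed transport to $(B_1,B_2,B_j)$ does not reach $c$. What works is the dichotomy from $(B_1,B_i,B_j)$ versus $(B_i,B_j,B_\omega)$. If $(1,2)\le(2,3)$, chain
\[
\{x_{B_1},y_{B_2}\}\le\{x_{B_2},y_{B_i}\}\le\{x_{B_i},y_{B_j}\}
\]
via the triples $(B_1,B_2,B_i)$ and $(B_2,B_i,B_j)$. If $(2,3)<(1,2)$, the symmetric chain through $B_{\omega-1}$ and $B_\omega$ gives $\{x_{B_{\omega-1}},y_{B_\omega}\}$ strictly before $c$.
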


\begin{proof}
First, let us suppose that $c$ is special and contracts $\{x_{B_i}, s\}$, for $1<i<\omega$. When looking at the pairs $(B_1, B_i)$ and $(B_i, B_\omega)$, it is clear that to avoid any contradiction with \cref{lem:nicelargebis}, either $\{x_{B_1}, s\}$ or $\{x_{B_\omega}, s\}$ has to be contracted before (or at the same time as, meaning that they occur in the same bags) $\{x_{B_i}, s\}$. If $c$ is vertical and contracts $\{x_{B_i}, y_{B_i}\}$, the argument translates immediately.

Consider the case where $c$ is horizontal or diagonal and contracts $\{x_{B_i}, y_{B_j}\}$. If $1<i<j<\omega$, when looking at $(B_1, B_i, B_j)$ and $(B_i, B_j, B_\omega)$, it is clear that $\{x_{B_1}, y_{B_i}\}$ or $\{x_{B_j}, y_{B_\omega}\}$ has to be contracted before or at the same time as $\{x_{B_i}, y_{B_j}\}$. The other case being symmetric, let us assume that we are in the first case $\{x_{B_1}, y_{B_i}\}$ being contracted before $\{x_{B_i}, y_{B_j}\}$. If $i=2$, then the lemma follows immediately by the first sentence. Otherwise we look at $(B_1, B_2, B_i)$ and $(B_1, B_i, B_j)$. The pair $\{x_{B_i}, y_{B_j}\}$ is contracted after (or simultaneously with) $\{x_{B_1}, y_{B_i}\}$ in the second triple, implying in the first triple that $\{x_{B_2}, y_{B_i}\}$ is contracted after (or simultaneously with) $\{x_{B_1}, y_{B_2}\}$. However, in the triple $(B_2, B_i, B_j)$ the same implies that $\{x_{B_2}, x_{B_i}\}$ is contracted before (or simultaneously with) $\{x_{B_i}, x_{B_j}\}$, and by chaining these precedences we conclude the proof.

If $1=i<j=\omega$, let us suppose that $\{x_{B_1}, y_{B_\omega}\}$ is contracted and neither $\{x_{B_1}, y_{B_2}\}$ nor $\{x_{B_{\omega-1}}, y_{B_\omega}\}$ are. This means that in $(B_1, B_2, B_{\omega-1})$ (resp. in $(B_2, B_{\omega-1}, B_\omega)$), $\{x_{B_1}, y_{B_{\omega-1}}\}$ (resp. $\{x_{B_2}, y_{B_\omega}\}$) has to be contracted strictly before both $\{x_{B_1}, y_{B_2}\}$ and $\{x_{B_2}, y_{B_{\omega-1}}\}$ (resp. both $\{x_{B_2}, y_{B_{\omega-1}}\}$ and $\{x_{B_{\omega -1}}, y_{B_\omega}\}$). In particular, both $\{x_{B_1}, y_{B_{\omega-1}}\}$ and $\{x_{B_2}, y_{B_\omega}\}$ are contracted strictly before $\{x_{B_2}, y_{B_{\omega-1}}\}$. However, this is not possible, as after these contractions, $x_{B_2}$ has been contracted with $y_{B_\omega}$, which has itself been contracted with $x_{B_1}$, and the chain continues with $y_{B_{\omega-1}}$. Thus, we obtain a contradiction, as $\{x_{B_2}, y_{B_{\omega-1}}\}$ turns out to already be contracted.

If $1=i<j<\omega$, we consider $(B_1, B_2, B_j)$ and $(B_1, B_j, B_\omega)$ to observe that $\{x_{B_1}, y_{B_2}\}$ or $\{x_{B_j}, y_{B_\omega}\}$ has to be contracted before or at the same time as $\{x_{B_1}, y_{B_j}\}$. In the first case we already have what we want, in the other we iterate our observation on this new contraction in a symmetrical way, to obtain that $\{x_{B_1}, y_{B_j}\}$ or $\{x_{B_{\omega-1}}, y_{B_\omega}\}$ has to be contracted before or at the same time as $\{x_{B_j}, y_{B_\omega}\}$. Chaining these precedences, we obtain either that $\{x_{B_1}, y_{B_j}\}$ happens after or at the same time as $\{x_{B_{\omega-1}}, y_{B_\omega}\}$ or $\{x_{B_1}, y_{B_\omega}\}$, and the latter can only happen after either $\{x_{B_1}, y_{B_2}\}$ or $\{x_{B_{\omega-1}}, y_{B_\omega}\}$ according to the previous paragraph. This concludes the proof of this case, and the case $1<i<j=\omega$ is symmetric.
\end{proof}

\begin{lemma}
\label{lem:nodiag}
No diagonal contraction can be present in $\C^*$.
\end{lemma}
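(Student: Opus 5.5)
The plan is to argue by contradiction. Assume $\C^*$ contains a diagonal contraction, and take the \emph{first} one, $c$. Being diagonal, $c$ is not unique, special or vertical. So it merges a bag $b_1$ meeting some $B_i$ with a bag $b_2$ meeting some $B_j$ ($i\neq j$), neither meeting $S\cup S'$. Moreover, no vertex of $\beta(b_1)$ has its $\alpha$-counterpart in $\beta(b_2)$. We pick $x_{B_i}\in\beta(b_1)$ and $y_{B_j}\in\beta(b_2)$ with $x\neq y$ in $R$, and say $i<j$.

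First we invoke Lemma~\ref{lem:first-occurence} on the pair $\{x_{B_i},y_{B_j}\}$. This gives a corresponding pair, $\{x_{B_1},y_{B_2}\}$ or $\{x_{B_{\omega-1}},y_{B_\omega}\}$, that is contracted by $c$ or strictly earlier. By symmetry we may assume it is $\{x_{B_1},y_{B_2}\}$. The contraction that first put these two vertices into a common bag cannot be unique, special or vertical, since the bags it merged avoid $S\cup S'$ (no special contraction has happened on these vertices, otherwise $c$ would involve $S\cup S'$) and the two vertices lie in different groups. It is also not diagonal if it is strictly earlier, by minimality of $c$. Hence it is horizontal, or it is $c$ itself.

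The key step uses ternary uniformity (Lemma~\ref{lem:nicelarge}) to transport this configuration across many triples. Consider the triples $(B_1,B_2,B_k)$ for $k\ge 3$ and $(B_1,B_k,B_\ell)$. Since all $\textsf{info}$'s agree, the moment at which $x_{B_a}$ and $y_{B_b}$ become co-bagged (for $a<b$) has the same position relative to all other events in every triple. In particular, if $\{x_{B_1},y_{B_2}\}$ is contracted at some stage, then $\{x_{B_a},y_{B_b}\}$ is contracted at the corresponding stage for \emph{all} $a<b$ among the $\omega=3p+3$ groups. Consequently $y_{B_2}$ is co-bagged with $x_{B_1}$, and $y_{B_3}$ is co-bagged with both $x_{B_1}$ and $x_{B_2}$. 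Chaining these, the bag containing $x_{B_1}$ then contains $y_{B_2},y_{B_3},\dots$ and also $x_{B_2},\dots$. So $x_{B_2}$ and $y_{B_2}$ lie in a common bag, together with all $x_{B_a}$ and $y_{B_a}$ for $a\ge 2$, so this bag contains the counterpart pair $\{x_{B_2},x_{B_3}\}$.

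We then derive the contradiction by showing that this forces $x_{B_a},x_{B_b}$ together with a vertical pair $\{x_{B_a},y_{B_a}\}$. Either one of these co-baggings occurs before $c$, or $c$ itself creates a bag containing both $x_{B_2}$ and $x_{B_3}$. In the first case some earlier contraction is vertical, horizontal or unique, and we must check it is consistent with the timing. Concretely, we compare, inside the triple $(B_1,B_2,B_3)$, the time $t_1$ at which $\{x_{B_1},y_{B_2}\}$ merges, the time $t_2$ for $\{x_{B_2},y_{B_3}\}$, and the time $t_3$ for $\{x_{B_1},y_{B_3}\}$. Uniformity with $(B_1,B_2,B_4)$ and $(B_1,B_3,B_4)$ forces $t_1=t_2=t_3$. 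Then a single contraction would have to unite bags containing $x_{B_1}$, $y_{B_2}$, $x_{B_2}$, $y_{B_3}$, which requires at least two distinct contractions at one step, or pre-existing co-bagging. Thus $\{x_{B_2},y_{B_2}\}$ (a vertical pair) or $\{x_{B_1},x_{B_2}\}$ (a horizontal pair) is already present in $b_1$ or $b_2$ before $c$. In the horizontal case, $b_1\cup b_2$ would contain $\alpha$-counterparts across the two bags, contradicting diagonality.

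I expect the main obstacle to be this last timing argument: making precise that equal $\textsf{info}$ on overlapping triples forces simultaneity, and turning simultaneity into a pair of counterparts lying across $b_1$ and $b_2$. If the direct argument fails, I would first try the alternative of choosing $\omega$ large and applying pigeonhole over the at most $p$ vertices of $R$-components to find the counterpart.
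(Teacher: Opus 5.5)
Your argument breaks at the step where you claim that uniformity with $(B_1,B_2,B_4)$ and $(B_1,B_3,B_4)$ forces $t_1=t_2=t_3$. Lemma~\ref{lem:nicelarge} only tells you that the \emph{order type} of the three events ``$x_{B_a}$ meets $y_{B_b}$'', ``$x_{B_b}$ meets $y_{B_c}$'', ``$x_{B_a}$ meets $y_{B_c}$'' is the same for every triple $a<b<c$. Non-constant order types can be uniform. For example, suppose the moment at which $x_{B_a}$ and $y_{B_b}$ become co-bagged depends only on $a$ and decreases with $a$. Then in every triple the first--second and first--third events coincide, while the second--third event is strictly earlier. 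So the $\textsf{info}$ equalities by themselves give no simultaneity. Simultaneity would have to come from how bags actually merge (transitivity of co-bagging plus minimality of $c$), and your proposal never uses that. The paragraph before has the same flaw. ``The corresponding stage'' refers to positions in different restricted sequences, which are not aligned in $\C^*$. Hence the chain ``the bag of $x_{B_1}$ contains $y_{B_2},y_{B_3},\dots,x_{B_2},\dots$'' is only guaranteed at the end of $\C^*$, and says nothing about $b_1,b_2$ at the time of $c$.

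The paper never proves simultaneity. Having reduced to $c$ contracting $\{x_{B_1},y_{B_2}\}$, it looks at the triple $(B_1,B_2,B_\omega)$ and asks which of $x_{B_1}$, $x_{B_2}$ joins $y_{B_\omega}$ first. Recall that $x_{B_1}$ and $y_{B_2}$ already share a bag.
\begin{itemize}
\item If $\{x_{B_1},y_{B_\omega}\}$ is contracted no later than $\{x_{B_2},y_{B_\omega}\}$, that same contraction co-bags $\{y_{B_2},y_{B_\omega}\}$. Pairwise uniformity between $(B_2,B_\omega)$ and $(B_1,B_2)$ then yields that $\{y_{B_1},y_{B_2}\}$ was co-bagged before or at $c$, making $c$ vertical or horizontal.
\item Otherwise, the contraction of $\{x_{B_1},y_{B_\omega}\}$ also co-bags the vertical pair $\{x_{B_2},y_{B_2}\}$. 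Lemma~\ref{lem:first-occurence}, together with uniformity between $(B_1,B_\omega)$ and $(B_1,B_2)$, then forces $c$ to co-bag $\{x_{B_1},x_{B_2}\}$ or $\{y_{B_1},y_{B_2}\}$.
\end{itemize}
This transitivity-based case split is the missing idea.

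Two smaller points. First, verticality is a property of the two bags, so an earlier contraction of $\{x_{B_1},y_{B_2}\}$ is not excluded from being vertical just because these two vertices lie in different groups. Second, in your final step the labels are swapped. A pre-existing $\{x_{B_1},x_{B_2}\}\subseteq b_1$ makes both bags meet $B_2$, so $c$ is vertical. A pre-existing $\{x_{B_2},y_{B_2}\}\subseteq b_2$ puts the counterparts $x_{B_1},x_{B_2}$ in different bags, so $c$ is horizontal. Either way you would get a contradiction, but only once simultaneity is actually established.
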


\begin{proof}
Let us consider the first diagonal contraction $c$ contracting a pair $\{x_i, y_j\}$ with $i<j$. 
By \Cref{lem:first-occurence}, we know that $(i,j)$ is either $(1,2)$ or $(\omega-1, \omega)$. Let us consider the first case, the other being symmetric.
In $(B_1, B_2, B_\omega)$, $\{x_1, y_2)$ is contracted before (or simultaneously with) $\{x_1, y_\omega\}$ and $\{x_2, y_\omega\}$. Observe that $\{x_1, y_\omega\}$ and $\{y_2, y_\omega\}$ are contracted by the same contraction $c'$ since $x_1$ and $y_2$ will already be contracted together. Let us now assume that $c'$ happens before (resp. is) the contraction $c''$ contracting $\{x_2, y_\omega\}$. This implies, because $(B_1, B_2)$ and $(B_2, B_\omega)$ follow the same pattern, that $\{y_1,y_2\}$ was contracted before (resp. at the same time as) $c$, which makes $c$ vertical (resp. horizontal). Second case, $\{x_2, y_\omega\}$ is contracted by $c''$ before $\{x_1, y_\omega\}$ is contracted by $c'$. Then $\{x_1, y_\omega\}$ and $\{x_2, y_2\}$ are contracted by the same contraction $c'$ since $x_2$ and $y_\omega$ are already contracted together. By \cref{lem:first-occurence}, $\{x_1,y_1\}$ or $\{x_\omega, y_\omega\}$ was contracted before (resp. by) $c'$. This implies that $c'$ contracts $\{y_1, y_\omega\}$ or $\{x_1, x_\omega\}$, and since $(B_1, B_\omega)$ and $(B_1, B_2)$ follow the same pattern, it means that $c$ also contracted $\{x_1,x_2\}$ or $\{y_1, y_2\}$, which is a contradiction with $c$ being diagonal.
\end{proof}

The next definition allows us to identify certain milestones that in turn leads to the crucial notion of \emph{blocks} of $\C^*$.
\begin{definition}\label{def:blocks}
We say that two pairs $(A,B\neq A)$ and $(C,D\neq C)$ of large groups are \emph{synchronized} at a trigraph $G_s$ in $\C^*$ if the trigraphs induced on $V(S \cup S' \cup A \cup B)$ and $V(S \cup S' \cup C \cup D)$  are identical up to renaming via $\alpha$. 
 
Let $H_1,\dots,H_{3p+3}$ be a uniform set of large groups w.r.t.\ the contraction sequence $\C^*$. We say that a segment of $\C^*$ is a \emph{block} if it is a minimal segment of size at least $2$---i.e., one that contains at least one contraction---such that in its first and last trigraphs, the following holds:
Every two pairs $(H_i, H_j)$, $(H_k, H_\ell)$ for $i,j,k,\ell\in [3p+3]$, $i\neq j$, $k\neq \ell$ are synchronized.
 \end{definition}

For brevity, we say that two large groups $A\neq C$ are \emph{synchronized} at a trigraph $G_s$ in $\C^*$ if the trigraphs induced on $V(S \cup S' \cup A)$ and $V(S \cup S' \cup C)$  are identical up to renaming via $\alpha$. We note that this notion of synchronization of singletons also occurs at the beginning and end of blocks as it is directly implied by the synchronization of pairs.

It is easy to see that, by definition, all pairs are synchronized on $G^*$ and $K_1$, thus $\C^*$ contains at least one block. Moreover, blocks form a ``near-partitioning'' of $\C^*$ where each pair of blocks can only intersect in their first or last trigraphs. The next definition provides a way of identifying pairs of vertices which are ``equivalent'' to each other given our decomposition into large groups; we remark that the permutation is merely used to avoid undesirable choices of the target indices, and that the defined equivalence relation is in fact the transitive, symmetric and reflexive closure of the relation induced by \textsf{Corr}.

\begin{definition}
\label{def:equiv_pairs}
Let $\{x,y\}$ be a pair of vertices. We call a pair $\{x',y'\}$ equivalent to $\{x,y\}$ if there exists a permutation $\sigma$ of $[\omega]$ such that: $\left(\exists u\in R,  \exists i \in [\omega], x=u_{B_i} \right) \Rightarrow x'=u_{B_{\sigma(i)}}$ and $\left(\exists v \in R, \exists j \in [\omega], y=v_{B_j} \right) \Rightarrow y'=v_{B_{\sigma(j)}}$. 
\end{definition}

\begin{observation}\label{lem:block_general}
If a block contains a contraction of a pair $\{x,y\}$,
it also contains contractions of all the equivalent pairs as per \cref{def:equiv_pairs}. Moreover, if a segment starting in a synchronized state is a minimal segment containing contraction of all the equivalent pairs of each pair contracted within it, this segment is a block.
\end{observation}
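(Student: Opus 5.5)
The first claim rests on the following principle: the uniformity of $H_1,\dots,H_{3p+3}$ (Lemma~\ref{lem:nicelargebis}, lifted to the ternary version of Lemma~\ref{lem:nicelarge}) makes the relative order of contractions in $\C^*$ invariant under relabelling the large groups. Concretely, $\textsf{info}_{\C'}(H_a,H_b,H_c)$ does not depend on the triple, so for any two equivalent pairs $\{x,y\}$ and $\{x',y'\}$ in the sense of Definition~\ref{def:equiv_pairs} there is a triple (or pair) of groups containing one and a triple containing the other. In both, the two pairs occupy the same position of the common canvas $S\cup S'\cup L'\cup L^\circ\cup L^\star$. Since $\sigma$ is an arbitrary permutation, I will first handle the case where $\sigma$ preserves the $\prec$-order of the at most two indices involved. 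I then reduce the remaining case to a horizontal or vertical contraction by Lemma~\ref{lem:nodiag}; for such contractions $\{u_{B_i},v_{B_j}\}$ and $\{u_{B_j},v_{B_i}\}$ are handled via the single-group and pair uniformity.

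Now fix a block $\beta$ with first trigraph $G_s$ and last trigraph $G_e$, and suppose $\{x,y\}$ is contracted inside $\beta$. Let $\{x',y'\}$ be equivalent to it, and place both pairs inside the restrictions to $S\cup S'\cup H_i\cup H_j$ and $S\cup S'\cup H_k\cup H_\ell$. At $G_s$ and at $G_e$ these two restricted trigraphs are synchronized, i.e.\ identical up to renaming via $\alpha$. In $G_e$ the renamed image of $\{x,y\}$, which is $\{x',y'\}$, is therefore in a common bag. In $G_s$ the pair $\{x,y\}$ is not contracted, so $\{x',y'\}$ is not contracted either. Hence the contraction of $\{x',y'\}$ happens strictly inside $\beta$. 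The one subtlety here is that synchronization is only required among the $3p+3$ uniform groups, whereas equivalence ranges over all $\omega$ groups. I will resolve this by taking $\mathcal F$ itself as the uniform family, so $\omega=3p+3$ and the $H_i$ are exactly the $B_i$; this is consistent with how $\mathcal F$ was chosen via Lemma~\ref{lem:nicelarge}.

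For the second claim, let $\Sigma$ start at a synchronized trigraph and be minimal among segments that contain, for each pair contracted in them, all equivalent pairs. It suffices to show that the last trigraph of $\Sigma$ is synchronized: minimality of blocks then follows from the first claim. Indeed, any block starting at the same trigraph is closed under equivalence by the first part, so it cannot end earlier than $\Sigma$, and $\Sigma$ cannot end earlier than the block by minimality of $\Sigma$. To show synchronization at the end, I will compare the partitions of $S\cup S'\cup H_i\cup H_j$ and of $S\cup S'\cup H_k\cup H_\ell$. At the start they agree under $\alpha$. The set of contracted pairs added during $\Sigma$ is closed under equivalence, and bag membership is the transitive closure of contracted pairs. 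Hence the two bag partitions still agree at the end. Red and black edges are determined by the bags together with the underlying graph, which is isomorphic under $\phi$, so the trigraphs are identical up to renaming.

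I expect the main obstacle to be the order-reversing permutations in the first part. The $\textsf{info}$ uniformity is stated only for $\prec$-ordered tuples, so to equate the time at which $\{u_{B_i},v_{B_j}\}$ and $\{u_{B_j},v_{B_i}\}$ are contracted I will need Lemma~\ref{lem:first-occurence} together with the synchronization at block endpoints, rather than uniformity alone.
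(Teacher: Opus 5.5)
Your proof is correct and essentially the same as the paper's. For the first claim you use synchronization at both endpoints of the block: the pair is not contracted at $G_s$ and is contracted at $G_e$, so its renamed image must be contracted inside the block. For the second claim, a synchronized start plus closure under equivalence forces a synchronized end, and minimality follows from the first claim.

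The uniformity detour, the appeal to Lemma~\ref{lem:nodiag}, and the anticipated ``obstacle'' of order-reversing permutations are all unnecessary. Definition~\ref{def:blocks} already requires synchronization of all ordered pairs $(H_i,H_j)$, $(H_k,H_\ell)$ with $i\neq j$ and $k\neq\ell$, so your endpoint argument handles arbitrary $\sigma$ directly. One small slip in the minimality step: what stops $\Sigma$ from ending before the block is the block's own minimality, together with the synchronized last trigraph of $\Sigma$, not the minimality of $\Sigma$.
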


\begin{proof}
The first direction is easy to check: if a contracted pair involving $B_i$ (resp.\ $B_i$ and $B_j$) does not have an equivalent pair in $B_\ell$ (resp.\ $B_\ell$ and $B_k$), then $B_i$ and $B_\ell$ (resp. $(B_i, B_j)$ and $(B_\ell, B_k)$) are not synchronized. 
In the other direction, let us suppose that the segment satisfies the property but is not a block: by minimality of the segment, and the first direction that we already proved, there can be no block included in the segment. Thus, it means that the reason it is not a block is that there exists a pair $(B_i, B_j)$, $(B_\ell, B_k)$ of pairs of \blobs which are not synchronized at the end of the segment. Now, let us take a look at the difference between the two induced trigraphs on which the definition rely. Since these two trigraphs started synchronized, and for each pair being contracted in one the equivalent one in the other was also contracted, we obtain a contradiction. 
\end{proof}

\begin{observation}\label{lem:block_unique}
If a block contains a contraction which is unique, it contains only this contraction.
\end{observation}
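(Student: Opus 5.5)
The plan is to use the minimality built into the definition of a block, together with the fact that a unique contraction leaves the synchronization state untouched. Let $c$ be a unique contraction in a block $\mathcal{B}$ of $\C^*$. It merges two bags that both intersect $S\cup S'$. I will show that the single-step segment consisting of the trigraph immediately before $c$ and the trigraph immediately after $c$ already begins and ends in a fully synchronized state. Minimality of blocks then forces $\mathcal{B}$ to be exactly this segment.

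\emph{Step 1: the trigraph just before $c$ is fully synchronized.} Suppose the segment of $\mathcal{B}$ preceding $c$ contains at least one contraction. By Lemma~\ref{lem:first-occurence} and Observation~\ref{lem:block_general}, all contractions of pairs meeting the $B_i$'s come in full equivalence classes within a block. The delicate case is when $c$ itself contracts a bag that also contains large-group vertices, for example a bag formed earlier by a special contraction. Then $c$ can simultaneously contract pairs such as $\{x_{B_i},s\}$ or $\{x_{B_i},y_{B_j}\}$. In that case, uniformity (Lemma~\ref{lem:nicelargebis}) together with the fact that $c$ is a single step shows that all equivalent pairs are contracted by $c$ itself. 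This is because equivalent large-group vertices already lie in the same $S\cup S'$-bags, by the synchronization at the start of the block and because special contractions occur in equivalence-closed fashion. Hence the contractions before $c$ inside $\mathcal{B}$ already form an equivalence-closed collection starting from a synchronized state. By the second part of Observation~\ref{lem:block_general}, a block would then end before $c$, contradicting the minimality of $\mathcal{B}$. So $c$ is the first contraction of $\mathcal{B}$.

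\emph{Step 2: the trigraph just after $c$ is fully synchronized.} Starting from synchronization, $c$ only merges two bags intersecting $S\cup S'$. Consider any pair $(H_i,H_j)$. The trigraph induced on $S\cup S'\cup H_i\cup H_j$ changes by merging the restrictions of these two bags. Their restrictions to $H_i\cup H_j$ correspond under $\alpha$ to their restrictions to $H_k\cup H_\ell$, by the synchronization before $c$. So the induced trigraphs remain identical up to renaming via $\alpha$, and the state after $c$ is synchronized as well.

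Therefore the segment of length $2$ around $c$ is a block. Blocks overlap only at endpoints, so $\mathcal{B}$ equals this segment and contains only $c$. I expect the main obstacle to be the situation in Step 1 where the bags merged by $c$ also carry large-group vertices. There I will need a careful argument, via Lemma~\ref{lem:first-occurence}, that such mixed bags are themselves equivalence-closed before $c$.
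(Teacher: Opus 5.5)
\textbf{There is a genuine gap in Step 1.} Your overall structure is sound: show $c$ is the first contraction of the block, show the trigraph right after $c$ is synchronized, then conclude by minimality. Your Step 2 is essentially the first half of the paper's proof. But Step 1 is where the content of the observation lies, and it is not proved.

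\textbf{Why your argument for Step 1 does not work.} The first part of Observation~\ref{lem:block_general} only says that every class of equivalent pairs is completed somewhere inside the block. It says nothing against a class straddling $c$, with some pairs contracted before $c$ and their equivalents after it. Your claim that $c$ contracts all equivalents of its own pairs is circular here. It needs synchronization in the trigraph immediately before $c$. You only have synchronization at the start of the block, and that is the trigraph before $c$ only once you already know $c$ is first.

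Nor can you appeal to special contractions occurring ``in equivalence-closed fashion''. Within a block they need not be simultaneous across groups; in Lemma~\ref{lem:block_v_monotone} they sweep through the groups one at a time. So a priori $\{x_{B_1},s\}$ could be contracted before $c$ and $\{x_{B_\omega},s\}$ after it. In that case $c$, merging the bags of $s$ and $t$, contracts $\{x_{B_1},t\}$ but not $\{x_{B_\omega},t\}$. Lemma~\ref{lem:first-occurence}, which you name as the intended tool, does not exclude this: it only says that a class of corresponding pairs is first contracted at $B_1$ or $B_\omega$.

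\textbf{The missing idea is an ordering argument via uniformity.} Both bags merged by $c$ meet $S\cup S'$, say in $s$ and $t$. So $c$ is a genuine step in every restriction to $S\cup S'\cup B_i$ and to $S\cup S'\cup B_i\cup B_j$. In $\textsf{info}_{\C'}(B_i)$ (resp.\ $\textsf{info}_{\C'}(B_i,B_j)$), this step is the first trigraph where $s$ and $t$ share a bag. Since $\phi$ fixes $S\cup S'$ and these sequences coincide by Lemma~\ref{lem:nicelargebis}, the step sits at the same index in all of them. The same holds for the first index at which the canvas copy of any pair is contracted. Hence a pair and its equivalent pairs lie on the same side of $c$: before it, at it, or after it.

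Consequently, if any contraction precedes $c$ in the block, the contractions before $c$ form an equivalence-closed collection starting from a synchronized state. The second part of Observation~\ref{lem:block_general} then gives a strictly shorter block ending before $c$, contradicting minimality. This is the second paragraph of the paper's proof. With it, $c$ is first, the trigraph before $c$ is synchronized, and your Step 2 and concluding argument go through as you intended.
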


\begin{proof}
Observe that, for any pair contracted by a unique contraction $c$ directly after a synchronized state, all of its equivalent pairs are contracted by $c$. Indeed, otherwise there exists a pair $\{x,y\}$ contracted by $c$ and an equivalent pair $\{x',y'\}$ not contracted by $c$. However, since this $c$ is a unique contraction, $x$ (resp. $y$) is part of the same bag as some $s$ (resp. $t$) belonging to $S\cup S'$. Because of the synchronized state before $c$, this means that $x'$ (resp. $y'$) also belongs to the same bag as $s$ (resp. $t$), and thus $\{x',y'\}$ is contracted by $c$. Thus, a block starting with a unique contraction contains no other, as a direct consequence of \cref{lem:block_general} and minimality of blocks.\\
If a block contains at least one unique contraction $c$, and a non-empty set of other contractions, there has to exist a pair $\{x,y\}$ and a contraction $c'$ of $\{x,y\}$ happening before $c$ such that the contraction $c'_e$ of one pair equivalent to $\{x,y\}$ happens after $c$, otherwise there would be a strictly smaller block. However, in the \blobs that they respectively affect, $c'$ and $c'_e$ would then happen in different relative orders with respect to $c$, which is a contradiction.
\end{proof}

\begin{lemma}\label{lem:block_v_monotone}
If a block contains vertical (resp. special) contractions, all such contractions will appear in order $C_{\sigma(1)}, C_{\sigma(2)}, \dots, C_{\sigma(\omega -1)}, C_{\sigma(\omega)}$ where: 
\begin{itemize}
\item each contraction in $C_{\sigma(i)}$ is a vertical contraction of a pair of vertices of $B_{\sigma(i)}$ (resp. a special contraction involving $B_(i)$), and
\item $C_i$ and $C_j$ are equivalent w.r.t. $\alpha$, and 
\item $\sigma(\circ)$ is either the identity $id(\circ)$ or $\omega + 1 -id(\circ)$, \ie, $\sigma(\circ)$ either follows or reverses the order defined by $\prec$.
\end{itemize}
\end{lemma}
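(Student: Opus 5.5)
The plan is to combine the first part of Observation~\ref{lem:block_general} with the ordering information from Lemmas~\ref{lem:nicelargebis} and~\ref{lem:first-occurence}. Fix a block $\beta$ containing a vertical (resp.\ special) contraction of a pair $\{x_{B_i},y_{B_i}\}$ (resp.\ $\{x_{B_i},s\}$). By Observation~\ref{lem:block_general}, $\beta$ also contains the contractions of every equivalent pair $\{x_{B_j},y_{B_j}\}$ (resp.\ $\{x_{B_j},s\}$), for all $j\in[\omega]$. For each $j$, let $C_j$ be the set of vertical (resp.\ special) contractions of $\beta$ involving $B_j$. Since $\beta$ starts and ends in a synchronized state, and the singleton uniformity of Lemma~\ref{lem:nicelargebis} says that $\C^*[S\cup S'\cup B_j]$ is the same sequence for all $j$ up to renaming via $\alpha$, the sets $C_j$ and $C_k$ are equivalent w.r.t.\ $\alpha$, and within each $C_j$ the contractions occur in the same internal order. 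This gives the second bullet.

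Next I would show that the $C_j$ are not interleaved, so that the vertical contractions split into contiguous groups, one per large group. For this I use pair uniformity. Consider $j<k$ and the restriction $\C^*[S\cup S'\cup B_j\cup B_k]$, which by Lemma~\ref{lem:nicelargebis} is the same for every pair $j<k$ after transfer to $L'\cup L^\circ$. Restrict attention to the relative order of the vertical (resp.\ special) contractions of $B_j$ versus those of $B_k$ inside $\beta$. The pattern $P$ (a word over $\{\text{first},\text{second}\}$) must be the same for every pair $j<k$.

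Apply this to three groups $j<k<\ell$. The patterns on $(j,k)$, $(k,\ell)$ and $(j,\ell)$ all equal $P$, and the interleaving of $C_k$ appears both as the "second" letter in $(j,k)$ and as the "first" letter in $(k,\ell)$. I would argue combinatorially that this is only possible if $P$ is constant-block: either all of $C_j$ precedes all of $C_k$, or all of $C_k$ precedes all of $C_j$. Any genuinely interleaved $P$ should produce an inconsistency once the three patterns are merged into a single linear order. The concrete tool is to compare positions: let the $m$-th contraction of $C_j$ be placed relative to the $m$-th of $C_k$, and use transitivity across the triple to force a contradiction unless all comparisons point the same way. This step is where I expect the main difficulty. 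It may need the triple-level uniformity of Lemma~\ref{lem:nicelarge}, not just pairs, and possibly the fact that $\omega=3p+3\ge 3$.

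Once the pattern is "all of one before all of the other", uniform across every pair $j<k$, the resulting order is either $\prec$ itself (if the first group always goes first) or its reverse. This yields the third bullet with $\sigma=id$ or $\sigma=\omega+1-id$, and the first bullet follows from how the $C_j$ were defined.

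In the special case I also have to make sure that special contractions merging into $S\cup S'$ do not correlate different groups in a way that breaks the restriction argument. I would handle this as in Observation~\ref{lem:block_unique}: once $x_{B_j}$ shares a bag with $s$, any further contraction involving that bag and another group's vertex is treated by the same pair-restriction comparison.
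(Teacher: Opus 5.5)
There is a genuine gap in the step you flag as the main difficulty. The combinatorial claim that uniformity forces each pair pattern $P$ to be non-interleaved is false, so no argument of the kind you sketch can work.

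Take two classes of equivalent vertical pairs, with members $c^1_j$ and $c^2_j$ in $B_j$, performed in two waves: $c^1_1,c^1_2,\dots,c^1_\omega,c^2_1,c^2_2,\dots,c^2_\omega$. This sequence satisfies everything your argument uses:
\begin{itemize}
\item each $B_j$ sees $c^1_j$ then $c^2_j$;
\item every pair $j<k$ sees the same interleaved word $c^1_j,c^1_k,c^2_j,c^2_k$;
\item every triple sees the same merged order;
\item the segment starts and ends in a synchronized state.
\end{itemize}
Your ``$m$-th versus $m$-th'' comparisons all point the same way, yet $C_j$ and $C_k$ interleave. The variant $c^1_1,\dots,c^1_\omega,c^2_\omega,\dots,c^2_1$ is equally uniform for singletons, pairs and triples, and there your comparisons point in opposite directions without any contradiction arising. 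So singleton, pair and triple uniformity, together with synchronization at the two ends of the segment, cannot yield the statement. Neither $\omega\ge 3$ nor Lemma~\ref{lem:nicelarge} changes this.

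What you are missing is the \emph{minimality} in the definition of a block, which your proposal never uses. In the example above, all pairs are synchronized right after $c^1_\omega$, so no block can contain both waves. The paper's proof is organized around exactly this. It first obtains the $\alpha$-equivalence of the $C_i$ and the monotonicity of each class of equivalent contractions from pair patterns, and places the first vertical contraction in $B_1$ or $B_\omega$ via Lemma~\ref{lem:first-occurence}. It then argues as follows:
\begin{itemize}
\item once the first wave of vertical contractions reaches $B_\omega$, every partially contracted class of equivalent pairs (of any type, not just vertical) is missing only its member involving $B_\omega$;
\item by comparing tuples containing $B_\omega$ with tuples not containing it, no new contraction can occur before those members are contracted;
\item at that point all pairs are synchronized, so by Observation~\ref{lem:block_general} and minimality the block ends before any further vertical contraction.
\end{itemize}
To rule out a second wave inside the block, you need an argument of this form. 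In particular, it must handle non-vertical contractions still in progress, since these can delay synchronization.
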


\begin{proof}
We write the proof for the vertical case. The one for the special case is exactly the same, only needing to exchange the occurrences of vertical and special.\\
First, because of \cref{lem:block_general}, we know that the set of pairs contracted by the vertical contractions happening for each $B_i$ is the same. By choice of the family of \blobs, we also know that, among the contractions of pairs associated to a given \blob, the order is always the same, giving us that the sequences $C_i$ of the vertical contractions involving $B_i$ are indeed $\alpha$-equivalent. Still to prove is that all of these $C_i$ are happening is the right order, namely either with increasing or decreasing index. Let us suppose that there are $i, j, k$ such that the equivalent pairs (contracted by $c_i$, $c_j$ and $c_k$ respectively) happen in a non-monotone order. This is a direct contradiction to $(B_i, B_j)$, $(B_i, B_k)$ and $(B_j, B_k)$ following the same pattern since the pair would then for some of them happen first on the left side, and for others on the right side. Let us suppose the pairs contracted in the first vertical contraction $c$ happening in the block are in $B_1$ (the other case is symmetric, see \cref{lem:first-occurence}). Then, let us consider the contractions happening before $c_2$. Each of these must contract a pair from $B_1$, since the first vertical contraction involving any given $B_i$ will need to be $c_i$ to follow the same pattern as $B_1$, and none of these $c_{i\neq 1}$ has happened yet. 

Next, between $c_2$ and $c_3$ the vertical contractions are contracting precisely the equivalent pairs of $B_2$, and so on, until the contraction $c_\omega$ which is followed by the vertical contractions of the equivalent pairs of $B_\omega$. The remaining point to prove is that all of the vertical contractions that we considered are already accounted for, and that there doesn't exist a set of equivalent contractions happening after. To this end, we will prove that there is an end of a block before the next vertical contraction. Indeed, this happens if there is a trigraph where all pairs of \blobs are synchronized, and if this is not the case, there exist a contraction $c'$ (not vertical) of a pair such that its equivalent pairs are only ``partially completed'' at the end of this first wave of vertical contractions. By pattern uniformity, all the contractions of the equivalent pairs have already been executed outside of those involving $B_\omega$ (note that if $c'$ is horizontal, it misses at most the contraction of all $x_{i<\omega}$ with $x_\omega$, which in particular is the contraction $\{x_1, x_\omega)$). Again by pattern uniformity, no new contraction (\ie, with no equivalent contraction which happened previously) $c^*$ can happen before this contraction involving $B_\omega$ happens, since it would then disrupt pattern uniformity between one of the following pairs of tuples: \begin{itemize}
\item $(B_1, B_2, B_3)$ and $(B_i, B_j, B_\omega)$ if $c^*$ involves $B_i$ and $B_j$ and $i<j<\omega$, or
\item $(B_1, B_2)$ and $(B_i, B_\omega)$ if $c^*$ involves $B_i$ and $i<\omega$, or
\item $(B_1, B_2)$ and $(B_i, B_\omega)$ if $c^*$ involves $B_i$ and $B_\omega$, or
\item $(B_1)$ and $(B_\omega)$ if $c^*$ involves $B_\omega$.
\end{itemize}
With these observations, we conclude that before any new vertical contraction can happen, the contraction sequence must first complete the application of all classes of equivalent contractions which are already started---and once those are completed, \cref{lem:block_general} guarantees that the block ends.
\end{proof}

\begin{lemma}\label{lem:block_h_monotone}
If a block contains horizontal contractions, all such contractions will appear in order \\$C_{\sigma(1),\sigma(2)}, C_{\sigma(2),\sigma(3)}, \dots, C_{\sigma(\omega-2),\sigma(\omega -1)}, C_{\sigma(\omega-1),\sigma(\omega)}$ where 
\begin{itemize}
\item each contraction in $C_{\sigma(i),\sigma(i+1)}$ is horizontal and involves $B_{\sigma(i)}$ and $B_{\sigma(i+1)}$, 
\item $C_{i,i+1}$ and $C_{j,j+1}$ are equivalent w.r.t. $\alpha$, and 
\item $\sigma(\circ)$ is either the identity $id(\circ)$ or $\omega + 1 -id(\circ)$ (\ie, either following or reversing the order defined by $\prec$).
\end{itemize}
\end{lemma}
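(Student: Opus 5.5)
The natural plan is to mirror the proof of Lemma~\ref{lem:block_v_monotone}, replacing single large groups by consecutive pairs and using the ternary uniformity of Lemma~\ref{lem:nicelarge} where the vertical case used only pairwise uniformity.

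\emph{Step 1: horizontal contractions are between consecutive groups.} I would first show that every horizontal contraction of a pair $\{x_{B_i},x_{B_j}\}$ in the block that does not yet lie in a common bag has $|i-j|=1$, once we read the groups in the order in which they are processed. By Lemma~\ref{lem:first-occurence}, a contraction of $\{x_{B_i},x_{B_j}\}$ is preceded by, or coincides with, a contraction of $\{x_{B_1},x_{B_2}\}$ or of $\{x_{B_{\omega-1}},x_{B_\omega}\}$. Say it is the first; the other case is symmetric and yields the reversed order $\sigma=\omega+1-\mathrm{id}$. Within the block, Observation~\ref{lem:block_general} says that all equivalent pairs are contracted. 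Now apply triple uniformity to $(B_a,B_b,B_c)$ with $a<b<c$. Because $\{x_{B_1},x_{B_2}\}$ is contracted first in the triple $(B_1,B_2,B_3)$, every triple must first contract its ``first two'' copies and only then attach the third. So in each triple the third copy joins the bag containing the first two. Chaining this over triples $(B_1,B_2,B_k)$ and $(B_1,B_{k-1},B_k)$ gives that $x_{B_k}$ joins the growing bag $\{x_{B_1},\dots,x_{B_{k-1}}\}$ only after $x_{B_{k-1}}$ has joined it. This is the ``consecutive'' order $C_{1,2},C_{2,3},\dots$. As in Lemma~\ref{lem:first-occurence}, a contraction that merges $x_{B_k}$ with the bag is counted as a contraction of the pair $\{x_{B_{k-1}},x_{B_k}\}$.

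\emph{Step 2: equivalence of the $C_{i,i+1}$.} Pairwise uniformity (Lemma~\ref{lem:nicelargebis}) shows that the sequence of horizontal contractions involving $(B_i,B_{i+1})$, read in the canvas $L'\cup L^\circ$, is the same for every $i$. Hence $\alpha$ maps $C_{i,i+1}$ onto $C_{j,j+1}$.

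\emph{Step 3: no interleaving.} Finally, I would argue as in the second half of Lemma~\ref{lem:block_v_monotone}. Suppose some contraction of $C_{i+1,i+2}$ occurred before a contraction of $C_{i,i+1}$. Then the triples $(B_1,B_2,B_3)$ and $(B_i,B_{i+1},B_{i+2})$ would exhibit different relative orders, contradicting Lemma~\ref{lem:nicelarge}. The same case analysis as the bullet list there (which groups a new contraction $c^*$ involves) shows that no fresh equivalence class of horizontal contractions can start before the current wave is completed through $B_\omega$. Once the wave is complete, Observation~\ref{lem:block_general} implies that the block ends. Hence all horizontal contractions of the block form exactly one wave in the stated order.

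I expect the main obstacle to be Step 1. A single contraction can merge bags already containing several copies, so ``involves $B_i$ and $B_{i+1}$'' must be interpreted via the newest copy joining the bag. One also has to exclude a direct merge of two separately grown bags $\{x_{B_1},x_{B_2}\}$ and $\{x_{B_3},x_{B_4}\}$. I would handle this with the quadruple-free argument from the case $1=i<j=\omega$ in Lemma~\ref{lem:first-occurence}: such a merge forces, in the triple $(B_2,B_3,B_4)$, an order incompatible with $(B_1,B_2,B_3)$.
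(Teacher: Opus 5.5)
Your outline matches the paper's proof. It uses pairwise uniformity (Lemma~\ref{lem:nicelargebis}) for the $\alpha$-equivalence of the $C_{i,i+1}$, and triple uniformity for the chain in which $x_{B_k}$ joins the bag of $x_{B_1},\dots,x_{B_{k-1}}$ (the paper's local monotonicity). It then uses Lemma~\ref{lem:first-occurence} to fix $\sigma$, and the bullet-list argument with Observation~\ref{lem:block_general} to close the block. Steps~1 and~2 are fine. The ``separately grown bags'' worry is already excluded by your Step~1: merging $x_{B_3}$ with $x_{B_4}$ before $x_{B_3}$ reaches $x_{B_1}$ would be a last-two-first pattern in $(B_1,B_3,B_4)$.

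The step that fails is the first argument of Step~3. Uniformity between $(B_1,B_2,B_3)$ and $(B_i,B_{i+1},B_{i+2})$ cannot detect interleaving of two classes $x,y$. Write $X_k,Y_k$ for the times at which $x_{B_k},y_{B_k}$ join their growing bags, and take $X_{k+1}<Y_k<X_{k+2}$ whenever defined, i.e.\ $X_2<X_3<Y_2<X_4<Y_3<\cdots$. This order violates the claimed one, since $X_3\in C_{2,3}$ precedes $Y_2\in C_{1,2}$. Yet every pair of groups sees the $x$-merge before the $y$-merge, and every triple $(B_a,B_b,B_{b+1})$ sees $X_b<X_{b+1}<Y_b<Y_{b+1}$. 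Only triples with $c\ge b+2$, such as $(B_1,B_2,B_4)$, see $X_b<Y_b<X_c<Y_c$. Your bullet-list step does not catch this either, because it only constrains what happens after the wave has reached $B_\omega$. Moreover, read literally, ``no fresh class can start before the current wave is completed through $B_\omega$'' would also exclude genuine blocks such as $X_2<Y_2<X_3<Y_3<\cdots$.

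The missing ingredient is to compare triples that share their first two groups. Uniformity of $(B_1,B_2,B_3)$ and $(B_1,B_2,B_k)$ shows that the position of $Y_2$ relative to $X_k$ is the same for all $k\ge 3$. Name the classes so that $X_i<Y_i$, which pairwise uniformity allows. If $Y_2<X_3$, then every triple has $Y_b<X_c$, giving $X_2<Y_2<X_3<Y_3<\cdots$, which is exactly the claimed interleaving. If $Y_2>X_3$, then $Y_2>X_\omega$, so the $y$-class starts only after the $x$-wave has finished; only then do your bullet-list argument and block minimality move it to a later block. Two classes running in opposite directions are handled the same way: uniformity of $(B_1,B_2)$ versus $(B_{\omega-1},B_\omega)$ forces one wave to end before the other starts. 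The paper's proof is no more explicit here; it simply asserts that between $c_{(2)(3)}$ and $c_{(3)(4)}$ only $(B_2,B_3)$-pairs are contracted. So this argument has to be supplied in either case.
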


\begin{proof}
The proof follows by adapting the argument from the proof of \cref{lem:block_v_monotone}.\\
First, because of \cref{lem:block_general}, we know that the set of pairs contracted by horizontal contractions for each pair of $B_i, B_j$ is the same; we only consider consecutive choices of $i,j$ in the lemma statement because of the behavior that emerges in the proof. By choice of the family of \blobs, we also know that, among the contractions involving a given pair of \blobs, the order is always the same, giving us that the sequences $C_{i,j}$ of horizontal contractions involving $B_i$ and $B_j$ are indeed $\alpha$-equivalent. It remains to prove that all of the $C_{i,j}$ are happening in the right order, namely either with increasing or decreasing index, and that we care only about consecutive \blobs. Let us suppose that there exists an $i$ such that the contractions of the pairs $\{x_{B_i}, x_{B_{i+1}}\}$, $\{x_{B_{i+1}}, x_{B_{i+2}}\}$ and $\{x_{B_{i+2}}, x_{B_{i+3}}\}$ happen in a non-monotone order. This is a direct contradiction to $(B_i, B_{i+1}, B_{i+2})$ and $(B_{i+1}, B_{i+2}, B_{i+3})$ following the same pattern since the pairs of $x$ would then for one of them be contracted on the left side first, and for the other on the right side first. This local monotonicity implies here that the way horizontal contraction work is here (with the example of increasing index) $\{x_{B_1}, x_{B_2}\}$, then $\{x_{B_1}=x_{B_2}, x_{B_3}\}$, and $\{x_{B_1}=x_{B_2}=x_{B_3}, x_{B_4}\}$ and so on. 
Due to this, it is irrelevant when the pair $\{x_{B_i}, x_{B_j}\}$ is contracted for $i+1< j$ as this simply happens simultaneously with the same contraction as the pair $\{x_{B_{j-1}}, x_{B_j}\}$. 

Let us now suppose the first horizontal contraction $c$ that happens in the block involves $B_1$ and $B_2$ (the other case is symmetric, see \cref{lem:first-occurence}). Let, for $i\in [\omega]$ $c_(i)(i+1)$ be the contraction of the equivalent pair between the \blobs $B_i$ and $B_{i+1}$. Then, let us consider the pairs contracted by horizontal contractions before $c_{(2)(3)}$. All of these must be pairs of a vertex in $B_1$ and a vertex in $B_2$, since the first horizontal contraction between $(B_i, B_j)$ needs to be $c_{(i)(j)}$ to follow the same pattern as $(B_1, B_2)$, and none of these have happened yet. Then, between $c_{(2)(3)}$ and $c_{(3)(4)}$ the pairs contracted by horizontal contractions are precisely the same as between $B_2$ and $B_3$, and so on, until the contraction $c_{(\omega-1)(\omega)}$ which is followed by the contractions of the equivalent pairs involving $B_{\omega-1}$ and $B_\omega$. 

The remaining point to prove is that all contractions of the equivalent pairs involving $B_{\omega-1}$ and $B_\omega$ happen before any ``new'' horizontal contraction. Namely, we will prove that the block ends before the next horizontal contraction takes place. Indeed, this happens if there is a trigraph where all pairs of \blobs are synchronized, and if this is not the case, there exists a contracted pair (not horizontal) such that its equivalent pairs are only partially contracted at the end of this first wave of horizontal contractions. By pattern uniformity, all equivalent pairs have already been contracted except for those involving $B_\omega$. Again by pattern uniformity, no new pair (\ie, with no equivalent pair previously contracted) can be contracted by a contraction $c^*$ before the pair involving $B_\omega$ is, since it would then contradict the pattern uniformity between: \begin{itemize}
\item $(B_1, B_2, B_3)$ and $(B_i, B_j, B_\omega)$ if $c^*$ involves $B_i$ and $B_j$ and $i<j<\omega$, or
\item $(B_1, B_2)$ and $(B_i, B_\omega)$ if $c^*$ involves $B_i$ and $i<\omega$, or
\item $(B_1, B_2)$ and $(B_i, B_\omega)$ if $c^*$ involves $B_i$ and $B_\omega$, or
\item $(B_1)$ and $(B_\omega)$ if $c^*$ involves $B_\omega$.
\end{itemize}
With the above observations, we conclude that before any new horizontal contraction can happen, we first need to complete the contraction of all classes of equivalent pairs which have already been partly contracted. Moreover, once we complete these, \cref{lem:block_general} guarantees that the current block comes to an end, thus concluding the proof.
\end{proof}

\begin{lemma}
\label{lem:structblock}
A block containing no unique contraction is of the form: \\$C_{\sigma(1)},C_{\sigma(1),\sigma(2)}, C_{\sigma(2)},  C_{\sigma(2),\sigma(3)}, \dots, \dots, C_{\sigma(\omega -1)}, C_{\sigma(\omega-1),\sigma(\omega)}, C_{\sigma(\omega)}$ \\
where for each $i,j\in [\omega]$:
\begin{itemize}
\item each $C_{\sigma(i)}$ contains only vertical and special contractions involving $B_{\sigma(i)}$, 
\item $C_i$ and $C_j$ are equivalent w.r.t. $\alpha$, 
\item $C_{\sigma(i),\sigma(i+1)}$ contains exclusively horizontal contractions between $B_{\sigma(i)}$ and $B_{\sigma(i+1)}$, 
\item $C_{i,i+1}$ and $C_{j,j+1}$ are equivalent w.r.t. $\alpha$, and 
\item $\sigma$ is either the identity or $\omega + 1$ minus the identity (reversing the order defined by $\prec$).
\end{itemize}
\end{lemma}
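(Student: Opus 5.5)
We take a block $\mathcal{B}$ of $\C^*$ with no unique contraction. By \cref{lem:nodiag}, every contraction in it is special, vertical or horizontal. Group the special and vertical ones involving $B_i$ into $C_i$, and the horizontal ones between $B_i$ and $B_{i+1}$ into $C_{i,i+1}$. The proof then has two parts: settle the internal structure of each group using the earlier lemmas, and determine how the groups interleave.

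\emph{Internal structure.} By \cref{lem:block_v_monotone}, the vertical contractions appear as $\alpha$-equivalent groups ordered monotonically along some $\sigma_v\in\{id,\omega+1-id\}$. The special contractions behave the same way with some $\sigma_s$, and \cref{lem:block_h_monotone} gives the horizontal ones in consecutive pairs with some $\sigma_h$. Equivalence of the $C_i$ and of the $C_{i,i+1}$ then follows from \cref{lem:nicelargebis}, because $\textsf{info}_{\C'}(B_i)$ and $\textsf{info}_{\C'}(B_i,B_{i+1})$ are independent of $i$.

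\emph{Merging the vertical and special groups.} I will show that $\sigma_v=\sigma_s$ and that, for each $i$, the vertical and special contractions of $B_i$ form one contiguous run. Suppose a special contraction of $B_j$ occurs between two vertical contractions of $B_i$ with $i\neq j$. Then the relative order inside the pair $(B_i,B_j)$ differs from the order inside $(B_j,B_i)$ or inside another pair $(B_k,B_\ell)$. This contradicts the uniformity of pairs in \cref{lem:nicelargebis}, exactly as in the non-monotonicity argument of \cref{lem:block_v_monotone}. The same triple argument shows that all three orientations coincide: $\sigma_v=\sigma_s=\sigma_h=:\sigma$. If, say, the vertical contractions ran increasingly and the horizontal ones decreasingly, the triples $(B_1,B_2,B_3)$ and $(B_{\omega-2},B_{\omega-1},B_\omega)$ would place $C$-contractions on opposite sides of the horizontal ones, which is forbidden by \cref{lem:nicelarge}.

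\emph{Interleaving, the main obstacle.} It remains to show that $C_{\sigma(i),\sigma(i+1)}$ sits strictly after $C_{\sigma(i)}$ and strictly before $C_{\sigma(i+1)}$. By uniformity of the pair $(B_{\sigma(i)},B_{\sigma(i+1)})$, the relative order of $C_{\sigma(i)}$, $C_{\sigma(i+1)}$ and $C_{\sigma(i),\sigma(i+1)}$ is the same for every $i$. Once two vertices $x_{B_i}$ and $x_{B_{i+1}}$ share a bag, any later contraction inside $B_i$ is simultaneously a contraction inside $B_{i+1}$. So $C_{\sigma(i+1)}$ cannot start before the horizontal merge without breaking the separation of the $C_j$ established above; this forces $C_{\sigma(i+1)}$ after $C_{\sigma(i),\sigma(i+1)}$. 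Symmetrically, if $C_{\sigma(i)}$ came after the horizontal merge, it would coincide with contractions of $B_{\sigma(i+1)}$, again contradicting the separation. The same obstruction also excludes a horizontal contraction interleaving inside $C_{\sigma(i)}$. This step is where I expect the real difficulty, because bags shared across groups blur the labels "vertical" and "horizontal". I would handle it by reasoning on pairs contracted, as in \cref{lem:first-occurence}, rather than on contractions, and by invoking minimality of blocks (\cref{lem:block_general}) to rule out a second wave.
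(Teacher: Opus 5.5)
Your first two parts follow the paper's proof, which also merges the vertical and special versions of Lemma~\ref{lem:block_v_monotone} into the groups $C_i$ and then invokes Lemma~\ref{lem:block_h_monotone}. The interleaving step, however, is the real content of the lemma, and as written it does not go through.

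You build it on the observation that once $x_{B_i}$ and $x_{B_{i+1}}$ share a bag, any later contraction inside one group is also one inside the other. That observation is symmetric in $i$ and $i+1$, so it cannot yield the asymmetric order $C_{\sigma(i)},C_{\sigma(i),\sigma(i+1)},C_{\sigma(i+1)}$. Applied to $C_{\sigma(i+1)}$, it would forbid $C_{\sigma(i+1)}$ from following the horizontal merge, which is precisely what the lemma asserts. It is also inaccurate: after $x_{B_i}$ and $x_{B_{i+1}}$ are merged, contracting $\{x_{B_{i+1}},y_{B_{i+1}}\}$ contracts the cross pair $\{x_{B_i},y_{B_{i+1}}\}$, not a pair inside $B_i$.

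Your claim that $C_{\sigma(i+1)}$ ``cannot start before the horizontal merge without breaking the separation'' is likewise unsupported. The order $C_1,C_2,\dots,C_\omega$ followed by all horizontal contractions violates no separation. It is also fully consistent with pair and triple uniformity, since every pair and triple of large groups then sees the same pattern.

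For the same reason your orientation argument for $\sigma_v=\sigma_h$ is incomplete. If all vertical and special contractions precede all horizontal ones, the triples $(B_1,B_2,B_3)$ and $(B_{\omega-2},B_{\omega-1},B_\omega)$ see identical patterns even with opposite orientations. So the contradiction you describe presupposes the interleaving you have not yet proved.

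The missing ingredient is minimality of blocks. This is exactly what the paper uses, and you only mention it as a fallback. Take $\sigma$ to be the identity.
\begin{itemize}
\item Suppose a horizontal contraction of $\{x_{B_1},x_{B_2}\}$ precedes some contraction of $C_1$. Uniformity between the pairs $(B_1,B_2)$ and $(B_1,B_\omega)$ then forces $\{x_{B_1},x_{B_\omega}\}$ to precede that same contraction.
\item By the chain structure from Lemma~\ref{lem:block_h_monotone}, $\{x_{B_1},x_{B_\omega}\}$ is merged only at the end of the wave $x_{B_1}=x_{B_2}=\dots=x_{B_\omega}$. Hence the entire wave for $x$ precedes that contraction.
\item Gathering all such horizontal contractions yields a separate block at the start of the current one, contradicting minimality.
\item Symmetrically, a horizontal contraction of $\{x_{B_1},x_{B_2}\}$ after a contraction of $C_2$ splits off a block of horizontal contractions at the end. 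This case covers the ordering with all $C_j$ first.
\end{itemize}
Without an argument of this kind, your proposal does not establish the claimed interleaving.
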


\begin{proof}
First, combining \cref{lem:block_v_monotone} in its vertical and special versions, it is not hard to see that to fit within one block, the order of affected \blobs for vertical and special contractions has to be the same. We thus let these contractions mix together (as long as they are about the same \blob) and this defines our $C_i$ for each $i$. Note that, similarly as in the preceding proofs of this section, all $C_i$ are equivalent \emph{w.r.t.}~$\alpha$. Now, when we additionally consider \cref{lem:block_h_monotone}, the forced order is also clear, and how each $C_{i,i+1}$ fills the gap between $C_i$ and $C_{i+1}$ can also easily be observed: if a horizontal contraction of some pair $\{x_1, x_2\}$ happens before a contraction of $C_1$, we would need for pattern uniformity (between $(B_1, B_2)$ and $(B_1, B_\omega)$) that $(x_{B_1}, x_{B_\omega}\}$ also happens before the same contraction of $C_1$, and then this should happen before the entirety of $C_1$ since it happens before the entirety of $C_2$. If this is the case, gathering all such horizontal contractions, we would obtain that they form a separate block at the beginning of the currently considered block, thus contradicting the minimality of the considered block. Analogously, one also reaches a contradiction if there were to exist a horizontal contraction of the pair $\{x_{B_1}, x_{B_2}\}$ after a contraction in $C_2$, this time due to a block of horizontal contractions forming at the end of the currently considered block. 
\end{proof}

As an intermediate step towards proving Lemma~\ref{lem:insertblob}---which allows us to lift ``well-behaved'' contraction sequences towards $G$ (or a supergraph thereof), we prove the following lemma.
\begin{lemma}\label{lem:insertblock}
Given $k\in \mathbb{N}$, and $\seg$ a block in $\C^*$ and the trigraphs $G^+_s$ (resp. $G^+_e$) corresponding to $\seg_s$ (resp. $\seg_e$) the first (resp. the last) trigraph of $\seg$, in which we add $k$ \blobs in the exact same state as the \blobs already present. We can construct in linear time a partial contraction sequence $\C^+$ from $G^+_s$ to $G^+_e$.  \end{lemma}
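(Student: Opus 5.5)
We construct $\C^+$ by splitting into cases according to the type of the block $\seg$, using the structural description of blocks in \cref{lem:block_unique} and \cref{lem:structblock}. In every case we view the $k$ added large groups as extra groups $B'_1,\dots,B'_k$. Their state in $G^+_s$ copies the state of the groups already present. We insert them into the ordering $\prec$ of the uniform family. Concretely, if $\sigma$ in \cref{lem:structblock} is the identity, we place them after $B_\omega$; otherwise we place them before $B_1$, so that they are always processed last. Their contractions are obtained by transporting the contractions of an existing group along $\alpha$.

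\textbf{Unique block.} By \cref{lem:block_unique}, $\seg$ consists of one contraction of two bags meeting $S\cup S'$. In $G^+_s$ every vertex of an added group lies in a bag together with exactly those elements of $S\cup S'$ that its $\alpha$-counterpart in $B_1$ lies with. We therefore perform the same single contraction on the bags extended by the added vertices. The result equals $G^+_e$, because synchronization at the end of $\seg$ forces the added groups to be in the same state as the others there as well.

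\textbf{Non-unique block.} Here $\seg$ has the shape
\[
C_{\sigma(1)},C_{\sigma(1),\sigma(2)},\dots,C_{\sigma(\omega-1),\sigma(\omega)},C_{\sigma(\omega)}
\]
given by \cref{lem:structblock}, and we simply continue this pattern. After $C_{\sigma(\omega)}$ we append the following, for $j=1,\dots,k$ in turn:
\begin{itemize}
\item the $\alpha$-translate of $C_{\sigma(\omega-1),\sigma(\omega)}$, now applied between the previous last group and $B'_j$;
\item the $\alpha$-translate of $C_{\sigma(\omega)}$, applied to $B'_j$.
\end{itemize}
For the horizontal part this relies on the chained behaviour shown in the proof of \cref{lem:block_h_monotone}. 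There, a horizontal contraction merges $x_{B_{i+1}}$ into the bag already containing $x_{B_1},\dots,x_{B_i}$. Extending the chain by one more group is therefore a legitimate contraction of two existing bags. Special contractions are handled the same way: the translate of a special contraction merges a bag of $B'_j$ into the bag containing the relevant element of $S\cup S'$, and that bag exists by the same reasoning. Each group contributes a bounded number of contractions, so $\C^+$ is built in linear time.

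\textbf{Correctness.} It remains to check that the final trigraph equals $G^+_e$. The partition of $V(G)$ into bags is determined pairwise: we need each added group to be synchronized with the other groups at the end, and each pair involving an added group to be synchronized with every pair of original groups. This follows from the definition of a block (\cref{def:blocks}), from \cref{lem:block_general}, and from the fact that we contracted exactly the pairs equivalent (in the sense of \cref{def:equiv_pairs}) to those contracted in $\seg$. The red and black edges of a trigraph are determined by its bag partition, so agreement of the partitions gives $G^+_e$.

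\textbf{Main obstacle.} The delicate step is to make sure that appending the extra groups at the end of the $\sigma$-order reproduces the correct bag structure. One issue is horizontal chains that merge all groups into a single bag: there the new vertex must join the whole chain, not just one neighbour. The other is special contractions that interleave with horizontal ones inside $C_{\sigma(i)}$. I would handle both by a careful induction over the $\omega+k$ positions, using the pairwise uniformity from \cref{lem:nicelargebis}.
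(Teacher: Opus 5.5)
Your proposal is correct and takes essentially the same route as the paper. You split via \cref{lem:block_unique} and \cref{lem:structblock} into two cases: a single unique contraction, applied unchanged because the added groups already share bags with $S\cup S'$ exactly as their counterparts do, and the periodic pattern $C_{\sigma(1)},C_{\sigma(1),\sigma(2)},\dots,C_{\sigma(\omega)}$, which is simply prolonged over $\omega+k$ groups.

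The differences are presentational. The paper fixes one global extended indexing with $\sigma'$ equal to the identity or to $k+\omega+1-x$, whereas you position the new groups per block so that they are processed last; this is harmless because the state at block boundaries is symmetric under permuting groups. You also spell out the bag-partition check that the paper calls ``fairly evident''.
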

\begin{proof}
Using \cref{lem:structblock}, we know that the contractions happening in $\seg$ are either of the form
$C_{\sigma(1)},C_{\sigma(1),\sigma(2)}, C_{\sigma(2)},  C_{\sigma(2),\sigma(3)}, \dots, \dots, C_{\sigma(\omega -1)}, C_{\sigma(\omega-1),\sigma(\omega)}, C_{\sigma(\omega)}$, or consist of a single unique contraction as per Lemma~\ref{lem:block_unique}. In the former case, we extend it to $G^+_s$ in this way: $C_{\sigma'(1)},C_{\sigma'(1),\sigma'(2)}, C_{\sigma'(2)},  C_{\sigma'(2),\sigma'(3)}, \dots, \dots, C_{\sigma'(k+\omega -1)}, C_{\sigma'(k+\omega-1),\sigma'(k+\omega)}, C_{\sigma'(k+\omega)}$, where $\sigma'$ is the identity if $\sigma$ is the identity, and otherwise $\sigma'(x)=k+\omega+1-x$. It is then fairly evident that this yields at the end the trigraph $G^+_s$, and the same also holds for the latter case (where no modifications are required). 
\end{proof}

\begin{lemma}
\label{lem:insertblob}
Given $k\in \mathbb{N}$, $\C^*$ a contraction sequence of width $t$ for $G^*$ such that all large groups in $G^*$ are uniform for $\C^*$, and $G_\emph{super}$ the graph obtained from $G^*$ by adding $k$ \blobs. We can construct in linear time a contraction sequence $\C_\emph{super}$ of width $t$ for $G_\emph{super}$.
\end{lemma}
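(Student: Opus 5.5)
We build $\C_\emph{super}$ block by block. By the discussion after \cref{def:blocks}, $\C^*$ decomposes into blocks $\seg^1,\seg^2,\dots,\seg^m$, where consecutive blocks share only their boundary trigraphs. At every boundary trigraph all pairs (and hence all singletons) of large groups are synchronized.

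For each boundary trigraph $G^*_s$ we define its lift $G^+_s$ as follows. Add the $k$ new large groups, and place each new vertex $u_X$ in the bag that $\alpha$ prescribes from the synchronized state. Concretely: if $u_{B_1}$ shares a bag with vertices of $S\cup S'$, then $u_X$ joins that same bag. If $u_{B_1}$ lies in a bag formed by horizontal contractions across all the $B_i$, then $u_X$ joins that horizontal bag too. Otherwise $u_X$ receives its own bag, mirroring the vertical partition of $B_1$. This is well defined because synchronization holds at the boundary.

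We then apply \cref{lem:insertblock} to each block $\seg^j$. It yields a partial contraction sequence from $G^+_{s_j}$ to $G^+_{e_j}=G^+_{s_{j+1}}$, and we concatenate these. The first boundary trigraph is $G^*$ itself, whose lift is $G_\emph{super}$. The last is $K_1$, whose lift is also a single vertex, since the final bag absorbs all groups. So the concatenation is a full contraction sequence of $G_\emph{super}$. Each block is handled in time linear in its length times $k+\omega$, which gives linear time overall.

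The main work is the width bound, namely that no trigraph of the lifted sequence has red degree exceeding $t$. We handle a block of the form described in \cref{lem:structblock} (unique-contraction blocks are trivial, since the new groups sit inside bags exactly like their counterparts). Take any intermediate trigraph in the lifted block. We claim its red degrees are dominated by those of some trigraph of $\seg$.

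\emph{Step 1: locate the current phase.} Say $\sigma$ is the identity. The current position lies within some $C_{i'}$ or $C_{i',i'+1}$ for an index $i'$ in the extended range $[k+\omega]$. Groups with index less than $i'$ are in their "after" state, groups with index greater than $i'$ are in their "before" state, and group $i'$ is partially processed.

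\emph{Step 2: compare with $\C^*$.} Pick the corresponding moment in $\seg$, in $C_i$ or $C_{i,i+1}$ with $1<i<\omega$. Such an index exists because $\omega=3p+3\ge 5$. At that moment $\seg$ contains at least one group before the active one and at least one after it.

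\emph{Step 3: check each vertex type.} For a vertex inside a group, or a horizontal bag, its red neighbours lie in $S\cup S'$, in its own group or neighbouring groups, or in horizontal bags. Every such neighbour has an analogue in the $\C^*$ trigraph, because the relevant triples are covered by the uniformity of \cref{lem:nicelarge} and \cref{lem:nicelargebis}.

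\emph{Step 4: bound $S\cup S'$ vertices.} Groups that are fully "before" or fully "after" contribute red edges only through their vertical bags. The obstacle is here: a vertex of $S\cup S'$ could acquire red edges to all $k+\omega$ groups simultaneously. I would try to rule this out using the bound $\omega=3p+3$ together with the fact that $\C^*$ has width $t\le p$. If a vertex $s$ had a red edge to a vertical bag of every group in some state, then in $\C^*$ it would have at least $\omega-2>t$ red neighbours, contradicting the width of $\C^*$. Hence $s$ has red edges to the bags of at most the two or three "boundary" groups, namely those partially processed or adjacent to the active horizontal contraction. These are mirrored exactly in $\seg$, so the width stays at most $t$.
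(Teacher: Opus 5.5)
Your construction of $\C_\mathrm{super}$ matches the paper's: you lift the synchronized boundary trigraphs and apply Lemma~\ref{lem:insertblock} to each block. The width bound, however, is not established. You argue that red degrees are dominated by some trigraph of $\seg$, but your case analysis misses the main obstacle everywhere except in Step~4.

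In Step~3 you claim that the red neighbours of a horizontal bag lie in $S\cup S'$, in its own or neighbouring groups, or in horizontal bags, each with an analogue at your chosen moment of $\seg$. This is false. During the $i'$-th horizontal phase, the bag $\{x_{B_1},\dots,x_{B_{i'}}\}$ spans $i'$ groups. If $xy\in E(G[R])$ and the $y_{B_j}$ are still in vertical bags, it has a red edge to each of the $i'$ bags containing $y_{B_1},\dots,y_{B_{i'}}$. Here $i'$ ranges up to $k+\omega$, whereas at a moment of $\seg$ in phase $i<\omega$ the analogous count is only $i$. So there is no injection into the red neighbourhood at that moment. The configuration must instead be excluded using the width of $\C^*$ at a \emph{different} trigraph, e.g.\ the end of the block, where the bag spans all $\omega>t$ groups.

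Step~4 has the same defect. You do not say at which trigraph of $\seg$ the $\omega-2$ red neighbours of $s$ appear, nor why those red edges are present there. At that trigraph the bag of $s$ has absorbed special-contracted vertices of a different number of groups. Moreover, a red edge from $s$ to a vertical bag of an after-group can be certified by a special-contracted vertex of a \emph{second} after-group. Your Step~2 moment, with one group on each side of the active one, need not reproduce that.

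Unique-contraction blocks are also not trivial for the width. After such a contraction the merged $S$-bag could be red to one vertical bag in each of the $k+\omega$ groups, and this too must be excluded via $\omega>t$.

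The idea that replaces all of this case analysis is the certificate argument the paper uses:
\begin{itemize}
\item Suppose a vertex $q$ of some trigraph of $\C_\mathrm{super}$ has red degree at least $t+1$, and fix $t+1$ red neighbours $n$.
\item Each red edge $qn$ is certified by three original vertices: $q_x\in\beta(q)$ and $n_0,n_1\in\beta(n)$ with exactly one of $n_0,n_1$ adjacent to $q_x$, or symmetrically. So each red edge lives inside $S\cup S'$ plus at most three large groups.
\item Altogether at most $3t+3\le\omega$ groups are involved. Pad them to exactly $\omega$ groups and restrict $\C_\mathrm{super}$ to $S\cup S'$ and these groups.
\item The restriction keeps all $t+1$ certified red edges, since the restricted bags stay distinct, nonempty and non-homogeneous.
\item By the block shape of Lemma~\ref{lem:structblock}, the restricted sequence coincides with $\C^*$ after the order-preserving renaming of group indices.
\end{itemize}
Hence $\C^*$ would have red degree at least $t+1$, a contradiction. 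This handles every vertex type and every position inside a block at once, and it is exactly where $\omega=3p+3$, together with $t\le p$, is needed.
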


\begin{proof}
We use \cref{lem:insertblock} for each block of $\C^*$ to obtain $\C_\emph{super}$ in linear time. 
Now we want to prove that the red degree in $\C_\emph{super}$ never exceeds $t$. Towards this, assume for a contradiction that there exists a trigraph $G_i$ where a vertex $q$ has red degree at least $t+1$. Let $N$ be a set of red neighbors of $q$ of size exactly $t+1$. For $n\in N$, we know by definition of twin-width that there is either (1) $q_x\in \beta(q)$, $n_0, n_1 \in \beta(n)$ such that precisely one of the latter is non-adjacent to $q_x$ in $G_\emph{super}$, or (2) $q_0, q_1\in \beta(q)$, $n_x \in \beta(n)$ such that precisely one of the former is non-adjacent to $n_x$ in $G_\emph{super}$. Let us select a triple $\{i,j,k\}$ such that $q_x, n_0, n_1$ (resp.\ $q_0, q_1, n_x$) belong to $V(S \cup S' \cup B_i \cup B_j \cup B_k)$; if multiple choices for $\{i,j,k\}$ exist we choose one arbitrarily. Let $\mathcal{I}^\circ$ be the union of all such selected triples over all $n\in N$---in particular, $\mathcal{I}^\circ$ contains at most $3t+3$ indices of large groups. Let $\mathcal{I}$ be obtained by adding to $\mathcal{I}^\circ$ arbitrary indices to obtain a set of size precisely $3t+3$.

The second step of our proof is to observe that in the restriction $\C_\emph{restr}$ of $\C_\emph{super}$ to $V(S \cup S' \cup (B_i)_{i\in \mathcal{I}})$, the corresponding trigraph also contains a vertex with red degree at least $t+1$, since all of the red edges are also present there by our choice of $\mathcal{I}$. Moreover, using the specific shape of each block as established in \cref{lem:structblock}, we see that $\C_\emph{restr}$ and $\C^*$ are the same, up to a simple renaming: each element of $\mathcal{I}$ is replaced by its position within $i$, \ie the smallest $i\in \mathcal{I}$ is replaced by $1$, the second smallest by $2$ and so on.

The final step is to remark that the renaming described in the previous paragraph does not affect red edges in any way, implying that $\C^*$ contains a trigraph with red degree at least $t+1$---thus achieving the desired contradiction.
\end{proof}

We can now prove \cref{lem:extensionVI}, which was the last missing piece required for \cref{thm:maintwo}.
\begin{proof}[Proof of \cref{lem:extensionVI}]
Let $G'$ be a reduced graph of $G$, a partition $S \uplus S' \uplus L_1 \uplus \cdots \uplus L_{f(p,x)}$ of its vertices.   Using \cref{lem:nicelarge}, we know that for any hypothetical solution of width $k$ for $G'$, there exists a uniform set of \blobs of size $3p+3$. 
Thus, we can define $G_\emph{core}$ as the subgraph of $G'$ (and of $G$) containing only $S$, $S'$ and an arbitrary set of $3p+3$ \blobs, and brute-force a contraction sequence $\C_\emph{core}$ of width $k$ for $G_\emph{core}$, such that the group of \blobs in $G_\emph{core}$ is uniform for $\C_\emph{core}$. This brute-force computation takes time at most $\bigoh((g(p)+p\cdot p \cdot 2^{2p^2})!)$.
      Let us set $m$ the maximum size of a $\sim$ equivalence class in $G$. We define $G_\emph{super}$ the supergraph of $G$ such that all equivalence classes of $\sim$ which are not in $S'$ are of size $m$. We use \cref{lem:insertblob} to create a contraction sequence $\C_\emph{super}$ for $G_\emph{super}$, since it can be obtained from $G_\emph{core}$ by adding $m-k^*$ \blobs, and $\C_\emph{core}$ has the structure argued in Lemmas~\ref{lem:block_unique}-\ref{lem:structblock}. 
 This takes at most linear time in $|\C_\emph{super}|$ which is $\bigoh(|G|^2)$, and we can in the same amount of time take a restriction of $\C_\emph{super}$ to $G$, obtaining the sought-after contraction sequence of width $k$ for $G$.
\end{proof}

\paragraph*{Acknowledgments}
The authors want to thank Malory Marin for bringing oriented twin-width to the authors' attention, as well as Édouard Bonnet for his help in understanding the (computational) connection to the classical twin-width.

\bibliographystyle{plainurl}
\bibliography{oriented_arxiv_main}

\end{document}